 \renewcommand\appendix{\par
   \setcounter{section}{0}%
   \setcounter{subsection}{0}%
   \setcounter{figure}{0}%
   \renewcommand\thesection{\Alph{section}}%
   \renewcommand\thefigure{\arabic{figure}}}
\newcommand{\pa}{\partial}
\newcommand{\fs}[2]{{\dfrac{#1}{#2}}}
\numberwithin{equation}{section}
\newtheorem{assumption}{Assumption}[section]
\newtheorem{theorem}{Theorem}[section]
\newtheorem{lemma}[theorem]{Lemma}
\newtheorem{proposition}[theorem]{Proposition}
\newenvironment{proof}[1][Proof]{\textbf{#1.} }{\ \rule{0.5em}{0.5em}}
\begin{document}

\title{\textbf{Funding Liquidity, Debt Tenor Structure,\\ and Creditor's Belief: An
Exogenous\\ Dynamic Debt Run Model} }
\date{}

\author[$a$]{\small\textsc{{Gechun Liang\footnote{Corresponding author. Tel.: +44 020 7848 2633}}}}
\affil[$a$]{\small{\textsc{Department of Mathematics, King's College London, Strand, London, WC2R 2LS, U.K.}}\\

\texttt{gechun.liang@kcl.ac.uk}}\vspace{0.2cm}

\author[$b$]{\small\textsc{{Eva L\"utkebohmert}}}
\affil[$b$]{\small\textsc{{Department of Quantitative Finance, University of Freiburg, Platz der Alten Synagoge, 79098 Freiburg, Germany}}\\

\texttt{eva.luetkebohmert@finance.uni-freiburg.de}}\vspace{0.2cm}

\author[$c$]{\small\textsc{Wei Wei}}
\affil[$c$]{\small\textsc{{Mathematical Institute, University of
Oxford, Oxford,
OX2 6GG, U.K.}}\\

\texttt{wei.wei@maths.ox.ac.uk}}\vspace{0.2cm}

\maketitle

\begin{abstract}
We propose a unified structural credit risk model incorporating both
insolvency and illiquidity risks, in order to investigate how a
firm's default probability depends on the liquidity risk associated
with its financing structure. We assume the firm finances its risky
assets by mainly issuing short- and long-term debt. Short-term debt
can have either a discrete or a more realistic staggered tenor
structure. At rollover dates of short-term debt, creditors face a
dynamic coordination problem. We show that a unique threshold
strategy (i.e., a debt run barrier) exists for short-term creditors
to decide when to withdraw their funding, and this strategy is
closely related to the solution of a non-standard optimal stopping
time problem with control constraints. We decompose the total credit
risk into an insolvency component and an illiquidity component based
on such an endogenous debt run barrier together with an exogenous
insolvency barrier.
\end{abstract}

\noindent {\it Key words}\/: Structural credit risk model, debt run,
liquidity risk, first passage time, optimal stopping time\\

\noindent {\it JEL}\/ Codes: G01, G20, G32, G33 \\

\newpage

\section{Introduction}\label{SecIntroduction}

The recent financial crisis has dramatically shown that financial
markets are not ideal. In particular, refinancing in periods of
financial distress can be extremely costly or even impossible due to
liquidity drying up in the market. It has been shown, for example by
\cite{AdrianShin2008,AdrianShin2010} and \cite{Brunnermeier2009},
that the heavy use of short-term debt was a key contributing factor
to the credit crunch of 2007/2008. Firms, however, often prefer
short-term debt financing as it is cheaper than long-term debt.
Moreover, as argued by \cite{HeXiongRun}, short-term debt can also
be regarded as a disciplinary device for firms and can be used to
mitigate adverse selection problems and to reduce the cost of
auditing firms. Hence, several reasons support the use of short-term
financing. However, most of the existing credit risk models do not
take into account the rollover risk (or liquidity risk) inherent in
short-term debt financing. It is the aim of this paper to provide a
unified framework that incorporates rollover risk as well as
insolvency risk. Within an extended structural credit risk model,
our approach allows to investigate how a firm's default probability
depends on
the rollover risk inherent in its particular financing structure.\\

Structural credit risk models were initiated by \cite{Merton} and
\cite{BlackCox}. In these models default happens if the firm
fundamental falls below some exogenous default barrier which often
relates to the firm's debt level. A huge part of the literature on
structural credit risk modeling focuses on how to model such an
exogenous default barrier, as in \cite{LongstaffSchwartz} and
\cite{Briys}, among others\footnote{Optimal capital structural
models are regarded as the second generation of structural credit
risk models, which were initiated by \cite{Leland,Leland2}, and
\cite{LelandToft}. Therein the firm defaults when its equity value
drops to zero, and the default barrier is determined endogenously by
its equity holders. \cite{Rogers} and \cite{Chen} extend this model
by introducing jump risk, and recently, \cite{HeXiongRolloverRisk}
extended this framework by including an illiquid debt market.}. In
the following we will call this exogenous default barrier the
\emph{insolvency barrier}. Given this exogenous insolvency barrier,
in this paper we derive an endogenous threshold value {below} which
short-term creditors decide to withdraw their funding, i.e., to run
on the firm, and we will call this barrier the \emph{debt run
barrier}. The latter depends on {not only} the firm's
creditworthiness but also the creditors' beliefs about the
likelihood of a debt run {in the remaining rollover periods}.
Determining this debt run barrier is the main problem of this paper.
There is a third barrier, called the \emph{illiquidity barrier},
which represents the critical value when the firm is unable to pay
off its creditors in case of a debt run, and which is determined
endogenously from the debt run barrier. In addition, we show that
the debt run barrier always dominates the illiquidity barrier, which
in turn dominates the insolvency barrier. This relationship among
all three barriers not only helps to decompose the total credit risk
into an insolvency component and an illiquidity component, but also
illustrates the phenomenon that most firms have defaulted due to
illiquidity rather than due to insolvency in the recent
credit crunch.\\

Our first contribution is the provision of a rigorous formulation
for a class of structural credit risk models that study debt runs.
The {classic} debt/bank run model of \cite{DiamondDybvig} features a
static setting where all the depositors simultaneously decide
whether or not to withdraw their demand deposits from a solvent but
illiquid bank. \cite{EricssonRenault} and \cite{Goldstein} provide
further extensions that are, however, still in the static setting.
In this paper, we consider debt runs from a dynamic viewpoint.  The
debt run model introduced by \cite{MorrisShin} focuses on a
two-period setting where short-term creditors face a binary decision
in terms of global games at an interim time point.
\cite{LiangLutkebohmertXiao} provide a structural credit risk model
that also takes liquidity risk into account, as short-term creditors
can decide at a finite number of decision dates whether to roll over
or to withdraw their funding. They derive a debt/bank run barrier
based on the comparison of binary strategies for a representative
short-term creditor. Technically, the generalization from the
two-period setting of \cite{MorrisShin} towards the multi-period
setting of \cite{LiangLutkebohmertXiao} relies on the dynamic
programming principle (DPP). In \cite{LiangLutkebohmertXiao} the DPP
was only applied informally by comparing the expected returns for
the two investment options of creditors at the rollover dates. In
this paper, by introducing an appropriate value function for a
representative short-term creditor, which describes the discounted
expected return over the remaining rollover periods and which is
calculated based on the DPP, we derive the unique threshold
strategy, i.e., the debt run barrier. The representative creditor
decides to withdraw her funding if the firm's fundamental falls
below this barrier at any decision date. In contrast to
\cite{LiangLutkebohmertXiao}, the corresponding dynamic programming
equations presented in this paper are more generic and transparent,
which in particular allows us to introduce flexible
debt maturity structures into our model.\\

The second contribution of this paper is the imbedding of flexible
debt tenor structures into such an extended structural credit risk
model. In \cite{LiangLutkebohmertXiao} a discrete tenor structure is
assumed such that the rollover dates of short-term debt are given by
a sequence of deterministic numbers. This implies that all
short-term debt expires and can be rolled over at the same time. The
problem is therefore equivalent to a one-creditor problem. In
reality, however, firms typically stagger the maturities for
short-term debt to finance their long-term risky assets. Rollover
risk is partially reduced in this way as at each maturity date only
a fraction of total debt is due. Nevertheless, due to the maturity
mismatch between the assets and the liabilities sides, the firm is
still exposed to significant liquidity risk. Our model covers both
the discrete tenor structure and the staggered tenor structure. The
latter was first introduced by \cite{Leland,Leland2}, and
\cite{LelandToft}, with \cite{Rogers} and \cite{Chen} providing
further technical details. The main idea is to assume a random
duration of debt in order to reflect the maturity mismatch.
Recently, \cite{HeXiongRun} applied the staggered maturity structure
to the debt run literature where debt maturities are modelled as
arrival times of a Poisson process, whose intensity can then be
interpreted as the inverse of average debt duration. In this paper,
we utilize a more general and flexible Cox process to model the
staggered tenor structure. The economic intuition of using the Cox
maturity structure is that the average duration of the short-term
debt which a firm issues should fluctuate and depend on some
economic factors such as the firm fundamental or even the underlying
systemic risk from the market.\\

The third contribution of this paper is that we use a reduced-form
approach to model the impacts of other creditors' rollover decisions
on the representative short-term creditor's rollover decision. In
general, such impacts and the resulting equilibria are complicated
as we need to consider not only the impact of other creditors'
current rollover decisions, but also the impact of their future
decisions. In \cite{MorrisShin}, they concentrate on the former
impact, and resort to a reduced-form approach by modeling the
representative short-term creditor's belief on the proportion of
creditors not rolling over their funding at each rollover date as a
uniformly distributed random variable exogenously. In
\cite{HeXiongRun}, they concentrate on the latter impact, and assume
at each small time interval, there is only a small proportion of
creditors whose contracts expire and need to be rolled over. In this
paper, we try to include both impacts. The former impact is modeled
in a similar way to \cite{MorrisShin} and
\cite{LiangLutkebohmertXiao} by assuming the representative
creditor's belief exogenously. However, we take a more general model
in the sense that the representative creditor's belief is a general
random variable, and we also compare the results with different
assumptions on the distribution of such a random variable. The
latter impact is included by considering the dynamic programming
equation of the representative creditor, which is in the same spirit
as \cite{HeXiongRun}.\\

Finally, our fourth contribution is to answer the question whether
the representative short-term creditor's rollover decision is indeed
optimal. This question also arises in the existing dynamic debt run
models such as those in \cite{MorrisShin}, \cite{HeXiongRun}, and
\cite{LiangLutkebohmertXiao}, but has so far not been answered. In
both the discrete and the staggered tenor structures, we show that
the decision problem of the representative short-term creditor is
equivalent to a non-standard optimal stopping time problem with
control constraints. At each rollover date the representative
creditor faces the risk that the firm may fail due to a debt run
based on her belief. If the firm survives, the creditor can then
decide whether to withdraw her funding (stop) or to roll over her
contract (continue). If the firm fails due to other creditors' runs,
the representative creditor is then forced to stop and faces the
recovery risk from bankruptcy. Therefore, the decision time for the
representative creditor must exclude the default time due to debt
runs. For the case of the staggered tenor structure, since the
maturity dates are the arrival times of a Cox process, the
representative creditor is only allowed to stop (i.e., to withdraw
her funding) at a sequence of Cox arrival times rather than at any
stopping time.  In the literature, such kind of optimal stopping at
Poisson-type arrival times has been used to solve the standard
optimal stopping time problem by
\cite{Krylov} as the so-called randomized stopping time technique.\\


The paper is organized as follows. Section \ref{SecBenchmarkModel}
describes the assumptions on the firm's capital structure and
explains the rollover decision of a representative short-term
creditor in the
benchmark model. 
We also present the rigorous formulation of the rollover decision
problem in terms of dynamic programming equations. In section
\ref{sec barriers} we use the creditor's value function derived in
the dynamic programming equations to determine the short-term
creditor's debt run barrier as well as the firm's illiquidity
barrier in case of both the discrete and the staggered debt
structures. We reformulate the creditor's decision problem in terms
of the associated optimal stochastic control problem in section
\ref{sec optimal control}. Section \ref{SecConclusion} discusses the
related literature and concludes.

\section{Benchmark Debt Run Model}\label{SecBenchmarkModel}

In this section, we propose a debt run model that incorporates
rollover risk into the structural credit risk framework.

\subsection{Capital Structure of a Firm}
Consider a market defined over a complete probability space
$(\Omega,\mathcal{F},\mathbf{P})$, which supports a standard
Brownian motion $(W_t)_{t\geq 0}$ with its natural filtration
$\{\mathcal{F}_t\}$ after augmentation. The market interest rate $r$
is assumed to be constant. In this market, consider a firm whose
fundamental value of assets follows
\begin{equation*}
\frac{dV_t}{V_t}=r_Vdt+\sigma dW_t,
\end{equation*}
with constant volatility $\sigma>0$. The constant $r_V$ denotes the
expected return on the firm's risky assets. We assume that the firm
fundamental is publicly observable.

The firm finances its asset holdings in the duration $[0,T]$ by
issuing short-term debt, such as asset-backed commercial papers and
overnight repos, long-term debt such as corporate bonds, and
equities and others. At initiation time $T_0=0$ an amount $L_0$ is
borrowed long-term at rate $r_L$ until fixed maturity $T>0$.
Moreover, an amount $S_0$ is borrowed short-term at rate $r_S$ until
maturity $T_1$. When short-term debt matures it can be successively
rolled over until the next rollover date. This produces a sequence
of maturity dates (or rollover dates) $0=T_0<T_1<T_2
\cdots<T_{\infty}=\infty$ for short-term debt. For the moment, we do
not impose any structural conditions on the short-term debt
maturities $\{T_{n}\}_{n\geq 1}$. They could be either deterministic
or random. 
%

If there is no default, the value of short-term debt follows
\begin{equation*}
dS_{t}=r_{S}S_{t}dt,
\end{equation*}%
and the value of long-term debt follows
\begin{equation*}
dL_{t}=r_{L}L_{t}dt.
\end{equation*}
The ratio of long-term debt over short-term debt $L_t/S_t$ is
denoted by $l_t$ and follows
\begin{equation*}
dl_{t}=(r_{L}-r_S)l_{t}dt.
\end{equation*}
Moreover, we introduce a process $X_t$ as the ratio of the firm's
asset value over the short-term debt value $X_t=V_t/S_t$. Hence,
$X_t$ follows
$$\frac{dX_t}{X_t}=(r_V-r_S)dt+\sigma dW_t.$$

Short-term creditors have the opportunity to withdraw their funding
at the rollover dates. When the firm is under financial distress or
when an outside investment opportunity is more attractive they will
make use of this option. Long-term creditors, however, are locked in
once they lend money to the firm. They are exposed to a higher risk,
and therefore, should be rewarded with a higher interest rate.
Moreover, since creditors are exposed to the firm's default risk, a
risk premium should be paid on top of the market interest rate. We
have the following assumption on different interest
rates\footnote{The assumption of constant interest rates is imposed
to simplify derivations. In reality, different rates not only vary
in time, but also move differently, motivating the so called
multi-curve modeling (see for example \cite{Crepy}).}.

\begin{assumption}\label{Assumption1}
The long-term interest rate $r_L$ is strictly greater than the
short-term interest rate $r_S$, while the latter is strictly greater
than the market interest rate $r$, i.e., we assume $r_L>r_S>r$.
\end{assumption}

\subsection{The Rollover Decision of a Representative Short-Term Creditor}

Short-term creditors choose whether to renew their maturing
contracts, that is, they need to decide whether to roll over or to
withdraw their funding (i.e., to run) at the maturity times. Hence,
they face a dynamic coordination problem.

Consider the decision problem of a \emph{representative short-term
creditor}. The first key factor to determine the representative
short-term creditor's rollover decision is the insolvency risk
stemming from the deterioration of the firm fundamental. To include
this factor, we follow the classic first-passage-time framework (see
for example \cite{BlackCox}) by assuming an exogenously given
\emph{insolvency barrier}
\begin{equation*}
D_{t}^{Ins}=S_t\beta (l_t),
\end{equation*}%
where $\beta:(0,\infty)\rightarrow(0,\infty)$ is a \emph{safety
covenant function} of the ratio $l_t=L_t/S_t$. As long as the asset
value $V_t$ at any time $t$ is greater than or equal to the total
value of debt $S_t+L_t$, the firm can be considered solvent. Hence,
it is natural to assume that
\begin{equation}\label{equ ass beta}
\beta(l_t)\leq (1+l_t)
\end{equation}
such that
$$
D_t^{Ins}=S_t\beta(l_t)\leq S_t (1+l_t)=S_t+L_t.
$$
The bankruptcy time due to insolvency is then given by the following
first-passage-time
$$\tau^{Ins}=\inf\{t\geq 0:V_t\leq D_t^{Ins}\}
=\inf\{t\geq 0:X_t\leq \beta(l_t)\}.$$

{To coordinate with other creditors, the representative creditor
needs to take other creditors' rollover decisions into account, and
makes her own decision based on whether the firm will survive debt
runs or not at each rollover date $T_n$. Assume that the creditor
believes that the proportion of short-term creditors not rolling
over their funding at each rollover date $T_n$ is a random variable
$\xi$ supported on $[0,1]$ with its conditional density
$f(\cdot|X_{T_n})$ given $X_{T_n}$.}

{The firm will survive debt runs if it can raise enough funding to
pay off its creditors who run on the firm and still keep solvent. In
case of a debt run the firm has to issue collateralized debt by
pledging its assets as collateral to raise the liquidity. The actual
value of the collateral does not matter. It is the maximum value of
the collateral that determines whether the firm is still liquid or
not. The maximum collateral value of the assets is expressed in
terms of the fire-sale price $\psi V_{T_n}$ with the fire-sale rate
$\psi\in(0,1)$\footnote{The constant $\psi$ is the fire-sale rate of
the firm fundamental when the firm is in a distressed state, i.e.,
it represents the amount that can be borrowed by pledging one unit
of the risky assets as collateral. For a detailed discussion of how
to endogenously determine the fire-sale rate by the leverage of the
firm, we refer to \cite{LiangLutkebohmertXiao}. }. If $\psi
V_{T_n}\geq \xi S_{T_n}$, the firm is able to pay off its creditors
who run on the firm, so a potential debt run at time $T_n$ would not
lead to a default. Hence, the second key factor determining the
creditor's rollover decision is her belief about the
\emph{probability that the firm survives the debt run at the
rollover date $T_n$}, which equals
\begin{align}\label{survival_prob}
\theta(X_{T_n})
&=\mathbf{P}\left(\psi V_{T_n}\geq \xi S_{T_n}|X_{T_n}\right)\nonumber\\
&=\mathbf{P}\left(\psi X_{T_n}\geq \xi|X_{T_n}\right)\nonumber\\
&=\mathbf{P}\left(0\leq \xi\leq \min\{1,\psi
X_{T_n}\}|X_{T_n}\right)=\int_0^{\min\{1,\psi
X_{T_n}\}}f(x|X_{T_n})dx
\end{align}
conditional on the firm being solvent at $T_n$, i.e. on the event
$\{V_{T_n}\geq D^{Ins}_{T_n}\}$. In both \cite{MorrisShin} and
\cite{LiangLutkebohmertXiao}, the random variable $\xi$ is simply
assumed to be uniformly distributed on $[0,1]$, so that
$f(\cdot|X_{T_n})=1$ and
$$\theta(X_{T_n})=\min\{1,\psi
X_{T_n}\}.$$ This assumption is justified by global games theory as
it has been shown in \cite{MorrisShin2003} that this is a limiting
case of the situation with unobservable firm fundamental when the
variance of the noise term tends to zero. In section
\ref{SubSecMaturity}, we will show that the uniform distribution
assumption is relatively robust by comparing
it to a family of truncated normal distributions with the same mean but different variances.}\\

The third key factor for the representative short-term creditor's
rollover decision is the recovery rate when the firm defaults either
due to debt runs or due to insolvency. If the firm defaults at some
time $t\in[0,T]$, the firm is exposed to certain bankruptcy costs.
Suppose these are proportional to the firm fundamental value, and
for $\alpha\in(0,1)$, $\alpha V_t$ is the firm value after having
paid the bankruptcy costs. Then, the value $\alpha V_t$ will be
divided among all the creditors, so the representative short-term
creditor obtains the proportion of her funding and she gets at most
her debt value back. Thus, we define the recovery rate as
\begin{equation}\label{equ recovery rate}
R_{t}=\min\left\{1,\frac{\alpha V_{t}}{S_{t}+L_{t}}\right\}=
\min\left\{1,\frac{ \alpha X_{t}}{1+l_{t}}\right\}.
\end{equation}

Note that in case of a default due to a debt run, it can happen that
the asset value is larger than $S_t+L_t$. Therefore, we have to cut
off the recovery rate by $1.$ However, if the firm defaults due to
insolvency at the first-passage-time $\tau^{Ins}$, the asset value
by definition equals the insolvency barrier $D^{Ins}_{\tau^{Ins}}$.
In this case the recovery rate equals
$$R_{\tau^{Ins}}=\frac{\alpha D_{\tau^{Ins}}^{Ins}}{S_{\tau^{Ins}}+L_{\tau^{Ins}}}
=\frac{ \alpha \beta(l_{\tau^{Ins}})}{1+l_{\tau^{Ins}}}$$ which is
less than 1 by condition (\ref{equ ass beta}).


%

We assume that rollover decisions are solely determined by the
aforementioned factors, which we summarize as follows.
\begin{assumption}\label{Assumption2}
The following three factors determine the rollover decision of a
representative short-term creditor.
\begin{itemize}
\item[(i)] Insolvency risk is reflected by the first-passage-time $\tau^{Ins}$
when the firm's asset value falls below the insolvency barrier
$D_t^{Ins}$.
\item[(ii)] Rollover risk is reflected by the
representative short-term creditor's belief on the probability
$\theta(X_{T_{n}})$ that the firm survives the debt run at the
rollover date $T_n$.
\item[(iii)] Recovery risk is reflected by the fraction $R_t$ of
funding that the representative short-term creditor obtains in case
of a default at time $t$.
\end{itemize}
\end{assumption}

We further impose the following condition on the safety covenant
function for technical convenience.

\begin{assumption}\label{Assumption3}
The safety covenant function $\beta(l_t)$ in the definition of the
insolvency barrier $D_t^{Ins}$ has the linear form $\beta(l_t)=\beta
l_t$ for some positive constant $\beta\leq (1/l_t+1)$.
\end{assumption}

\subsection{Dynamic Programming Equations}

In this section we derive dynamic programming equations for the
short-term creditor's rollover decision problem. We consider a
\emph{representative short-term creditor} who invests an amount
normalized to 1 monetary unit at time $t\in[0,T]$. Her discounted
expected return over the remaining time period $[t,T]$ is described
by the value function $U(t,x)$ given the current ratio $X_t=x$ of
asset value over short-term debt value, and she discounts at the
market rate $r$. To investigate the creditor's value function we go
backwards in time starting with her last rollover date prior to
terminal time $T$. Suppose that her $N^{\rm th}$ rollover date is
the closest one prior to the maturity $T$ of long-term debt, that
is, $T_{N}<T$ and $T_{N+1}\geq T$. Figure \ref{fig:maturity}
illustrates the maturities of the short- and long-term debt. If the
maturities are random (as in section \ref{SubSecMaturity}), then
what we do in the following is to condition on one realization of
$\{T_n(\omega)\}_{n\geq 1}$ so that $T_{N}(\omega)<T\leq
T_{N+1}(\omega)$. By abuse of notation, we continue to write $T_n$
for $T_n(\omega)$ in such a case.

\begin{figure}[H]
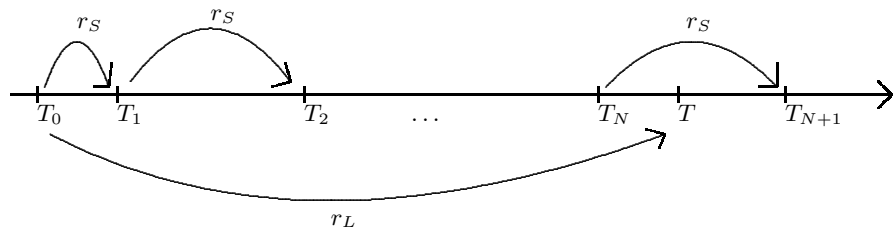
\small\caption{\small\textbf{Maturities of Short- and Long-Term Debt}}\label{fig:maturity}
\begin{center}
\hspace{-13cm}{ $ \drawline(0,50)(330,50) \drawline(10,47)(10,53)
\drawline(40,47)(40,53) \drawline(110,47)(110,53)
\drawline(220,47)(220,53) \drawline(250,47)(250,53)
\drawline(290,47)(290,53) \tagcurve(37,53, 25,70, 13,53)
\tagcurve(13,53, 25,70, 37,53)
\drawline(37,53)(31,53)\drawline(37,53)(38,62) \tagcurve(105,55,
75,75, 45,55) \tagcurve(45,55, 75,75, 105,55)
\drawline(105,55)(98,53)\drawline(105,55)(103,62) \tagcurve(223,53,
255,70, 287,53) \tagcurve(287,53, 255,70, 223,53)
\drawline(287,53)(280,53)\drawline(287,53)(287,62) {\tagcurve(15,35,
120,10, 245,35) \tagcurve(245,35, 120,10, 15,35)
\drawline(245,35)(238,37)\drawline(245,35)(243,30)}
\drawline(325,45)(330,50)\drawline(325,55)(330,50)
\put(10,40){$T_0$}\put(25,75){$r_S$}\put(40,40){$T_1$}\put(75,78){$r_S$}\put(110,40){$T_2$}
\put(150,40){$\ldots$}\put(253,75){$r_S$}\put(120,1){{$r_L$}}\put(220,40){$T_N$}\put(250,40){{$T$}}
\put(290,40){$T_{N+1}$} $}
\end{center}
\end{figure}

At the terminal time $T$, the representative short-term creditor
faces the insolvency risk that the firm may not pay back her
funding, and her value function at the terminal time is\footnote{The
probability of the insolvency time $\tau^{Ins}$ equal to the
terminal time $T$ is zero, so at the terminal time $T$ the firm only
faces the insolvency risk stemming from the final workout of the
firm's risky project. For this reason the recovery rate $R$ at time
$T$ is redefined as $R_T=\min\{1,X_T/(1+l_T)\}$.}
\begin{equation}\label{R_T}
U(T,x)=R_T=\min\left\{1,\frac{x}{1+l_T}\right\},
\end{equation}
which represents the insolvency risk stemming from the final workout
of the firm fundamental as in \cite{Merton}.

During the last time period $(T_N,T)$, all of the creditors are
locked in, so there is no rollover risk, and the representative
short-term creditor only faces the insolvency risk with the
associated recovery risk. Her value function for $t\in(T_N,T)$ is
\begin{align}\label{DynamicProgrammingEquation0}
U(t,x)=&\ \mathbf{E}_t^{x}\left\{\mathbf{1}_{\{t\leq \tau^{Ins}<T\}}
e^{-r(\tau^{Ins}-t)}\cdot e^{r_S(\tau^{Ins}-t)}R_{\tau^{Ins}}\right.\nonumber\\
&+ \left.\mathbf{1}_{\{ \tau^{Ins}\geq T\}}e^{-r(T-t)}\cdot
e^{r_S(T-t)}U(T,X_T)\right\},
\end{align}
where the first term in the bracket captures the insolvency risk
from the firm fundamental falling below the insolvency barrier
$D_t^{Ins}$ during the time period $(t,T)$, and the second term
captures the insolvency risk from the final workout of the firm's
risky project at time $T$. Hence,
(\ref{DynamicProgrammingEquation0}) represents the insolvency risk
due to the deterioration of the firm fundamental as in
\cite{BlackCox}.

To determine the value function at $t=T_N$ we take a closer look at
the rollover decision problem. At the rollover date $T_N$, if the
firm survives a debt run, the representative short-term creditor
will compare the expected return from rolling over her funding with
the expected market return, and will choose whatever results in a
higher return for her. If the firm defaults due to a debt run, she
will receive the recovery value $R_{T_N}$ in any case, regardless of
whether she decides to roll over her funding or not. Hence, the
creditor can only make her rollover decision conditional on the firm
surviving the current debt run. Therefore, the value function given
in equation (\ref{DynamicProgrammingEquation0}) also describes her
discounted expected return at time $t=T_N$ for the remaining time
period~$(T_N,T)$.

In general, during the time period $[T_n,T_{n+1})$ for
$n=0,1,\ldots,N-1$, the representative short-term creditor is
exposed not only to the insolvency risk arising from the
deterioration of the firm fundamental in the period $[T_n,T_{n+1})$
but also to the rollover risk caused by other creditors' rollover
decisions at time $T_{n+1}$. Table~\ref{table aggregate payoff}
summarizes her payoff at maturity $T_{n+1}$.

\begin{table}[H]\small\caption{\small\textbf{Representative creditor's aggregate payoff from $T_n$ to $T_{n+1}$}}
\label{table aggregate payoff}
\begin{center}
\begin{tabular}{cccc}\toprule
 Decision        &Solvency in $[T_n,T_{n+1}]$, & Solvency in $[T_n,T_{n+1}]$, &Insolvency in $[T_n,T_{n+1}]$. \\
at $T_{n+1}$ & no default due to run at $T_{n+1}$. & default due to
run at $T_{n+1}$. &
\\\midrule
Run       & $ e^{r_S(T_{n+1}-T_n)}\cdot1$ & $
e^{r_S(T_{n+1}-T_n)}\cdot R_{T_{n+1}}$ & $e^{r_S(\tau^{Ins}-T_n)}R_{\tau^{Ins}}$ \\
Rollover  & $ e^{r_S(T_{n+1}-T_n)}\cdot U(T_{n+1},X_{T_{n+1}})$ &
$e^{r_S(T_{n+1}-T_n)}\cdot R_{T_{n+1}}$    &
       $e^{r_S(\tau^{Ins}-T_n)}R_{\tau^{Ins}}$\\\bottomrule
\end{tabular}
\end{center}
\end{table}

At maturity $T_{n+1}$ if there is no default, the representative
short-term creditor either withdraws her funding to get
$e^{r_S(T_{n+1}-T_n)}\cdot1$ or renews her contract to receive
$e^{r_S(T_{n+1}-T_n)}\cdot U(T_{n+1},X_{T_{n+1}})$. If the firm
defaults due to a debt run at time $T_{n+1}$, the creditor just gets
the fraction $R_{T_{n+1}}$ of her funding $e^{r_S(T_{n+1}-T_n)}$
back. Since the creditor believes that the firm survives a debt run
at time $T_{n+1}$ with probability $\theta(X_{T_{n+1}})$, her
discounted expected return at time $t\in[T_n,T_{n+1})$ can be
described by the following value function
\begin{align}\label{DynamicProgrammingEquation}
U(t,x)=&\ \mathbf{E}_t^{x}\left\{\mathbf{1}_{\{t\leq
\tau^{Ins}<T_{n+1}\}}
 e^{(r_S-r)(\tau^{Ins}-t)}R_{\tau^{Ins}}+\mathbf{1}_{\{ \tau^{Ins}\geq
T_{n+1}\}}
e^{(r_S-r)(T_{n+1}-t)}\cdot \right.\nonumber\\
&\hspace{10mm}
\left.\cdot\left[\theta(X_{T_{n+1}})\max\left\{1,U(T_{n+1},X_{T_{n+1}})\right\}+
(1-\theta(X_{T_{n+1}}))R_{T_{n+1}}\right]\right\}.
\end{align}
The first term on the right hand side captures the insolvency risk
within the time period $[t,T_{n+1})$, whereas the second term
captures the rollover risk at time $T_{n+1}$ as well as the
insolvency and rollover risks in $[T_{n+1},T]$. Therefore, the first
term in the second line of (\ref{DynamicProgrammingEquation})
represents the future rollover risk (and insolvency risk) as in
\cite{HeXiongRun}, and the second term in the second line of
(\ref{DynamicProgrammingEquation}) represents the current rollover
risk as in \cite{MorrisShin}. Note that if the maturities are
random, then both (\ref{DynamicProgrammingEquation0}) and
(\ref{DynamicProgrammingEquation}) are understood as conditioning on
one realization of $\{T_n(\omega)\}_{n\geq 1}$ so that $T_N<T\leq
T_{N+1}$.

The dynamic programming equations
(\ref{DynamicProgrammingEquation0}) and
(\ref{DynamicProgrammingEquation}) for the value function $U(t,x)$
are the drivers to determine the debt run barrier in our model,
which will be discussed later. By the Feynman-Kac formula, we have
the following partial differential equation (PDE) representation for
the value function $U(t,x)$.

\begin{proposition}\label{Feynman-Kac1} Suppose Assumptions \ref{Assumption1}, \ref{Assumption2}, and \ref{Assumption3} are satisfied.
For $n=0,1,\ldots,N-1$, let $W_n(t,x)$ be the unique solution to the
following PDE Dirichlet problem on $[T_{n},T_{n+1}]\times[\beta
l_t,\infty)$
\begin{equation}\label{PenalizedEqu}
\left\{
\begin{array}{l}
\frac{\partial W_n}{\partial
t}+\mathcal{L}W_n+(r_S-r)W_n=0 \\[0.4cm]
W_n(t,\beta l_t)=\alpha\beta l_t/(1+l_t)\\[0.4cm]
W_n(T_{n+1},x)=\theta(x)\max\left\{1,W_{n+1}(T_{n+1},x)\right\}+(1-\theta(x)) \alpha x/(1+l_{T_{n+1}}).%
\end{array}%
\right.
\end{equation}%
For $n=N$, let $W_N(t,x)$ be the unique solution to the following
Dirichlet problem on $[T_{N},T]\times[\beta l_t,\infty)$
\begin{equation}\label{PenalizedEqu2}
\left\{
\begin{array}{l}
\frac{\partial W_N}{\partial
t}+\mathcal{L}W_N+(r_S-r)W_N=0 \\[0.4cm]
W_N(t,\beta l_t)=\alpha \beta l_t/(1+l_t)\\[0.4cm]
W_N(T,x)=\min \left\{ 1, x/(1+l_T)\right\},%
\end{array}%
\right.
\end{equation}%
where $\mathcal{L}$ is the infinitesimal generator for the ratio
process $X$ given by
$$\mathcal{L}=\frac{1}{2}\sigma^2x^2\frac{\partial^2}{\partial x^2}
+(r_V-r_S)x\frac{\partial}{\partial x}.$$ Then the value function
$U(t,x)$ is given by concatenating $W_n(t,x)$ together
$$U(t,x)=W_{n}(t,x)\ \ \text{for}\ \
t\in[T_{n},T_{n+1}).$$
\end{proposition}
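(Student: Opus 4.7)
The plan is to establish the identity $U(t,x)=W_n(t,x)$ on $[T_n,T_{n+1})\times[\beta l_t,\infty)$ by backward induction on $n=N,N-1,\ldots,0$, identifying on each subinterval the value function defined by \eqref{DynamicProgrammingEquation0}--\eqref{DynamicProgrammingEquation} with the Feynman-Kac representation of the stated Dirichlet problem. In both PDEs the operator is $\partial_t+\mathcal{L}+(r_S-r)\mathrm{Id}$, so the associated diffusion is the ratio process $X$, the discount rate is $r-r_S$, and the killing set is the insolvency curve $x=\beta l_t$ where $\tau^{Ins}$ is reached. At this curve, Assumption \ref{Assumption3} together with \eqref{equ recovery rate} gives $R_t=\alpha\beta l_t/(1+l_t)$, so the lateral boundary data in \eqref{PenalizedEqu}--\eqref{PenalizedEqu2} coincide with the recovery value the creditor receives at $\tau^{Ins}$.

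For the base case $n=N$ and $t\in[T_N,T]$, observe that \eqref{DynamicProgrammingEquation0} is precisely the stochastic representation
\begin{equation*}
U(t,x)=\mathbf{E}_t^x\!\left[\mathbf{1}_{\{\tau^{Ins}<T\}}e^{(r_S-r)(\tau^{Ins}-t)}R_{\tau^{Ins}}+\mathbf{1}_{\{\tau^{Ins}\geq T\}}e^{(r_S-r)(T-t)}R_T\right],
\end{equation*}
with terminal datum $R_T=\min\{1,x/(1+l_T)\}$ matching \eqref{PenalizedEqu2}. Applying Itô's formula to $e^{(r_S-r)(s-t)}W_N(s,X_s)$ stopped at $\tau^{Ins}\wedge T$, using the PDE to cancel the drift, and using the boundary/terminal conditions to evaluate the stopped process yields $W_N(t,x)=U(t,x)$.

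For the inductive step, assume $U(T_{n+1},x)=W_{n+1}(T_{n+1},x)$ for all admissible $x$. Rewrite \eqref{DynamicProgrammingEquation} as
\begin{equation*}
U(t,x)=\mathbf{E}_t^x\!\left[\mathbf{1}_{\{\tau^{Ins}<T_{n+1}\}}e^{(r_S-r)(\tau^{Ins}-t)}R_{\tau^{Ins}}+\mathbf{1}_{\{\tau^{Ins}\geq T_{n+1}\}}e^{(r_S-r)(T_{n+1}-t)}\Phi_n(X_{T_{n+1}})\right],
\end{equation*}
where $\Phi_n(x)=\theta(x)\max\{1,W_{n+1}(T_{n+1},x)\}+(1-\theta(x))R_{T_{n+1}}$ is exactly the terminal condition imposed on $W_n$ in \eqref{PenalizedEqu}. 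The same Itô/Feynman-Kac argument then identifies $W_n(t,x)=U(t,x)$ on $[T_n,T_{n+1})$. Concatenating across $n$ delivers the claim.

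The delicate point is analytic rather than probabilistic: the domain $\{x\geq\beta l_t\}$ has a moving lateral boundary (since $l_t=l_0e^{(r_L-r_S)t}$), and one must secure existence, uniqueness, interior $C^{1,2}$-regularity, and continuity up to the parabolic boundary of each $W_n$ to legitimately invoke Itô's formula. I would fix the boundary by the change of variables $Y_t=\log X_t-\log(\beta l_t)$: then $Y$ satisfies an SDE with constant drift $r_V-r_L-\tfrac{1}{2}\sigma^2$ and constant diffusion $\sigma$, the spatial domain becomes the fixed half-line $[0,\infty)$, $\tau^{Ins}$ becomes the first hitting time of $0$, and \eqref{PenalizedEqu}--\eqref{PenalizedEqu2} transform into standard Cauchy-Dirichlet problems on $[T_n,T_{n+1}]\times[0,\infty)$ with continuous boundary datum and terminal datum of at most linear growth. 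Classical parabolic theory then provides a unique classical solution with the required regularity, provided $\theta$ is continuous---which follows from integrability of the density $f$ in \eqref{survival_prob}---and provided $W_{n+1}(T_{n+1},\cdot)$ is continuous, a property preserved through the induction because $\Phi_n$ is built from continuous ingredients $\max\{1,\cdot\}$, $\theta$, and the linear map $x\mapsto\alpha x/(1+l_{T_{n+1}})$.
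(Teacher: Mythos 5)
Your proof is correct and takes essentially the same approach that the paper implicitly intends. The paper offers no explicit proof of this proposition---it merely invokes ``the Feynman-Kac formula'' in the text before the statement---so what you have done is supply the argument the authors left out: a backward induction on the rollover index $n$, with Feynman-Kac/It\^o applied on each time slab $[T_n,T_{n+1}]$, using the insolvency recovery $R_{\tau^{Ins}}=\alpha\beta l_{\tau^{Ins}}/(1+l_{\tau^{Ins}})$ as the lateral boundary datum and the induction hypothesis $U(T_{n+1},\cdot)=W_{n+1}(T_{n+1},\cdot)$ to match the terminal datum. Your treatment of the moving boundary $\{x=\beta l_t\}$ via the substitution $Y_t=\log X_t-\log(\beta l_t)$ is a genuine and appropriate technical addition: it straightens the killing boundary, turns each slab into a standard Cauchy--Dirichlet problem on a fixed half-line, and justifies the classical regularity needed to run It\^o. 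This is exactly the change of coordinates the paper itself uses later in Appendix A.1 (in the proof of the Green's representation), so it is consistent with the authors' own viewpoint. One small point you inherit from the paper rather than introduce yourself: the terminal datum in \eqref{PenalizedEqu} is written as $(1-\theta(x))\,\alpha x/(1+l_{T_{n+1}})$ rather than $(1-\theta(x))R_{T_{n+1}}=(1-\theta(x))\min\{1,\alpha x/(1+l_{T_{n+1}})\}$; the two agree provided $\alpha/(\psi(1+l_{T_{n+1}}))\leq 1$ (for larger $x$ one has $\theta(x)=1$, so the factor vanishes), a parameter restriction the paper leaves implicit. It would be worth flagging this condition explicitly, but it does not affect the validity of your argument.
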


Based on the Green's function technique, we further have the
following analytical representation for the value function
$U(t,\bar{x})$ where $\bar{x}=x/ (\beta l_t)$.

\begin{proposition}\label{prop Green's representation} (Green's Representation)
For $n=0,1,\ldots,N$, denote $P_n$ and $\Phi_n$ respectively as the
boundary condition and the terminal condition of the corresponding
PDE for the value function $U(t,\bar{x})$ on $[T_{n},T_{n+1})$,
where $T_{N+1}:=T$ for convenience. Then
\begin{equation}\label{GreenFuntion}
U(t,\bar{x})=\int_1^{\infty}P_n(\xi)\mathbb{G}(t,\bar{x};T_{n+1},\xi)d\xi+
\frac12\sigma^2\int_t^{T_{n+1}}\Phi_n(\eta)\frac{\partial}{\partial\xi}\left.\left\{\xi^2\mathbb{G}(t,\bar{x};\eta,\xi)\right\}\right|_{\xi=1}d\eta
\end{equation}
on $[T_{n},T_{n+1})$, where $\mathbb{G}(t,\bar{x};\eta,\xi)$ is the
Green's function for the operator $\mathcal{L}^v$ defined as
$$\mathcal{L}^v=\frac{\partial}{\partial t}+\frac12\sigma^2\bar{x}^2\frac{\partial^2}{\partial\bar{x}^2}+
(r_V-r_L)\bar{x}\frac{\partial}{\partial\bar{x}}+(r_S-r)$$ on the
domain $[T_{n},T_{n+1}]\times[1,\infty)$ given by
\begin{align*}
\mathbb{G}(t,\bar{x};T_{n+1},\xi)=&\
\frac{e^{(r_S-r)(T_{n+1}-t)}}{\xi\sigma\sqrt{2\pi(T_{n+1}-t)}}
\exp\left\{-\frac{\left[\log\frac{\bar{x}}{\xi}+(r_V-r_L-\frac12\sigma^2)
(T_{n+1}-t)\right]^2}{2\sigma^2(T_{n+1}-t)}\right\}\\
&\times
\left[1-\exp\left\{\frac{2\log\frac{1}{\xi}\log\bar{x}}{\sigma^2(T_{n+1}-t)}\right\}\right].
\end{align*}
\end{proposition}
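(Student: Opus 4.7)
The plan is to reduce the problem to a standard Dirichlet problem for a linear parabolic PDE on the half-line $\bar{x}\in[1,\infty)$ and then apply Green's second identity. Define $\tilde{U}(t,\bar{x}) = U(t,\bar{x}\,\beta l_t)$ with $\bar{x} = x/(\beta l_t)$. Since $dl_t/dt = (r_L - r_S)\,l_t$, a direct chain-rule computation turns the PDE of Proposition~\ref{Feynman-Kac1} into $\mathcal{L}^v\tilde{U} = 0$ on $[T_n,T_{n+1}]\times[1,\infty)$: the boundary at $x = \beta l_t$ maps to $\bar{x} = 1$ and the terminal condition at $t = T_{n+1}$ is preserved. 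The two inhomogeneous data are the terminal profile $P_n(\xi) = \tilde{U}(T_{n+1},\xi)$ on $\xi\in[1,\infty)$ and the boundary profile $\Phi_n(\eta) = \tilde{U}(\eta,1)$ on $\eta\in[T_n,T_{n+1}]$ (this is the convention actually used in the displayed formula \eqref{GreenFuntion}).

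The second step is to construct the Green's function $\mathbb{G}$ of $\mathcal{L}^v$ with an absorbing boundary at $\bar{x}=1$. The substitution $y=\log\bar{x}$ converts $\mathcal{L}^v$ into the constant-coefficient operator $\partial_t + \tfrac12\sigma^2\partial_y^2 + (r_V - r_L - \tfrac12\sigma^2)\partial_y + (r_S-r)$ on $y>0$. Peeling off the discount $e^{(r_S-r)(T_{n+1}-t)}$ and applying the classical method of images for a Brownian motion with drift $\mu' = r_V - r_L - \tfrac12\sigma^2$ absorbed at $y=0$ gives the half-line transition density. Changing back via $\xi = e^{y'}$ (Jacobian $1/\xi$) and combining the image term with the direct term, a routine algebraic simplification produces the bracketed factor $1 - \exp\{2\log(1/\xi)\log\bar{x}/(\sigma^2(T_{n+1}-t))\}$ appearing in the stated formula; in particular $\mathbb{G}$ vanishes at $\xi = 1$.

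The third step is Green's second identity on the rectangle $(t,T_{n+1})\times(1,\infty)$ applied to $\tilde{U}$ and $\mathbb{G}$. Because $\mathcal{L}^v\tilde{U}=0$ and $(\mathcal{L}^v)^*\mathbb{G}=0$ in the forward variables $(\eta,\xi)$, the bulk integral vanishes. Integration by parts in $\eta$, combined with the $\delta$-behaviour $\mathbb{G}(t,\bar{x};\eta,\xi)\to\delta(\bar{x}-\xi)$ as $\eta\downarrow t$, isolates $\tilde{U}(t,\bar{x})$ on the left-hand side and yields the terminal integral $\int_1^\infty P_n(\xi)\mathbb{G}(t,\bar{x};T_{n+1},\xi)\,d\xi$. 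For the spatial direction, only the second-order term leaves a non-zero boundary contribution: integrating $\tfrac12\sigma^2\xi^2\partial_\xi^2$ by parts twice against its formal adjoint $\tfrac12\sigma^2\partial_\xi^2(\xi^2\,\cdot\,)$ produces the Wronskian-type trace $\tfrac12\sigma^2\bigl[\xi^2\mathbb{G}\,\partial_\xi\tilde{U} - \partial_\xi(\xi^2\mathbb{G})\,\tilde{U}\bigr]_{\xi=1}^{\xi=\infty}$. Gaussian decay at infinity and $\mathbb{G}|_{\xi=1}=0$ kill three of the four terms, leaving precisely $-\tfrac12\sigma^2\,\partial_\xi(\xi^2\mathbb{G})|_{\xi=1}\Phi_n(\eta)$, whose $\eta$-integral is the second term of \eqref{GreenFuntion}. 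The first-order drift piece assembles into $(r_V-r_L)\,\partial_\xi(\xi\mathbb{G}\tilde{U})$ whose value at $\xi=1$ vanishes, and the zeroth-order discount is self-adjoint, so neither contributes additional boundary flux.

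The main obstacle I anticipate is correctly identifying the co-normal derivative as $\partial_\xi(\xi^2\mathbb{G})|_{\xi=1}$ rather than $\xi^2\partial_\xi\mathbb{G}|_{\xi=1}$. This requires keeping $\mathcal{L}^v$ strictly in non-divergence form, writing out the adjoint $(\mathcal{L}^v)^*\phi = -\partial_\eta\phi + \tfrac12\sigma^2\partial_\xi^2(\xi^2\phi) - (r_V-r_L)\partial_\xi(\xi\phi) + (r_S-r)\phi$, and exploiting $\mathbb{G}|_{\xi=1}=0$ to cancel the unwanted $\partial_\xi\tilde{U}$ term on the boundary. Secondary technicalities -- the $\delta$-limit at $\eta = t$ and the decay at $\xi = \infty$ -- are handled by the explicit Gaussian bound on $\mathbb{G}$, together with a mollification argument if $P_n$ or $\Phi_n$ are only bounded measurable (which is the case here since $\theta$ and the recovery term have kinks).
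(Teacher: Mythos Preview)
Your proposal is correct and follows essentially the same route as the paper: the change of coordinate $\bar{x}=x/(\beta l_t)$ to reduce to $\mathcal{L}^v\tilde U=0$ on $[T_n,T_{n+1}]\times[1,\infty)$, the construction of $\mathbb{G}$ by the log-substitution plus method of images for the half-line heat equation, and then Green's identity with the adjoint operator to extract the representation. The only cosmetic difference is that the paper explicitly introduces the adjoint Green's function $\hat{\mathbb{G}}$ solving $\hat{\mathcal{L}}^v\hat{\mathbb{G}}=0$ and integrates $\hat{\mathbb{G}}\,\mathcal{L}^vW_n - W_n\,\hat{\mathcal{L}}^v\hat{\mathbb{G}}$ by parts, whereas you work directly with $\mathbb{G}$ in the forward variables; these are equivalent. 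Your observation that the formula \eqref{GreenFuntion} actually uses $P_n$ as the terminal datum and $\Phi_n$ as the boundary datum (reversed from the sentence preceding it) is also accurate.
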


\begin{proof} See Appendix \ref{app A}.
\end{proof}

\section{Threshold Strategies of Debt Run Model}\label{sec barriers}

In this section, we use the dynamic programming equations
(\ref{DynamicProgrammingEquation0}) and
(\ref{DynamicProgrammingEquation}) to determine the debt run barrier
as well as the illiquidity barrier for the \emph{representative
short-term creditor}. Our main objective is to show the monotonic
relationship among the debt run barrier, the illiquidity barrier,
and the exogenously given insolvency barrier.

\subsection{Discrete Tenor Structure: Revisit of \cite{LiangLutkebohmertXiao}}\label{Remark}

In this subsection, we extend the main results in
\cite{LiangLutkebohmertXiao} to our general setup.
\cite{LiangLutkebohmertXiao} show that there exists a threshold,
called the \emph{debt run barrier} such that the representative
short-term creditor will withdraw her funding whenever the firm
fundamental falls below this barrier at a rollover date. The debt
run barrier is only a finite sequence of numbers, since the creditor
only has a finite number of rollover dates to decide whether to run
or not. In our general setting we define the debt run barrier
$D_{T_n}^{Run}$ for any $n=0, 1,\ldots, N$ as the critical asset
value such that the representative short-term creditor is
indifferent in terms of running or rolling over her debt, i.e., it
is defined via the unique value $x_{T_n}^{*}$ such that $1=U(T_n,
x^*_{T_n})$ in the maximum term in dynamic programming equation
(\ref{DynamicProgrammingEquation}). The debt run barrier
$D_{T_n}^{Run}$ is then determined by
$$D_{T_n}^{Run}=x^*_{T_{n}}S_{T_n}=x^*_{T_{n}}S_0e^{r_ST_{n}},
\ \ \ \text{for}\ n=0,1,\ldots, N.$$ Since such a debt run barrier
is determined by the \emph{representative short-term creditor}, it
is actually the debt run barrier for \emph{all of the short-term
creditors}, who will then run on the firm if $V_{T_n}\leq
D_{T_n}^{Run}$. Although the representative creditor (so every
short-term creditor) holds the belief that $\xi$ proportion of them
will run on the firm at each date $T_n$, they will actually run at
the same time based on such a debt run barrier strategy. In fact,
such kind of debt run barrier strategy also appears in dynamic debt
run models such as \cite{MorrisShin}, \cite{HeXiongRun} and
\cite{LiangLutkebohmertXiao}.

In the following, we show that such a debt run barrier always
dominates the insolvency barrier. Note that the value function
$U(t,x)$ is obviously increasing with respect to $x$, and when the
firm goes bankrupt due to insolvency at a rollover date
$T_n=\tau^{Ins}$, the value function is
$$U(T_n,\beta l_{T_n})=R_{T_n}=\alpha \beta l_{T_n}/(1+l_{T_n}).$$
Due to Assumption \ref{Assumption3} we have $\beta\leq
(1/l_{T_n}+1)$ so that $U(T_n,\beta l_{T_n})\leq
1=U(T_n,x^*_{T_n})$. Hence we have $\beta l_{T_n}\leq x^*_{T_n}$.
This means the insolvency barrier $D_{t}^{Ins}$ at any rollover date
$t=T_n$ is dominated by the debt run barrier, i.e., $
D_{T_n}^{Ins}\leq D_{T_n}^{Run}$ for $n=0,1,\ldots,N.$ Note that
this dominance always holds in \cite{LiangLutkebohmertXiao}, since
the recovery rate $R_t$ is
assumed to be zero therein.\\

{A debt run does not necessarily trigger a default, for example in
the case where the firm can raise enough funding to pay off its
maturing short-term debt. The firm will survive the debt run at the
rollover date $T_n$, if $\psi V_{T_n}\geq S_{T_n}$ conditional on
$V_{T_n}\geq D_{T_n}^{Ins}$. Motivated by this observation we
introduce a third barrier, which we call an \emph{illiquidity
barrier} $D^{Ill}$, and which is defined as follows
\begin{equation}\label{equ illiquidity barrier}
D_{T_{n}}^{Ill}=\min\left\{D_{T_{n}}^{Run}, \max\{
S_{T_n}/\psi,D^{Ins}_{T_n}\}\right\}, \ \ \ \text{for}\
n=0,1,\ldots, N.
\end{equation}
Hence, an illiquidity default only occurs if there is a debt run and
the firm is not able to raise enough funds to pay off its maturing
short-term debt or not able to remain solvent at the debt run.
}


In the following, we show that the insolvency barrier $D_t^{Ins}$ at
$t=T_n$ is also dominated by the illiquidity barrier, i.e.,
$D_{T_n}^{Ins}\leq D_{T_n}^{Ill}$ for $n=0,1,\ldots,N.$ Indeed, we
always have
$$D_{T_n}^{Ill}\geq \min\{D^{Run}_{T_n},
D^{Ins}_{T_n}\}=D^{Ins}_{T_n}.$$

\begin{theorem}\label{Theorem1} Suppose that Assumptions \ref{Assumption1},
\ref{Assumption2}, and \ref{Assumption3} are satisfied. Then at any
maturity $T_n$, the debt run barrier is no less than the illiquidity
barrier, while the latter is no less than the insolvency barrier,
i.e.,
$$D_{T_n}^{Ins}\leq D_{T_n}^{Ill}\leq D_{T_n}^{Run}\ \ \ \text{for}\
n=0,1,\ldots,N.$$
\end{theorem}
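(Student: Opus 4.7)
The plan is to assemble the three inequalities by exploiting the arguments already spelled out in the discussion immediately preceding the theorem, supplementing them with a clean monotonicity argument for the value function $U(T_n,\cdot)$. The cleanest order is to handle $D_{T_n}^{Ill}\leq D_{T_n}^{Run}$ first (trivial from the definition of the illiquidity barrier as a minimum), then $D_{T_n}^{Ins}\leq D_{T_n}^{Ill}$, which needs two subclaims.

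For $D_{T_n}^{Ins}\leq D_{T_n}^{Run}$, I would proceed as follows. Recall that $D_{T_n}^{Run}=x^{\ast}_{T_n}S_{T_n}$ where $x^{\ast}_{T_n}$ is the unique root of $U(T_n,x)=1$. By Assumption \ref{Assumption3}, $\beta\leq 1/l_{T_n}+1$, so $R_{T_n}$ evaluated at $x=\beta l_{T_n}$ equals $\alpha\beta l_{T_n}/(1+l_{T_n})\leq 1$ (using $\alpha\in(0,1)$). If the firm hits the insolvency barrier at $t=T_n$ then $\tau^{Ins}=T_n$ and the representative creditor collects exactly $R_{T_n}\leq 1$, so $U(T_n,\beta l_{T_n})\leq 1=U(T_n,x^{\ast}_{T_n})$. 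Combined with the monotonicity of $U(T_n,\cdot)$ in $x$, this yields $\beta l_{T_n}\leq x^{\ast}_{T_n}$ and hence $D_{T_n}^{Ins}\leq D_{T_n}^{Run}$. The monotonicity itself follows from the Green's representation in Proposition \ref{prop Green's representation} (nonnegativity of the kernel combined with the fact that the terminal and boundary payoffs $P_n,\Phi_n$ are nondecreasing in the state), or equivalently from the monotonicity of the controlled diffusion $X_t$ in its initial value, applied inductively backward from $t=T$.

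The remaining piece, $D_{T_n}^{Ins}\leq (S_{T_n}+L_{T_n})/\psi$, is purely algebraic: by Assumption \ref{Assumption3}, $D_{T_n}^{Ins}=\beta l_{T_n}S_{T_n}\leq (1+l_{T_n})S_{T_n}=S_{T_n}+L_{T_n}$, and since $\psi\in(0,1)$, $(S_{T_n}+L_{T_n})/\psi\geq S_{T_n}+L_{T_n}\geq D_{T_n}^{Ins}$. Combining this with the previous paragraph gives $D_{T_n}^{Ins}\leq\min\bigl\{D_{T_n}^{Run},(S_{T_n}+L_{T_n})/\psi\bigr\}=D_{T_n}^{Ill}$, closing the chain.

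The main obstacle is the monotonicity of $U(T_n,\cdot)$, which is the only nonalgebraic ingredient. A direct comparison via the SDE for $X_t$ is attractive but one must be careful across rollover dates: the boundary value at $T_{n+1}$ involves $\theta(x)\max\{1,U(T_{n+1},x)\}+(1-\theta(x))\alpha x/(1+l_{T_{n+1}})$, so I would verify by backward induction on $n=N,N-1,\ldots,0$ that this terminal profile is nondecreasing in $x$ (using that $\theta$ and $R$ are nondecreasing and that a maximum of nondecreasing functions is nondecreasing), then propagate monotonicity inside each strip $[T_n,T_{n+1})$ via the Green's kernel. Once monotonicity is in hand, the theorem follows in a few lines as above.
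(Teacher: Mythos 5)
Your proof is correct and takes essentially the same route as the paper: the paper also gets $D^{Ins}_{T_n}\leq D^{Run}_{T_n}$ from $U(T_n,\beta l_{T_n})=R_{T_n}=\alpha\beta l_{T_n}/(1+l_{T_n})\leq 1=U(T_n,x^{*}_{T_n})$ together with monotonicity of $U(T_n,\cdot)$, gets $D^{Ill}_{T_n}\leq D^{Run}_{T_n}$ directly from the definition of $D^{Ill}$ as a minimum, and gets $D^{Ins}_{T_n}\leq (S_{T_n}+L_{T_n})/\psi$ from $\beta l_t\leq 1+l_t$ and $\psi\in(0,1)$. The only material difference is that you supply a justification (Green's kernel nonnegativity plus a backward-inductive check that the terminal and boundary data are nondecreasing) for the monotonicity of $U$ in $x$, which the paper simply asserts as ``obvious.''
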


Due to this relationship, the debt run barrier, the illiquidity
barrier and the insolvency barrier determine four possible scenarios
at each of the rollover dates
$T_n$ which are illustrated in the flowchart in Figure \ref{fig:flowchart}.\\

\centerline{[Insert Figure \ref{fig:flowchart} here.]}\vspace{0.4cm}

Figure \ref{fig: simulation discrete} shows different scenarios in
our debt run model with the discrete tenor structure for three
simulated asset value paths. Here we assume $\xi$ is uniformly
distributed, and $N=4$ rollover dates at times $t=2,4,6,$ and $8$.
The dotted line shows the debt run barrier, the dashed line the
illiquidity barrier, and the solid line the insolvency barrier. Note
that in this discrete setting, the debt run barrier and the
illiquidity barrier are not continuous functions. They consist only
of the marked points. The black asset value path falls below the
insolvency barrier shortly before time $t=4.$ At the rollover date
$t=2$ prior to this time, the asset value is larger than the debt
run barrier. Hence, in this simulation the firm will default shortly
before time $t=4$ due to insolvency. The dark black path falls below
the illiquidity barrier at the third rollover date at time $t=6$,and
before time $t=6$ it always stays above the insolvency barrier.
Thus, in this simulation the firm defaults due to illiquidity at
time $t=6$. Finally, the grey path shows a scenario where a debt run
occurs at the last rollover date $t=8$. At that time, however, the
asset value is still larger than the illiquidity barrier, meaning
that the firm is able to raise enough funds to pay off its
short-term creditors. Hence,
the firm survives the debt run.\\

\centerline{[Insert Figure \ref{fig: simulation discrete} here.]}

\subsection{Staggered Tenor Structure}\label{SubSecMaturity}

In \cite{LiangLutkebohmertXiao} it is assumed that short-term debt
rollover dates are given by a deterministic sequence of numbers and
that they are the same for all short-term creditors. This assumption
is rather restrictive. The firm is highly exposed to rollover risk
in such a setting where all short-term funding expires at the same
date. In practice, however, firms tend to spread out their debt
expirations across time to reduce their exposure to liquidity risk.
In this section, we introduce a more flexible debt maturity
structure. Among others, \cite{Leland,Leland2} and \cite{LelandToft}
introduced the so-called \emph{staggered maturity structure} to
capture this fact. The idea is to use the arrival times of a Poisson
process to model the maturities of short-term debt. In other words,
the duration of short-term debt $T_1-T_0, T_2-T_1, \ldots$ has an
exponential distribution. While the random duration assumption
appears different from the standard debt contract with a
predetermined maturity, it captures the staggered debt maturity
structure of a typical firm. For the application of such a
\emph{Poisson maturity structure} in the literature of debt runs, we
refer to the recent work by \cite{HeXiongRun}.

The crucial parameter under the aforementioned Poisson maturity
structure framework is the intensity $\lambda$. Its inverse
$1/\lambda$ can be interpreted as the average duration of short-term
debt. We consider a \emph{Cox maturity structure}, meaning that the
maturity of short-term debt follows a more general and flexible Cox
process. Recall that a Cox process is a generalization of Poisson
processes in which the intensity is allowed to be random but in such
a way that if we condition on a particular realization
$\lambda_t(\omega)$ of the intensity, the process becomes an
inhomogeneous Poisson process with intensity $\lambda_t(\omega)$.
The economic intuition of using the Cox maturity structure is that
the average duration of the short-term debt that the firm issues
should depend on some time-dependent economic factors such as the
firm fundamental $V$, the ratio $X$ of the firm fundamental over the
short-term debt, or even some underlying states of the economy. In
the following we therefore assume that the average maturity is a
function of the ratio process $X$.

We construct the short-term debt maturities $\{T_{n}\}_{n\geq 1}$ by
so-called \emph{canonical construction}. Let $\{E_n\}_{n\geq 1}$ be
a sequence of independent identically distributed (i.i.d.)
exponential random variables on some complete probability space
$(\tilde{\Omega},\tilde{\mathcal{F}},\tilde{\mathbf{P}})$, and
define the enlarged probability space by
$$\bar{\Omega}=\Omega\times\tilde{\Omega},\ \ \ \bar{\mathcal{G}}=\mathcal{F}\otimes\tilde{\mathcal{F}},
\ \ \ \text{and}\ \
\mathbf{Q}=\mathbf{P}\otimes\tilde{\mathbf{P}}.$$ We assume the
intensity has the form $\lambda_t=g(X_t)$, where $g:
(0,\infty)\rightarrow(0,\infty)$ is a smooth function with compact
support. Then the maturities of short-term debt are constructed
recursively as
$$
T_0=0 \quad \mbox{and}\quad T_n=\inf\left\{t\geq T_{n-1}:
\int_{T_{n-1}}^tg(X_s)ds\geq E_n\right\},\quad \mbox{for}\, n\geq 1.
$$

We summarize the
above construction in the following assumption.

\begin{assumption}\label{Assumption4} (Cox maturity structure) The maturities of the short-term debt $\{T_{n}\}_{n\geq 1}$ are the
arrival times of a Cox process with intensity $g(X_t)$.
\end{assumption}

Under the Cox maturity structure, we still employ the representative
short-term creditor's dynamic programming equations
(\ref{DynamicProgrammingEquation0}) and
(\ref{DynamicProgrammingEquation}) to determine her value function
$U(t,x)$. Letting the ratio process start from $X_t=x$ and the
short-term debt maturities start from $T_0=t$, we synthesize the
dynamic programming equations (\ref{DynamicProgrammingEquation0})
and (\ref{DynamicProgrammingEquation}) into the following succinct
form on the event $\{{T_1>t}\}$:
\begin{align}\label{DynamicProgrammingEquationMaturity}
&\ U(t,x)\nonumber\\
=&\ \mathbf{E}_t^{x}\left\{\mathbf{1}_{\{t\leq
\tau^{Ins}<T\}}\left[\mathbf{1}_{\{t<T_{1}<\tau^{Ins}\}}
e^{(r_S-r)(T_{1}-t)}\left[\theta(X_{T_{1}})\max\left\{1,U(T_{1},X_{T_{1}})\right\}+
(1-\theta(X_{T_{1}}))R_{T_{1}}\right]\right.\right.\nonumber\\
& \ \ \ \ \ \ \ \ \ \ \ \ \ \ \ \ \ \ \ \ \ \ \ \
\left.+\mathbf{1}_{\{ T_{1}\geq \tau^{Ins}\}}
e^{(r_S-r)(\tau^{Ins}-t)}R_{\tau^{Ins}}\right]\nonumber\\
& + \mathbf{1}_{\{ \tau^{Ins}\geq T\}}\left[\mathbf{1}_{\{t<
T_{1}<T\}}
e^{(r_S-r)(T_{1}-t)}\left[\theta(X_{T_{1}})\max\left\{1,U(T_{1},X_{T_{1}})\right\}
+(1-\theta(X_{T_{1}}))R_{T_{1}}\right]\right.\nonumber\\
& \ \ \ \ \ \ \ \ \ \ \ \ \ \ \ \ \ \ \ \ \ \ \ \
\left.\left.+\mathbf{1}_{\{ T_{1}\geq T\}}
e^{(r_S-r)(T-t)}\min\left\{1, X_T/(1+l_T)\right\}\right]\right\}.
\end{align}

By using the distribution of the first arrival time $T_{1}$, and
applying the Feynman-Kac formula, we derive the following PDE
representation for the value function $U(t,x)$ under the Cox
maturity structure.

\begin{proposition}\label{Feynman-Kac3}
Suppose that Assumptions \ref{Assumption1}, \ref{Assumption2},
\ref{Assumption3} and \ref{Assumption4} are satisfied. Then the
value function $U(t,x)$ satisfies the following semi-linear PDE
Dirichlet problem on $[0,T]\times[\beta l_t,\infty)$:
\begin{equation}\label{PenalizedEquMaturity}
\left\{
\begin{array}{l}
\frac{\partial U}{\partial t}+\mathcal{L}U+(r_S-r-g(x))U\\[+0.4cm]
\ \ \ \ +g(x)
[\theta(x)\max\{1,U\}+(1-\theta(x)) \alpha x/(1+l_t)]=0\\[0.4cm]
U(t,\beta l_t)=\alpha \beta l_t/(1+l_t)\\[0.4cm]
U(T,x)=\min \left\{ 1,  x/(1+l_T)\right\}.%
\end{array}%
\right.
\end{equation}%
\end{proposition}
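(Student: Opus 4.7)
The plan is to reduce the dynamic programming equation (\ref{DynamicProgrammingEquationMaturity}) to a stopped Feynman--Kac representation by integrating the first Cox arrival time $T_1$ out of the right-hand side, and then to invoke Feynman--Kac to obtain the semi-linear PDE. The crucial fact provided by the canonical construction in Assumption \ref{Assumption4} is that, conditional on the Brownian filtration $\mathcal{F}_T$ (equivalently, on the path of $X$), the first arrival $T_1$ with $T_0=t$ has conditional survival function $\exp(-\int_t^s g(X_u)\,du)$ and conditional density $g(X_s)\exp(-\int_t^s g(X_u)\,du)$ on $[t,\infty)$. Since $\tau^{Ins}$ is $\mathcal{F}_T$-measurable, I would apply the tower property by first conditioning on $\mathcal{F}_T$ under $\mathbf{Q}$ and pathwise-integrating $T_1$ out of every indicator in (\ref{DynamicProgrammingEquationMaturity}).

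Writing $H(s,x):=\theta(x)\max\{1,U(s,x)\}+(1-\theta(x))\,\alpha x/(1+l_s)$ for the rollover-date payoff, this conditioning collapses the two $\mathbf{1}_{\{t<T_1<\cdot\}}$ contributions into a single pathwise Lebesgue integral against $g(X_s)\,e^{-\int_t^s g(X_u)du}\,ds$, and the two $\mathbf{1}_{\{T_1\geq\cdot\}}$ contributions into survival-probability prefactors, giving
\begin{align*}
U(t,x) = \mathbf{E}_t^x\bigg[& \int_t^{\tau^{Ins}\wedge T}\!\! g(X_s)\,e^{(r_S-r)(s-t)-\int_t^s g(X_u)\,du}\,H(s,X_s)\,ds \\
& + \mathbf{1}_{\{\tau^{Ins}<T\}}\, e^{(r_S-r)(\tau^{Ins}-t)-\int_t^{\tau^{Ins}} g(X_u)\,du}\, R_{\tau^{Ins}} \\
& + \mathbf{1}_{\{\tau^{Ins}\geq T\}}\, e^{(r_S-r)(T-t)-\int_t^T g(X_u)\,du}\, \min\big\{1,\, X_T/(1+l_T)\big\}\bigg].
\end{align*}
This is exactly a stopped Feynman--Kac representation with effective discount rate $c(x)=g(x)+r-r_S$, running source $g(x)H(x,U)$, boundary payoff $R_{\tau^{Ins}}=\alpha\beta l_{\tau^{Ins}}/(1+l_{\tau^{Ins}})$ on the insolvency curve $\{X=\beta l_t\}$, and terminal payoff $\min\{1,x/(1+l_T)\}$ at $T$. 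Applying Feynman--Kac to this representation produces exactly (\ref{PenalizedEquMaturity}) together with the stated boundary and terminal conditions.

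The main obstacle is the semi-linearity introduced by $g(x)\theta(x)\max\{1,U\}$, which rules out a direct appeal to the linear Feynman--Kac theorem. The cleanest route is a verification argument: one first shows that (\ref{PenalizedEquMaturity}) admits a sufficiently regular solution $\widehat U$, obtainable via Picard iteration on the integral operator since $\max\{1,\cdot\}$ is globally Lipschitz with Lipschitz constant $1$; then applying It\^o's formula to $e^{(r_S-r)(s-t)-\int_t^s g(X_u)du}\widehat U(s,X_s)$ between $t$ and $\tau^{Ins}\wedge T$ reproduces the representation above, forcing $U=\widehat U$. Regularity of $\widehat U$ up to the insolvency boundary $x=\beta l_t$ is standard for the log-normal generator $\mathcal{L}$ under Assumption \ref{Assumption3}; the only mild complication is the contact set $\{U=1\}$ across which $\max\{1,U\}$ kinks, but the kink sits in a lower-order term and does not obstruct either a viscosity or (with slightly more work) a classical interpretation of the PDE.
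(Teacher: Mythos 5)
Your proposal matches the paper's proof in essence: both integrate out the first Cox arrival $T_1$ against the conditional survival/density given the Brownian filtration (the paper does this via the hazard-process formulas of Lemma \ref{basiclemma}, you do it directly via the tower property), arrive at exactly the same stopped Feynman--Kac representation with effective discount $r_S-r-g(x)$ and source $g(x)H$, and then invoke Feynman--Kac. Your extra remarks on handling the semi-linear term $g\theta\max\{1,U\}$ by a Lipschitz fixed-point plus verification argument are a sound elaboration of a step the paper leaves implicit ("similar to Proposition \ref{Feynman-Kac1}"), but they do not change the route.
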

\begin{proof} See Appendix \ref{app E}.
\end{proof}\\

In Appendix \ref{app B}, we provide a numerical algorithm to
approximate the solution of the above PDE
(\ref{PenalizedEquMaturity}). In the rest of this section, we show
that PDE (\ref{PenalizedEquMaturity}) implies a unique threshold for
the representative short-term creditor, i.e., there exists a unique
\emph{debt run barrier} $D_t^{Run}$ such that she will run on the
firm whenever both the firm's asset value falls below such a barrier
and her contract expires at some maturity $T_n$. Thus, the debt run
time in our model is characterized endogenously by the following
first-passage-time
\begin{equation*}
\tau^{Run}=\inf\{T_n: X_{T_n}\leq x^*(T_n)\}\wedge T,
\end{equation*}
where $x^*(t)$ is the threshold we shall derive in the remainder of
this section. Recall that $X_t=V_t/S_t$ is the ratio of the firm
fundamental over the short-term debt, so the debt run barrier
$D_t^{Run}$ is given as
$$D_t^{Run}=x^{*}(t)S_t=x^*(t)S_0e^{r_St}.$$

We derive a free-boundary problem to determine first the threshold
$x^*(t)$ and secondly the debt run barrier $D_t^{Run}$ based on the
semi-linear PDE (\ref{PenalizedEquMaturity}).
\begin{itemize}
\item[(i)] If $x>x^*(t)$, the representative short-term creditor will keep
lending her money to the firm because either the debt is not due yet
or if the debt is due she decides to roll over her funding. Her
value function $U(t,x)>1$, and (\ref{PenalizedEquMaturity}) reduces
to
\begin{equation}\label{ConEqu}
\frac{\partial U}{\partial
t}+\mathcal{L}U+(r_S-r-g(x))U+g(x)\theta(x)U+g(x)(1-\theta(x))
\alpha x/(1+l_t)=0.
\end{equation}
The third term in the above equation represents the creditor's
premium of the return, the fourth term represents the expected
effect of the rollover risk if the creditor rolls over her funding,
and the last term represents the expected effect of recovery risk
associated with a potential debt run.
\item[(ii)] If $x< x^*(t)$, the representative short-term creditor will run on the firm
if the debt is due. Her value function $U(t,x)<1$, and
(\ref{PenalizedEquMaturity}) reduces to
\begin{equation}\label{StoEqu}
\frac{\partial U}{\partial
t}+\mathcal{L}U+(r_S-r-g(x))U+g(x)\theta(x)+g(x)(1-\theta(x)) \alpha
x/(1+l_t)=0.
\end{equation}
While the third term and the last term in (\ref{StoEqu}) have the
same meanings as those in (\ref{ConEqu}), the fourth term captures
the expected effect of rollover risk from the representative
short-term creditor's own run.
\item[(iii)] Finally, by the continuity of $U(t,x)$, the creditor's value function $U(t,x)$
at the threshold $x^*(t)$ should be equal to $1$, and the following
smooth-pasting condition should be satisfied
$$U_{x+0}(t,x^*(t))=U_{x-0}(t,x^*(t)).$$
\end{itemize}

In summary, we obtain the following two-phase free-boundary problem
to  determine the threshold (i.e., the debt run barrier) of the
representative short-term creditor (so every short-term creditor).
\begin{proposition}\label{TheoremForThreshold}
Suppose that Assumptions \ref{Assumption1}, \ref{Assumption2},
\ref{Assumption3}, and \ref{Assumption4} are satisfied. Then the
debt run barrier $D^{Run}_t$ is given by
$$D^{Run}_t=x^*(t)S_0e^{r_St},$$
where $x^*(t)$ is the free-boundary of the following two-phase
free-boundary problem
\begin{equation}\label{TwoPhaseProblem}
\left\{
\begin{array}{ll}
\frac{\partial U}{\partial
t}+\mathcal{L}U+(r_S-r-g(x))U+g(x)\theta(x)U+g(x)(1-\theta(x))
\alpha x/(1+l_t)=0, & \text{for}
\ \ x>x^*(t),\\[0.4cm]
U(t,x)>1,&\text{for}\ \ x>x^*(t),\\[0.4cm]
U(t,x)=1,&\text{for}\ \ x=x^*(t),\\[0.4cm]
U_x(t,x)\ \text{is\ continuous},&\text{for}\ \ x=x^*(t),\\[0.4cm]
\frac{\partial U}{\partial
t}+\mathcal{L}U+(r_S-r-g(x))U+g(x)\theta(x)+g(x)(1-\theta(x))
\alpha x/(1+l_t)=0, & \text{for} \ \ \beta l_t<x<x^*(t),\\[0.4cm]
U(t,x)<1,&\text{for}\ \ \beta l_t<x<x^*(t),\\[0.4cm]
U(t,x)=\alpha \beta l_t/(1+l_t),&\text{for}\ \ x=\beta l_t,\\[0.4cm]
U(T,x)=\min\{1, x/(1+l_T)\}.
\end{array}%
\right.
\end{equation}%
\end{proposition}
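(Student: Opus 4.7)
The plan is to start from the semi-linear PDE (\ref{PenalizedEquMaturity}) for $U(t,x)$ and exploit the piecewise structure of the nonlinear term $g(x)\theta(x)\max\{1,U(t,x)\}$. Since $\max\{1,U\}=U$ on $\{U>1\}$ and $\max\{1,U\}=1$ on $\{U\le 1\}$, the single PDE automatically decomposes into the two linear PDEs listed in (\ref{TwoPhaseProblem}) on the corresponding regions, while the boundary data $U(t,\beta l_t)=\alpha\beta l_t/(1+l_t)$ and $U(T,x)=\min\{1,x/(1+l_T)\}$ are inherited directly from (\ref{PenalizedEquMaturity}). The task therefore reduces to producing a single threshold curve $x^{*}(t)$ separating these regions and verifying $C^{1}$-matching of $U$ across it.

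First I would show that $x\mapsto U(t,x)$ is continuous and strictly increasing on $[\beta l_t,\infty)$. Continuity is a consequence of Proposition~\ref{Feynman-Kac3}. For monotonicity I would use a pathwise coupling argument on (\ref{DynamicProgrammingEquationMaturity}): because the ratio process is a geometric Brownian motion, $X^{x_2}_{s}/X^{x_1}_{s}=x_2/x_1$ for $x_1<x_2$, so $\tau^{Ins}$ is a.s.\ later under the larger initial condition, the recovery rate is nondecreasing in $X$, and the aggregate bracketed payoff in (\ref{DynamicProgrammingEquationMaturity}) inherits monotonicity in $X$ from monotonicity of $\theta$ and of $u\mapsto\max\{1,u\}$ via the contraction-mapping structure underlying Proposition~\ref{Feynman-Kac3}. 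Define $x^{*}(t)=\inf\{x\geq \beta l_t : U(t,x)\geq 1\}$. The endpoint value $U(t,\beta l_t)=\alpha \beta l_t/(1+l_t)<1$ forces $x^{*}(t)>\beta l_t$, while for large $x$ a comparison with the solution of the linear PDE (\ref{ConEqu}) combined with the premium $r_S-r>0$ drives $U$ strictly above $1$, so $x^{*}(t)<\infty$. Strict monotonicity and continuity then make $x^{*}(t)$ unique and yield $U(t,x^{*}(t))=1$, with $U>1$ above and $U<1$ below; substituting $\max\{1,U\}=U$ and $\max\{1,U\}=1$ into (\ref{PenalizedEquMaturity}) produces precisely the two PDEs of (\ref{TwoPhaseProblem}).

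The main obstacle is the smooth-pasting condition $U_{x+0}(t,x^{*}(t))=U_{x-0}(t,x^{*}(t))$. Because $u\mapsto\max\{1,u\}$ is globally Lipschitz and the coefficients of $\mathcal{L}$ together with $g$ and $\theta$ are smooth on $(\beta l_t,\infty)$, standard interior parabolic regularity yields $U\in C^{1,2}$ in the interior of each phase and $U\in C^{1}$ up to $\{x=x^{*}(t)\}$ from either side. Suppose toward contradiction that $U_{x-0}(t_0,x^{*}(t_0))<U_{x+0}(t_0,x^{*}(t_0))$ at some $t_0$; then $U(t_0,\cdot)-1$ has a convex kink at $x^{*}(t_0)$, and a small perturbation of the terminal datum would produce a solution of (\ref{PenalizedEquMaturity}) crossing $U$ near $x^{*}(t_0)$, contradicting the comparison principle that underpins the monotonicity just established. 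The reverse strict inequality is ruled out analogously by the strong maximum principle applied to the difference of the two phase solutions across $\{x=x^{*}(t)\}$. This yields the required $C^{1}$-matching, completes the verification of (\ref{TwoPhaseProblem}), and gives $D^{Run}_{t}=x^{*}(t)S_{t}=x^{*}(t)S_{0}e^{r_S t}$ by the definition of the debt run barrier and the identity $X_{t}=V_{t}/S_{t}$.
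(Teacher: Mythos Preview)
Your decomposition of (\ref{PenalizedEquMaturity}) into the two linear phases and the identification of $x^{*}(t)$ via monotonicity of $x\mapsto U(t,x)$ follows the same route as the paper's informal derivation in items (i)--(iii) preceding the proposition; the extra work you do on strict monotonicity and finiteness of $x^{*}(t)$ is reasonable padding around what the paper simply asserts.

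Where you diverge is the smooth-pasting condition, and there you make the argument much harder than it needs to be. The paper's entire proof is one line: since the semi-linear PDE (\ref{PenalizedEquMaturity}) itself admits a unique classical solution, $U_x$ is continuous on the whole domain, and in particular across $\{x=x^{*}(t)\}$. The point is that $u\mapsto g(x)\theta(x)\max\{1,u\}$ is globally Lipschitz in $u$, so one obtains regularity for $U$ as a solution of the \emph{single} equation (\ref{PenalizedEquMaturity}) \emph{before} splitting into phases; the continuity of $U_x$ at the interface is then automatic rather than something to be verified by matching two separately constructed phase solutions.

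Your contradiction argument, by contrast, treats the two phases as if they were built independently and then glued, which is why you feel the need to rule out a kink. The specific mechanism you propose---perturbing the terminal datum to produce a ``crossing'' near $x^{*}(t_0)$, or applying the strong maximum principle to the difference of the two phase solutions---is vague as written and would require substantial additional justification to be made rigorous (for instance, it is not clear what equation the difference of the phase solutions satisfies across the interface, nor why a convex kink would actually violate a comparison principle for (\ref{PenalizedEquMaturity})). None of this is needed once you observe that $U$ is a classical solution of the undecomposed equation.
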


\begin{proof} We only need to prove the smooth-pasting condition, which is straightforward since PDE
(\ref{PenalizedEquMaturity}) admits a unique classical solution.
\end{proof}\\

{Similar to the case of the discrete tenor structure in section
\ref{Remark}, a debt run does not necessarily trigger the firm's
default. The firm will not default due to a debt run if the firm can
raise enough funding to pay off its short-term creditors who run on
the firm, and remain solvent at the debt run, i.e., if $\psi
V_{T_n}\geq S_{T_n}$ conditional on $V_{T_n}\geq D^{Ins}_{T_n}$.
Therefore, we define the firm's \emph{illiquidity barrier} as
$$D_{t}^{Ill}=\min\left\{D_{t}^{Run},
\max\{S_{t}/\psi,D_t^{Ins}\}\right\} \ \ \ \text{for}\ t\in[0,T).$$
However, such a barrier only acts at a sequence of Cox arrival times
$\{T_n\}_{n\geq 1}$. At each infinitesimal time interval $[t,t+dt]$
with probability $g(X_t) dt$, the short-term debt matures, and an
illiquidity default will happen if $V_t\leq D_t^{Ill}$. With
probability $1-g(X_t)dt$, the short-term debt does not mature yet,
so all of the short-term creditors are locked in, and an illiquidity
default will not occur even if $V_t\leq D_T^{Ill}$. We have a
similar relationship among the barriers as in the case of the
discrete tenor structure.}

\begin{theorem}\label{Theorem2} Suppose that Assumptions \ref{Assumption1},
\ref{Assumption2}, \ref{Assumption3}, and \ref{Assumption4} are
satisfied. Then at any time $t\in[0,T)$, the debt run barrier is
greater than or equal to the illiquidity barrier, while the latter
is greater than or equal to the insolvency barrier
$$D_{t}^{Ins}\leq D_{t}^{Ill}\leq D_{t}^{Run}\ \ \ \text{for}\
t\in[0,T).$$ The above relationship gives us the following four
possible scenarios at any rollover date~$T_n.$
 \begin{itemize}
 \item[(i)] $V_{T_n}\leq D_{T_n}^{Ins}$: Default due to insolvency;
 \item[(ii)] $D_{T_n}^{Ins}<V_{T_{n}}\leq D_{T_n}^{Ill}$: Debt run occurs
 and triggers a default due to illiquidity;
 \item[(iii)] $D_{T_n}^{Ill}<V_{T_{n}}\leq D_{T_n}^{Run}$: Debt run occurs,
 but no default caused by the run;
 \item[(iv)] $D_{T_n}^{Run}< V_{T_{n}}$: The creditor rolls over
 to the next maturity $T_{n+1}$.
 \end{itemize}
\end{theorem}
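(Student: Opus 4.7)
The plan is to mirror the structure of the proof of Theorem \ref{Theorem1}, adapting the discrete-comparison argument to the continuous, free-boundary setting supplied by Proposition \ref{TheoremForThreshold}. First, the inequality $D_t^{Ill} \leq D_t^{Run}$ is immediate from the definition $D_t^{Ill} = \min\{D_t^{Run},(S_t+L_t)/\psi\}$. For $D_t^{Ins} \leq D_t^{Ill}$ I would split into the two sub-inequalities $D_t^{Ins} \leq (S_t+L_t)/\psi$ and $D_t^{Ins} \leq D_t^{Run}$. The first is algebraic: Assumption \ref{Assumption3} gives $\beta \leq 1/l_t + 1$, so $D_t^{Ins} = S_t \beta l_t \leq S_t + L_t \leq (S_t+L_t)/\psi$ because $\psi \in (0,1)$.

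The second sub-inequality, $D_t^{Ins} \leq D_t^{Run}$, amounts to $\beta l_t \leq x^*(t)$ and is where the real work lies. Using the boundary condition in \eqref{TwoPhaseProblem}, $U(t,\beta l_t) = \alpha\beta l_t/(1+l_t)$, which is strictly less than $1$ since $\alpha \in (0,1)$ and $\beta l_t \leq 1+l_t$. As $x^*(t)$ is characterized in \eqref{TwoPhaseProblem} as the (unique) value at which $U(t,\cdot)$ crosses $1$ from below, monotonicity of $x \mapsto U(t,x)$ then forces $\beta l_t \leq x^*(t)$. Monotonicity itself can be obtained either via a standard comparison principle applied to the semi-linear PDE \eqref{PenalizedEquMaturity}, or more transparently from the stochastic representation \eqref{DynamicProgrammingEquationMaturity}: the terminal payoff $\min\{1, X_T/(1+l_T)\}$, the recovery rate $R_t = \min\{1, \alpha X_t/(1+l_t)\}$, and the survival probability $\theta(X_t)$ are each non-decreasing in $x$, and the monotone dependence of the ratio SDE on its initial condition propagates this through the dynamic programming recursion.

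Finally, the classification into the four scenarios at any rollover date $T_n$ follows directly from the barrier ordering together with the definitions: insolvency is triggered precisely when $V_{T_n} \leq D_{T_n}^{Ins}$; on the complement, a debt run occurs exactly when $V_{T_n} \leq D_{T_n}^{Run}$; within the run region the run is successful, causing an illiquidity default, exactly when $V_{T_n} \leq D_{T_n}^{Ill}$; and otherwise the firesale value $\psi V_{T_n}$ exceeds $S_{T_n}+L_{T_n}$, so the firm raises enough collateralized funding to pay off the running creditors and survives. The main obstacle is establishing the monotonicity of $U(t,\cdot)$ rigorously; once this is in hand, everything else reduces to algebraic manipulation of the three barrier definitions exactly as in Theorem \ref{Theorem1}.
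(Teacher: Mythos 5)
Your proof follows essentially the same route as the paper, which simply refers back to the inline argument given for Theorem \ref{Theorem1}: $D_t^{Ill}\le D_t^{Run}$ is built into the definition of $D_t^{Ill}$ as a minimum, $D_t^{Ins}\le D_t^{Run}$ follows because $U(t,\beta l_t)=\alpha\beta l_t/(1+l_t)<1=U(t,x^*(t))$ together with monotonicity of $U(t,\cdot)$, and $D_t^{Ins}\le D_t^{Ill}$ then comes from combining this with $\beta l_t S_t\le S_t+L_t\le (S_t+L_t)/\psi$. Your extra remarks on justifying monotonicity via a comparison principle or via the stochastic representation \eqref{DynamicProgrammingEquationMaturity} merely fill in a step the paper dismisses as ``obviously increasing.''
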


\begin{proof} The proof is essentially the same as the proof for
Theorem \ref{Theorem1}, so we omit it.
\end{proof}\\

Figure \ref{fig: simulation continuous} illustrates different
scenarios in our debt run model with the staggered tenor structure
of short-term debt and the uniform distribution on the proportion of
short-term creditors not rolling over their funding at each rollover
date. Here the intensity of the Cox process is chosen to be
$g(x)=0.4$. The dotted line shows the debt run barrier, the dashed
line the illiquidity barrier and the solid line the insolvency
barrier, all of which are continuous functions in this model
setting. The marked times $T_1$, $T_2$, and $T_3$ are one
realization of the arrival times of the Cox process, which are
smaller than the final date $T=10$. At the first rollover date $T_1$
all three asset value paths are above the debt run barrier. Hence,
all of the creditors roll over their contracts. The black asset
value path is above all of the three barriers at $T_2$, but falls
below the insolvency barrier before the third rollover date $T_3$.
Hence, the firm will default at that time point before $T_3$. At the
second rollover date $T_2$, the grey path falls below the debt run
barrier but is still above the illiquidity barrier. This means that
a debt run occurs at that date, but the firm is able to pay off its
creditors and survives. At the last rollover date $T_3$ the dark
black path is below the illiquidity barrier, which means that a debt
run occurs and the debt run actually triggers an illiquidity
default. Note that all three paths fall below the debt run and the
illiquidity barriers already much earlier in time. However, as these
times are not rollover dates for the representative short-term
creditor, she cannot withdraw her funding at these dates.

The figure also illustrates the relation between different barriers,
which has been theoretically proved in Theorem \ref{Theorem2}; the
debt run barrier is always greater than or equal to the illiquidity
barrier, which in turn is always greater than or equal to the
insolvency barrier.\\

\centerline{[Insert Figure \ref{fig: simulation continuous}
here.]}\vspace{0.4cm}

Finally, we show in Figure \ref{fig:distribution} the debt run
barrier $D_t^{Run}$ with different assumptions on the random
variable $\xi$, the representative short-term creditor's belief on
the proportion of short-term creditors not rolling over their
funding. The top line shows the debt run barrier when $\xi$ is
\emph{uniformly distributed}: $f(\cdot|X_{T_n})=1$. From the next
top line to the bottom line, they correspond to the debt run barrier
with $\xi$ following \emph{normal distribution} truncated on $[0,1]$
with mean $\mu=0.5$ and variance $\sigma^2=1,1/3,1/6,1/12$ on $T_n$.
That is, $\xi$ has the conditional density
$$f(x|X_{T_n})=f(x)=\displaystyle\frac{\frac{1}{\sigma}\phi(\frac{x-\mu}{\sigma})}{\Phi(\frac{1-\mu}{\sigma})-\Phi(\frac{-\mu}{\sigma})},$$
where $\phi$ and $\Phi$ are the density function and the cumulative
distribution function of a normal random variable $N(\mu,\sigma^2)$
respectively. The top line and the bottom line have the same mean
$0.5$ and the same variance $1/12$, which illustrates that the
uniform distribution assumption is more conservative for the
creditors compared to the normal distribution assumption. On the
other hand, for the normally distributed $\xi$ with the same mean,
the larger the variance is, the higher the debt run barrier. \\

\centerline{[Insert Figure \ref{fig:distribution} here.]}

\subsection{Comparison of the Discrete and the Staggered Tenor Structures}

{This section compares different debt tenor structures, i.e., the
discrete and the staggered tenor structures. We first calculate the
survival probabilities under both tenor structures. For the
calculation of the survival probability under the discrete tenor
structure setting, we refer to \cite{LiangLutkebohmertXiao}. The
calculation of the survival probability under the staggered tenor
structure is tricky, as the number of Cox arrival times happening
during the time interval $[0,T]$ is random. Inspired by the
recursive formula (\ref{DynamicProgrammingEquationMaturity}) for the
calculation of the value function, we also calculate the survival
probability in a recursive way. Let $P(t,x)$ be the corresponding
survival probability at time $t$ given the current ratio $X_t=x$.
Then on the event $\{T_1>t\}$,
\begin{align*}
P(t,x)=&\ \mathbf{E}_t^x\left\{\mathbf{1}_{\{t\leq\tau^{Ins}<T\}}\left[\mathbf{1}_{\{t<T_1<\tau^{Ins}\}}\mathbf{1}_{\{X_{T_1}\geq x^{Ill}(T_1)\}}P(T_1,X_{T_1})\right]\right.\\
&\ +\left.\mathbf{1}_{\{\tau^{Ins}\geq
T\}}\left[\mathbf{1}_{\{t<T_1<T\}}\mathbf{1}_{\{X_{T_1}\geq
x^{Ill}(T_1)\}}P(T_1,X_{T_1})+\mathbf{1}_{\{T_1\geq
T\}}\right]\right\},
\end{align*}
where
$x^{Ill}(T_1)=D_{T_1}^{Ill}/S_{T_1}=\min\{x^*(T_1),\max\{1/\psi,\beta
l_{T_1}\} \}$.
By Lemma \ref{basiclemma}, the survival probability $P(t,x)$ can be
calculated as
\begin{align*}
P(t,x)=&\ \mathbf{E}_t^{x}\left\{\int_t^{\tau^{Ins}\wedge T}
e^{-\int_t^{s}g(X_u)du}g(X_s)\left[\mathbf{1}_{\{X_{s}\geq x^{Ill}(s)\}}P(s,X_{s})\right]ds\right.\\
&\ +\left.\mathbf{1}_{\{\tau^{Ins}\geq
T\}}e^{-\int_t^{T}g(X_u)du}\right\}.
\end{align*}}

{Therefore, the Feynman-Kac formula gives the following semilinear
PDE representation for the survival probability $P(t,x)$:
\begin{equation}\label{PenalizedEquMaturity2}
\left\{
\begin{array}{l}
\frac{\partial P}{\partial t}+\mathcal{L}P-g(x)P+g(x)
\left[\mathbf{1}_{\{x\geq x^{Ill}(t)\}}P\right]=0\\[+0.4cm]
P(t,\beta l_t)=0\\[0.4cm]
P(T,x)=1.%
\end{array}%
\right.
\end{equation}%
Its solution can be numerically approximated in a similar way to the
numerical approximation for (\ref{PenalizedEquMaturity}) in Appendix
\ref{app B}. Given the survival probability $P(t,x)$, the default
probability can then be calculated as $1-P(t,x)$.\\}

In Figure \ref{fig:PDs discrete and staggered} we show the default
probabilities under both the discrete and the staggered tenor
structures with uniformly distributed $\xi$. The dashed-dotted line
is the default probability without taking rollover risk into
account, which corresponds to the default probability in the setting
of \cite{BlackCox}. Hence, the areas between the dashed-dotted line
and the other two lines represent the rollover risks induced by debt
runs under the staggered tenor structure and under the discrete
tenor structure, resp. The results show that the default probability
is increasing with increasing volatility as the asset value becomes
more risky. Furthermore, the figure supports the intuition that
replacing the discrete tenor structure by a staggered tenor
structure reduces liquidity risk.\\

\centerline{[Insert Figure \ref{fig:PDs discrete and staggered}
here.]}\vspace{0.4cm}

In Figures \ref{fig: PD11} and \ref{fig: PD5} we see a kink in the
default probability for the staggered tenor structure. This kink is
less noticeable in Figures \ref{fig: PD7} and \ref{fig: PD9} and
does not appear in the discrete tenor structure case. This indicates
that this feature is due to the different specifications in the
intensity function $g(x)$ of the Cox process {and the volatility
$\sigma$ of the firm fundamental}. For very low asset values
$\tau^{Ins}<T_1$ with high probability, the insolvency component
determines the profile of the default probability in the staggered
tenor structure case. For higher asset values $T_1$ can be smaller
than $\tau^{Ins}$. In this case there exists a critical asset value
such that creditors will very likely decide to withdraw when the
asset value is below this level and a run will most likely induce an
illiquidity default. This implies an almost flat default probability
for asset values below this critical level.
For higher asset values a debt run does not necessarily imply an illiquidity default and thus the default probability is monotonically decreasing for increasing asset value. The same feature is noticeable in Figure \ref{fig: PD6} which we will discuss below. \\


\centerline{[Insert Figure \ref{fig:PDs for different psi}
here.]}\vspace{0.4cm}

When creditors fear that the firm will be unable to repay their
debt, they will withdraw their funding simultaneously at a rollover
date and thereby, they might trigger an illiquidity induced default.
The key quantity  in our model that determines the creditors'
behavior is their beliefs on the survival probability of the debt
run $\theta$. When $\theta$ is close to one, creditors are
optimistic to get paid off. This is the case when either the firm's
asset value and the fire-sale rate are high or when the short-term
debt notional is very low. Figure \ref{fig:PDs for different psi}
shows
that the illiquidity component of the default probability
dramatically increases when fire-sale rate decreases. Thus,
creditors' might withdraw their funding because they have a very
pessimistic view on the firm's ability to repay them (low fire-sale
rate induces low $\theta$) although the firm's asset value might be
well above the insolvency barrier. This supports the idea that debt
runs can occur as a result of pure coordination failures where
$\theta$ can be interpreted as a coordination
parameter.\footnote{\cite{Arifovic2013} study how coordination
problems can affect the occurrence of bank runs in controlled
laboratory
environments.}\\


A natural question to ask is, what will happen with the discrete
tenor structure when the number of rollover dates increases to
infinity, meaning that creditors can decide to roll over or to
withdraw their funding at any time $t\in[0,T]$? Intuitively, one
would expect that with increasing rollover frequency, one should
approximate the staggered tenor structure model. However, there is
another important difference between the two debt tenor structures.
In the case of the discrete tenor structure we implicitly assume
that all creditors have the same rollover dates, whereas in the
staggered tenor structure model at each rollover date, corresponding
to a Cox arrival time, only a fraction of total debt is due.
Different short-term creditors hence have different rollover dates
in that situation.

In the following, we will first study the impact of the intensity
$\lambda_t=g(X_t)$ of the Cox process on the creditor's value
function. We assume the function $g(X_t)$ to be constant and thus
independent of the ratio process $X_t$. The intensity of the Cox
process not only specifies the creditor's rollover dates but also
affects the average duration of short-term debt. For $g(X_t)\equiv
g\in\mathbb{R}_+$ the average duration of debt is equal to $1/g$,
and in an infinitesimal time interval $[t,t+dt]$ a fraction $gdt$ of
debt is maturing. The larger is $g$, the more debt is maturing at
the same rollover date, and the larger is the rollover frequency of
short-term debt. Therefore, for large enough $g$ the staggered tenor
structure model and the discrete tenor structure model should result
in approximately the same value function $U(t,x)$ for the short-term
debt. This result is numerically validated in Figure~\ref{fig
comparison variable g}. The number of rollover dates in the discrete
tenor structure model is fixed at $N=1000$, and the intensity of the
Cox process in the staggered
tenor structure model varies from $g=0.2$ to $g=200$. This supports our previous discussion of the kink in the default probabilities for the staggered tenor structure visible in Figures \ref{fig: PD11} and \ref{fig: PD5} where $g(x)=0.2$. When increasing the intensity to $g(x)=0.4$ as in Figures \ref{fig: PD7} and \ref{fig: PD9} the above discussed effect becomes less prominent and the default probability in the staggered tenor structure and the discrete tenor structure case are much closer.\\

\centerline{[Insert Figure \ref{fig comparison variable g}
here.]}\vspace{0.4cm}

In section \ref{Remark}, we derived the debt run barrier for the
discrete tenor structure by determining the threshold ratio $x^*$
such that $U(t,x^{*}(t))=1$, i.e., the creditor is indifferent
between rolling over and withdrawing her funding. Similarly in
Proposition \ref{TheoremForThreshold}, we derived the debt run
barrier for the staggered tenor structure by solving the
free-boundary problem (\ref{TwoPhaseProblem}). Next, we will
investigate in Figure \ref{fig comparison variable g and N} the
impact of the Cox intensity $g$ and the rollover frequency $N$ on
the debt run barrier. The graphs show that the discrete tenor
structure model with high rollover frequency $N$ approximates the
staggered tenor structure model with large intensity $g(x)$.\\

\centerline{[Insert Figure \ref{fig comparison variable g and N}
here.]}

\section{Optimal Stochastic Control Formulation}\label{sec optimal control}

In this section we are concerned with whether the representative
short-term creditor's decision is optimal. Intuitively, since the
creditor's decision follows the DPP, her decision should be optimal.
The question then is, what is the corresponding optimal stochastic
control problem? To answer this question we first investigate the
case of the discrete tenor structure and then discuss the staggered
debt structure.

\subsection{Discrete Tenor Structure}
Let us first consider the case of the discrete tenor structure,
i.e., short-term debt maturities $\{T_{n}\}_{n\geq 1}$ are a
sequence of deterministic numbers. Recall that at each rollover date
$T_n$, the creditor believes that there is a probability
$(1-\theta(X_{T_n}))$ that the firm may default due to debt runs.
Let $T_{*}$ denote the time that the firm defaults due to a debt
run. Hence, $T_{*}$ is a random time taking value in
$\{T_{n}\}_{n\geq 1}$.

Let $\tau\in\{T_n\}_{n\geq 1}\backslash T_*$ be the time at which
the representative short-term creditor decides to withdraw her
funding and to run on the firm. This is an $\mathcal{F}_t$-stopping
time. We first consider the case $\{\tau^{Ins}< T\}$, i.e., the firm
fails due to insolvency before its project expires. If
$\tau<T_{*}\wedge \tau^{Ins}$, the creditor withdraws her funding
before an illiquidity or insolvency default happens. In this case
she will obtain the payoff
$$\mathbf{1}_{\{\tau<T_{*}\wedge \tau^{Ins}\}}e^{r_S \tau}.$$
If $T_{*}<\tau\wedge\tau^{Ins}$, the firm fails due to the debt run
before the creditor decides to withdraw her money and before an
insolvency happens. Hence, the creditor will obtain the payoff
$$\mathbf{1}_{\{T_{*}<\tau\wedge\tau^{Ins}\}}e^{r_ST_{*}}R_{T_{*}}.$$
Finally, if $\tau^{Ins}\leq T_{*}\wedge \tau$, the firm defaults due
to insolvency before the illiquidity default takes place and before
the creditor decides to withdraw her funding. Then, the creditor
will obtain the payoff
$$\mathbf{1}_{\{\tau^{Ins}\leq T_{*}\wedge \tau\}}e^{r_S\tau^{Ins}}R_{\tau^{Ins}}.$$
On the other hand, on the event $\{\tau^{Ins}\geq T\}$, i.e., no
insolvency happens before the project ends, the creditor will obtain
the payoff
$$\mathbf{1}_{\{\tau<T_{*}\wedge T\}}e^{r_S\tau}+
\mathbf{1}_{\{T_{*}<\tau\wedge T\}}e^{r_ST_{*}}R_{T_{*}}+
\mathbf{1}_{\{T\leq T_{*}\wedge \tau\}}e^{r_ST}\min\{1,
X_T/(1+l_T)\}.$$

Table \ref{tab payoff optimal control} summarizes the aggregate
payoff of the representative creditor.

\begin{table}[H]\small\caption{\small\textbf{Representative creditor's aggregate payoff}}\label{tab payoff optimal
control}
\begin{center}
\begin{tabular}{ccc}\toprule
Insolvency time $\tau^{Ins}$ &Decision time $\tau$ &   Payoff
\\\midrule
$\tau^{Ins}<T$& $\tau<T_{*}\wedge \tau^{Ins}$  &$e^{r_S \tau}\cdot 1$  \\
&$T_{*}<\tau\wedge\tau^{Ins}$   &$e^{r_ST_{*}}\cdot R_{T_{*}}$   \\
&$\tau^{Ins}\leq T_{*}\wedge \tau$  &$e^{r_S\tau^{Ins}}\cdot R_{\tau^{Ins}}$\\
\hline
$\tau^{Ins}\geq T$&$\tau<T_{*}\wedge \tau^{Ins}$  &$e^{r_S \tau}\cdot 1$  \\
&$T_{*}<\tau\wedge T$   &$e^{r_ST_{*}}\cdot R_{T_{*}}$   \\
& $T\leq T_{*}\wedge \tau$ &$e^{r_ST}\cdot \min\{1,
X_T/(1+l_T)\}$\\\bottomrule
\end{tabular}
\end{center}
\end{table}

For any $0\leq t\leq \hat{t}\leq T$ where $\hat{t}$ could be either
$\tau^{Ins}$ or $T$, we define the aggregate discounted payoff from
time $t$ to $\hat{t}$ as\footnote{Recall that $R_T=\min\{1,
X_T/(1+l_T)\}$ as defined in (\ref{R_T}).}
$$\mathcal{A}_{t,\hat{t}}=\mathbf{1}_{\{t<\tau<T_{*}\wedge \hat{t}\}}e^{(r_S-r)(\tau-t)}+
\mathbf{1}_{\{t<T_{*}<\tau\wedge \hat{t}
\}}e^{(r_S-r)(T_{*}-t)}R_{T_{*}}+ \mathbf{1}_{\{t<\hat{t}\leq
T_{*}\wedge \tau\}}e^{(r_S-r)(\hat{t}-t)} R_{\hat{t}}.$$ The
creditor will choose an optimal $\mathcal{F}_t$-stopping time to
maximize her expected payoff
\begin{align}\label{optimalstoppingtime1}
\sup_{\tau\in\{T_{n}\}_{n\geq 1}\backslash T_{*}}\mathbf{E}^x_0&
\left\{\mathbf{1}_{\{\tau^{Ins}<T\}}\cdot\mathcal{A}_{0,\tau^{Ins}}+\mathbf{1}_{\{
\tau^{Ins}\geq T\}}\cdot\mathcal{A}_{0,T}\right\}.
\end{align}
The following theorem states that the creditor's value function
$U(t,x)$ defined by the dynamic programming equations
(\ref{DynamicProgrammingEquation0}) and
(\ref{DynamicProgrammingEquation}) is indeed optimal.

\begin{theorem}\label{theoremoptimalstoppingtime1}
The value of the optimal stopping time problem
(\ref{optimalstoppingtime1}) is given by the value function $U(0,x)$
in the dynamic programming equation
(\ref{DynamicProgrammingEquation}). The optimal stopping time is
given by the earliest maturity date at which the firm fundamental
falls below the debt run barrier determined in section \ref{Remark},
i.e.,
$$\tau^{Run}=\inf\{T_n: V_{T_n}\leq D_{T_n}^{Run},\ n=0,1,\ldots, N\}\wedge T.$$
\end{theorem}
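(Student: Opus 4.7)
The plan is a verification argument based on backward induction on the rollover dates $T_0, T_1, \ldots, T_N$. For each $n = 0, 1, \ldots, N$, I would define the conditional value function
\[
J_n(x) = \sup_{\tau \in \{T_k\}_{k \geq n+1} \setminus T_*} \mathbf{E}^{x}_{T_n}\bigl[\mathbf{1}_{\{\tau^{Ins} < T\}} \mathcal{A}_{T_n, \tau^{Ins}} + \mathbf{1}_{\{\tau^{Ins} \geq T\}} \mathcal{A}_{T_n, T}\bigr],
\]
where the supremum is conditioned on the firm having survived all runs up to $T_n$ and on $X_{T_n} = x$. The goal is to prove $J_n(x) = U(T_n, x)$ by induction downwards from $n = N$; evaluating at $n = 0$ then yields the first claim of the theorem.

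For the base case $n = N$, every admissible $\tau$ satisfies $\tau \geq T_{N+1} > T$, so effectively no stopping takes place before $T$ and the expected payoff collapses to the right-hand side of (\ref{DynamicProgrammingEquation0}), giving $J_N = U(T_N, \cdot)$. For the inductive step, I would condition on $\mathcal{F}_{T_{n+1}}$ and apply the strong Markov property of $X$. Any admissible strategy splits across three mutually exclusive events: insolvency strikes first ($\tau^{Ins} < T_{n+1}$, yielding the forced payoff $R_{\tau^{Ins}}$); the firm survives at $T_{n+1}$ but a successful run occurs ($T_{n+1} = T_* < \tau^{Ins}$, yielding the forced payoff $R_{T_{n+1}}$ with conditional probability $1 - \theta(X_{T_{n+1}})$); or the firm survives without a run (probability $\theta(X_{T_{n+1}})$), in which case the creditor chooses between withdrawing (payoff $1$) and continuing, the latter worth $U(T_{n+1}, X_{T_{n+1}})$ by the induction hypothesis. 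Maximizing in the last case gives $\max\{1, U(T_{n+1}, X_{T_{n+1}})\}$, and assembling the three pieces reproduces exactly (\ref{DynamicProgrammingEquation}), so $J_n(x) = U(T_n, x)$.

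For the identification of the optimal stopping time, at each survival rollover date $T_n$ the optimal action is to withdraw precisely when $1 \geq U(T_n, X_{T_n})$. Since $U(T_n, \cdot)$ is monotonically increasing in $x$ (which follows from the Green's representation of Proposition \ref{prop Green's representation}), and since $x^*_{T_n}$ is by definition the unique point at which $U(T_n, x^*_{T_n}) = 1$, this criterion is equivalent to $X_{T_n} \leq x^*_{T_n}$, i.e., $V_{T_n} \leq D^{Run}_{T_n}$. The first rollover date at which this occurs, truncated at $T$, is precisely $\tau^{Run}$, giving the second assertion.

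The main obstacle is reconciling the admissibility constraint $\tau \neq T_*$ with the DPE-based decomposition. On the event $\{T_{n+1} = T_*\}$ the creditor is forbidden from stopping at $T_{n+1}$, yet the aggregate payoff $\mathcal{A}$ forces her to receive $R_{T_{n+1}}$ regardless of her action, so the constraint is non-binding there. Making this rigorous requires carefully splitting the conditional expectation into the sub-events $\{T_{n+1} = T_*\}$ and $\{T_{n+1} \neq T_*\}$ and using the tower property to fold the downstream decision into $J_{n+1}$, which is exactly the step that links the sup to the $\theta$-weighted average appearing in the DPE.
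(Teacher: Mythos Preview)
Your proposal is correct and follows essentially the same route as the paper: both arguments proceed by backward induction on the rollover dates, condition on $\mathcal{F}_{T_{n+1}}$, invoke the Markov property of $X$, and identify the resulting recursion with the dynamic programming equation (\ref{DynamicProgrammingEquation}). The only cosmetic difference is that the paper packages the ``stop now versus continue'' decision at $T_{n+1}$ into an auxiliary value function $\hat{V}(T_{n+1},x)=\theta(x)\max\{1,V(T_{n+1},x)\}+(1-\theta(x))R_{T_{n+1}}$, whereas you carry out the same decomposition inline by splitting on the three events; your final paragraph on reconciling the constraint $\tau\neq T_*$ is precisely what the relationship between $V$ and $\hat{V}$ encodes in the paper.
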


\begin{proof} See Appendix \ref{app C}.
\end{proof}

\subsection{Staggered Tenor Structure}

Next we consider the case of the staggered tenor structure, i.e.,
the maturities $\{T_{n}\}_{n\geq 1}$ are the arrival times of a Cox
process with intensity $g(X_t)$. Similar to the discrete tenor
structure, we define $T_{*}$ as the default time due to a debt run.
Since $T_{*}$ is chosen among the Cox arrival times
$\{T_{n}\}_{n\geq 1}$ with probability $(1-\theta(X_{T_n}))$, it is
well known that $T_{*}$ is the first arrival time of another Cox
process with intensity $g(X_t)(1-\theta(X_t))$. Let
$\tau\in\{T_{n}\}_{n\geq 1}\backslash T_{*}$ again denote the
rollover date at which the representative short-term creditor
decides to withdraw her funding and to run on the firm. This is a
$\mathcal{G}_t=\mathcal{F}_t\vee\mathcal{H}_t$-stopping time under
the staggered tenor structure with $\mathcal{H}_t=\sigma(\{T_1\leq
u\}:u\leq t)$, i.e., $\tau$ must be chosen from the arrival times of
the Cox process,

The representative short-term creditor will choose an optimal
$\mathcal{G}_t$-stopping time to maximize her expected payoff
\begin{align}\label{optimalstoppingtime2}
\sup_{\tau\in\{T_{n}\}_{n\geq 1}\backslash T_{*}}\mathbf{E}^x_0&
\left\{\mathbf{1}_{\{\tau^{Ins}<T\}}\cdot\mathcal{A}_{0,\tau^{Ins}}+\mathbf{1}_{\{
\tau^{Ins}\geq T\}}\cdot\mathcal{A}_{0,T}\right\}.
\end{align}
In contrast to the previous section on the discrete tenor structure,
the optimal stopping time problem (\ref{optimalstoppingtime2}) can
now only be stopped at the Cox random times $\{T_{n}\}_{n\geq 1
}\backslash T_{*}$. Hence, knowing only the Brownian filtration
$\{\mathcal{F}_t\}$ is certainly not enough to decide when to stop;
one has to know the additional filtration $\{\mathcal{H}_t\}$ from
the Cox process in order to determine when to stop. Similar to the
case of the discrete tenor structure, we can show that the solution
to this optimal stopping time problem is given by the dynamic
programming equation (\ref{DynamicProgrammingEquationMaturity}).

\begin{theorem}\label{theoremoptimalstoppingtime2}
The value of the optimal stopping time problem
(\ref{optimalstoppingtime2}) is given by the value function $U(0,x)$
in the dynamic programming equation
(\ref{DynamicProgrammingEquationMaturity}). The optimal stopping
time is given by the earliest maturity date such that the firm
fundamental falls below the debt run barrier in Proposition
\ref{TheoremForThreshold}:
$$\tau^{Run}=\inf\{T_n: V_{T_n}\leq D_{T_n}^{Run}\}\wedge T.$$
\end{theorem}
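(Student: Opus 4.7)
The plan is to mirror the verification argument used for Theorem \ref{theoremoptimalstoppingtime1} in Appendix \ref{app C}, with the semi-linear PDE (\ref{PenalizedEquMaturity}) playing the role of the recursive dynamic programming equation and with the extra twist that admissible stopping times are constrained to the arrival times of the Cox process. The starting point is to rewrite the payoff functional in a form amenable to It\^o calculus: using the compensator of the Cox process (i.e., the identity already invoked to pass from the recursive formulation to the PDE for $P(t,x)$ and for $U(t,x)$), any $\mathcal{G}_t$-stopping time $\tau$ taking values in $\{T_n\}_{n\geq 1}\setminus T_*$ can be represented as a randomized $\mathcal{F}_t$-stopping rule with intensity $g(X_t)$, in the spirit of the Krylov randomized stopping time technique mentioned in the introduction.

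For the upper bound $U(0,x)\geq \sup_\tau \mathbf{E}_0^x\{\cdots\}$, I would apply It\^o's formula to the discounted value process
\[
M_s = e^{(r_S-r)(s\wedge\rho-t)-\int_t^{s\wedge\rho} g(X_u)(1-\theta(X_u))\,du}\,U(s\wedge\rho,X_{s\wedge\rho})
\]
up to $\rho=\tau\wedge\tau^{Ins}\wedge T_*\wedge T$. The generator applied to $U$ produces precisely the left-hand side of (\ref{PenalizedEquMaturity}); the $\max\{1,U\}$ term contributes at each Cox arrival and, together with the recovery term $(1-\theta(x))\alpha x/(1+l_t)$, exactly matches the expected payoff increments in Table \ref{tab payoff optimal control} extended to the staggered setting. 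Dirichlet boundary values handle the events $\{\tau^{Ins}<T\}$ and $\{\tau^{Ins}\geq T\}$, and the constraint $U\geq \max\{1,U\}^{-}$ (equivalently $\max\{1,U\}\geq 1$) gives the required inequality when the creditor chooses to withdraw at some $T_n$. Taking expectations and using the martingale part of It\^o's decomposition yields the desired bound for any admissible $\tau$.

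For attainment, I would plug in $\tau^{Run}=\inf\{T_n:X_{T_n}\leq x^*(T_n)\}\wedge T$, where $x^*(t)$ is the free boundary from Proposition \ref{TheoremForThreshold}. In the region $\{x>x^*(t)\}$ the PDE (\ref{PenalizedEquMaturity}) reduces to (\ref{ConEqu}) (with $\max\{1,U\}=U$), so the It\^o computation gives equality along any Brownian trajectory prior to the first Cox arrival for which $X_{T_n}\leq x^*(T_n)$. At $\tau^{Run}$ the smooth-pasting condition $U(t,x^*(t))=1$ matches the withdrawal payoff exactly; before $\tau^{Run}$ insolvency or the terminal time may strike first, and the Dirichlet data encode the associated recovery payoffs. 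Hence $U(0,x)$ is attained by $\tau^{Run}$.

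The main obstacle is the handling of the constraint $\tau\in\{T_n\}\setminus T_*$ together with the nonlinearity $\max\{1,U\}$: one has to check that the randomization argument is consistent with the free-boundary structure and that the integrand produced by $\max\{1,U\}$ in the PDE is dominated by $\theta(x)+(1-\theta(x))\alpha x/(1+l_t)$ whenever the creditor does \emph{not} run, and equals it on $\{x\leq x^*(t)\}$. Since Proposition \ref{TheoremForThreshold} establishes a $C^1$ classical solution, the stochastic integral term in It\^o's formula is a true martingale after localization, so these verifications go through as in the discrete case; the proof is otherwise essentially identical to that of Theorem \ref{theoremoptimalstoppingtime1}, which is why the authors omit the details.
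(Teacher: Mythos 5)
Your proposal is a PDE verification argument via It\^o's formula, which is a genuinely different route from what the paper does. The paper's Appendix \ref{app D} proof of Theorem \ref{theoremoptimalstoppingtime2} does not touch the PDE (\ref{PenalizedEquMaturity}) at all: it mirrors the discrete-case proof in Appendix \ref{app C} by introducing the optimal-stopping value $V(t,x)$ and an auxiliary value $\hat V(t,x)$ where stopping at the starting time is also permitted, establishing the relation $\hat V(t,x)=\theta(x)\max\{1,V(t,x)\}+(1-\theta(x))R_t$, and then conditioning on $X_{T_1}$ together with the strong Markov property to show that $V$ satisfies exactly the functional recursion (\ref{DynamicProgrammingEquationMaturity}) defining $U$, whence $V=U$. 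So the paper's argument is a direct DPP derivation working with the expectation formula, while yours is a verification theorem working with the (semilinear, free-boundary) PDE. Both are legitimate in principle, and yours has the advantage of exhibiting directly the super/sub-martingale structure and clarifying why the PDE solution \emph{is} the value without tacitly relying on uniqueness of the fixed point of the recursion; the paper's argument is more probabilistic, avoids invoking $C^{1,2}$-regularity and localization, and is closer to the dynamic-programming framework set up earlier. You also overstate the similarity to the proof of Theorem \ref{theoremoptimalstoppingtime1}: that proof is likewise a DPP recursion, not a verification argument, so your route is different from both.

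A substantive issue in your sketch is the form of the discounted process $M_s$. You insert the factor $e^{-\int_t^{s\wedge\rho} g(X_u)(1-\theta(X_u))\,du}$ into the exponent, apparently to account for the probability of a forced stop at a failed run. But the PDE term $(r_S-r-g(x))U$ together with $g(x)[\theta(x)\max\{1,U\}+(1-\theta(x))R]$ arises from the \emph{compensator} of the Cox jump process, not from an additional discount: on the enlarged filtration, the creditor's value process jumps at each Cox arrival to $\max\{1,U\}$ with probability $\theta$ or to $R$ with probability $1-\theta$, and the compensator of this jump at rate $g$ contributes $g(x)\bigl[\theta(x)\max\{1,U\}+(1-\theta(x))R-U\bigr]$ to the drift. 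Your $M_s$, with only a partial discount and no jump component, is neither a martingale nor an obvious supermartingale, so the claim that ``the generator applied to $U$ produces precisely the left-hand side of (\ref{PenalizedEquMaturity})'' does not follow. To make the verification go through one should keep the plain growth factor $e^{(r_S-r)(s-t)}$, write the value process with explicit jumps at Cox arrivals, and use the compensator identity (in the spirit of Lemma \ref{basiclemma} and the calculations in Appendix \ref{app E}) to reduce the It\^o/compensator decomposition to the PDE. With that correction, and a careful check that $\max\{1,U\}\ge 1$ gives the supermartingale inequality for arbitrary admissible $\tau$ while equality holds when one stops precisely on $\{X_{T_n}\le x^*(T_n)\}$, the verification argument would be sound and would constitute an alternative, PDE-based proof of the theorem.
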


\begin{proof} See Appendix \ref{app D}.
\end{proof}

\subsection{Another Look at Default Mechanism}

{By deriving the debt run barrier and illiquidity barrier from the
DPP, together with the exogenous insolvency barrier, we obtain the
default mechanism in both Theorems \ref{Theorem1} and
\ref{Theorem2}. In this section, from the optimal stopping
representation of debt runs in both Theorems
\ref{theoremoptimalstoppingtime1} and
\ref{theoremoptimalstoppingtime2}, we can interpret the default
mechanism from the optimal stochastic control viewpoint as follows.
The representative creditor will choose an optimal rollover date to
withdraw her funding, i.e. to run on the firm. For the case of
discrete tenor structure, the creditor will choose an optimal
stopping time from a sequence of deterministic times. For the case
of staggered tenor structure, the creditor will choose an optimal
stopping time from a sequence of Cox arrival times. At each rollover
date, she can only make her decision if the firm is solvent up to
that date, and if the firm survives the debt run by other creditors
(based on her belief $\xi$). What we have shown is that the DPP used
to derive the debt run barrier strategy corresponds to a
non-standard optimal stopping time problem. Therein, the bankruptcy
time due to debt runs $T^*$ is based on the creditor's belief $\xi$,
so it is not necessarily the real bankruptcy time, and the creditor
can choose to run either before $T^*$ or after $T^*$. The creditor
will then decide her debt run barrier $D^{Run}$ based on her belief
$\xi$, or equivalently $T^*$. Since we consider the decision problem
of a representative short-term creditor, all of the short-term
creditors will run on the firm if $V_{T_n}\leq D^{Run}_{T_n}$.}

\section{Discussion and Conclusion}\label{SecConclusion}

In this paper, we provide a rigorous formulation for a class of
structural credit risk models that take into account not only
insolvency risk but also illiquidity risk due to possible debt runs.
We show that there exists a unique threshold strategy, i.e., a debt
run barrier for short-term creditors to decide when to withdraw
their funding. This allows us to decompose the total credit risk
into an illiquidity component based on the endogenous debt run
barrier and an insolvency component based on the exogenous
insolvency barrier.


%

The default mechanism in dynamic debt run models is mainly triggered
by creditors' runs as shown in \cite{MorrisShin}, \cite{HeXiongRun},
and \cite{LiangLutkebohmertXiao}. This is different from traditional
structural credit risk models where the default mechanism is usually
triggered by equity holders as they either exogenously set a default
barrier or endogenously determine an optimal default barrier.
\cite{Cheng} consider decision problems of both creditors
and equity holders in the dynamic debt run setting. In this paper,
we consider that the equity holders exogenously set the insolvency
barrier, while the creditors endogenously determine the debt run
barrier and the illiquidity barrier. On the other hand, most of
dynamic debt run models are based on the DPP, but up to now the
corresponding optimal stochastic control problem for the DPP in
dynamic debt run models has not been specified. In this paper, we
prove that the DPP is in fact derived from a non-standard optimal
stopping time problem with control constraints and we explicitly
state the associated optimal control problem. This may help us
better understand the default mechanism of debt runs.

In dynamic debt run models, one crucial assumption is the maturity
structure of short-term debt. Both \cite{HeXiongRun} and
\cite{Cheng} utilize the Poisson random maturity assumption to
capture the staggered tenor structure, whereas
\cite{LiangLutkebohmertXiao} assume a sequence of deterministic
rollover dates generalizing the two-period model of
\cite{MorrisShin}. In this paper, we consider both discrete and
staggered tenor structures. Moreover, we show that the two tenor
structures converge to each other when the rollover frequency goes
to infinity.

Finally, the representative short-term creditor's belief about other
creditors' current and future rollover decisions also characterizes
a dynamic debt run model. In \cite{MorrisShin} and
\cite{LiangLutkebohmertXiao} such a belief is modeled by a uniformly
distributed random variable. In this paper, we generalize this
assumption by modeling such a belief as a general random variable.
Furthermore, the impact of the creditors' future rollover decisions
are included by considering the dynamic programming equation of the
representative short-term creditor, which is in the same spirit as
\cite{HeXiongRun}. Notwithstanding, our model only takes account of
other creditors' rollover decisions on a representative creditor,
but not vice vera, by assuming her belief exogenously, because in
practice such a belief may depend on various factors that are not
present in the model such as monetary policy and the states of the
economy. Hence, in this sense our debt run model is an exogenous
model rather than an equilibrium model. The corresponding
equilibrium model could be more challenging, and is left for the
future research.

\section*{Acknowledgements}
\small{We thank the Editor-in-Chief, Ulrich Horst, a Co-Editor, an
Associate Editor, and a Referee for their valuable comments and
suggestions. The article was previously circulated under the title
\emph{A Continuous Time Structural Model for Insolvency, Recovery,
and Rollover Risks}. This work was supported by the Oxford-Man
Institute of Quantitative Finance, University of Oxford, and by the
Excellence Initiative through the project ``Pricing of Risk in
Incomplete Markets'' within the Institutional Strategy of the
University of Freiburg. The financial support is gratefully
acknowledged by the first and the second authors. Several helpful
comments and suggestions from Lishang Jiang, Yajun Xiao, and Qianzi
Zeng are very much appreciated. We also thank the participants at
the {\it Conference on Liquidity and Credit Risk} in Freiburg 2012,
the {\it INFORMS International Meeting} in Beijing 2012, the {\it
4th Berlin Workshop on Mathematical Finance for Young Researchers}
in Berlin 2012, the {\it 2013 International Conference on Financial
Engineering}, in Suzhou 2013, the {\it 6th Financial Risks
International Forum on Liquidity Risk} in Paris 2013, the {\it IMA
Conference on Mathematics in Finance} in Edinburgh 2013, the {\it
30th French Finance Association Conference} in Lyon 2013, and the
{\it European Financial Management Association 2013 Annual Meetings}
in Reading 2013, as well as seminar participants at the University
of Oxford, Imperial College, University of Texas at Austin, and
Tongji University for several insightful remarks.}


\appendix

\section{Appendix for Proofs}
\subsection{Proof of Proposition \ref{prop Green's representation}}\label{app A}

The proof is essentially the same as Lemma 3.2 and 3.3 in
\cite{LiangJiang}, so we only sketch it.

First, note that under the new coordinate $\bar x=x/(\beta l_t)$,
the PDEs (\ref{PenalizedEqu}) and (\ref{PenalizedEqu2}) become
$\mathcal{L}^vW_n=0$ on a regular domain
$[T_{n},T_{n+1}]\times[1,\infty)$. The Green's function
$\mathbb{G}(t,\bar{x};T_{n+1},\xi)$ for the operator $\mathcal{L}^v$
on $[T_{n},T_{n+1}]\times[1,\infty)$ is the solution to the
following PDE problem
\begin{equation}\label{Green}
\left\{
\begin{array}{l}
\mathcal{L}^v\mathbb{G}(t,\bar{x};T_{n+1},\xi)=0\\
\mathbb{G}|_{\bar{x}=1}=0\\
\mathbb{G}|_{t=T_{n+1}}=\delta(\bar{x}-\xi).%
\end{array}%
\right.
\end{equation}%
By making the transformation $y=\log(\bar{x}/\xi)$,
$\tau=T_{n+1}-t$, and
\begin{align*}
&\mathbb{G}(\tau,y;T_{n+1},\xi)\\
=&\ \exp\left\{\left[r_S-r-\frac{1}{2\sigma^2}
\left(r_V-r_L-\frac{\sigma^2}{2}\right)^2\right]\tau-\frac{1}{\sigma^2}\left(r_V-r_L-\frac{\sigma^2}{2}\right)y\right\}
\mathbb{H}(\tau,y;T_{n+1},\xi),
\end{align*}
it is easy to verify that $\mathbb{H}(\tau,y;T_{n+1},\xi)$ satisfies
a heat equation on the half plane. Its solution can be easily
obtained by the standard image method.

Next, given the Green's function $\mathbb{G}(\tau,y;T_{n+1},\xi)$,
we derive the solution to $\mathcal{L}^vW_n=0$ on the domain
$[T_{n},T_{n+1}]\times[1,\infty)$ with the boundary and terminal
data $P_n$ and $Q_n$ by applying integration by parts. Consider the
adjoint problem of (\ref{Green}) on $[t,T_{n+1}]\times[1,\infty)$
\begin{equation}\label{GreenAdjoint}
\left\{
\begin{array}{l}
\mathcal{\hat{L}}^v\mathbb{\hat{G}}(\eta,\xi;t,\bar{x})=0\\
\mathbb{\hat{G}}|_{\xi=1}=0\\
\mathbb{\hat{G}}|_{\eta=t}=\delta(\xi-\bar{x}),%
\end{array}%
\right.
\end{equation}%
where $\mathcal{\hat{L}}^v$ is the adjoint operator of
$\mathcal{L}^v$
$$\mathcal{\hat{L}}^v=-\frac{\partial}{\partial\eta}+\frac12\sigma^2\frac{\partial^2}{\partial\xi^2}\xi^2
-(r_V-r_L)\frac{\partial}{\partial\xi}\xi+(r_S-r).$$ Since
$W_n(\eta,\xi)$ satisfies $\mathcal{L}^vW_n=0$ and
$\hat{G}(\eta,\xi;t,\bar{x})$ satisfies the adjoint equation
$\mathcal{\hat{L}}^v\hat{G}(\eta,\xi;t,\bar{x})=0$, applying
integration by parts to the integral
$$
\int_{1}^{\infty}d\xi\int_{t+\epsilon}^{T_{n+1}-\epsilon}
[\hat{G}(\eta,\xi;t,\bar{x})\mathcal{L}^vW_n(\eta,\xi)
-W_n(\eta,\xi)\mathcal{\hat{L}}^v\hat{G}(\eta,\xi;t,\bar{x})]d\eta,
$$
and using the boundary and terminal data $P_n$ and $Q_n$ for
$W_n(\eta,\xi)$ will give us the Green's representation formula
(\ref{GreenFuntion}).\\

\subsection{Proof of Proposition \ref{Feynman-Kac3}}\label{app E}

We have the following property for the first arrival time (i.e., the
first short-term debt maturity) $T_1$, the proof of which can be
found for example in \cite{BieleckiRutkowski}.

\begin{lemma}\label{basiclemma}
The process $\Gamma$ defined by $\Gamma_t=\int_0^tg(X_s)ds$ for
$t\geq 0$ is an $\mathcal{F}_t$-hazard process associated with
$T_1$, that is,
$$\Gamma_t=-\log \mathbf{Q}(T_1>t|\mathcal{F}_t)=-\log\mathbf{Q}(T_1>t|\mathcal{F}_{\infty}).$$
Moreover, for any $\mathcal{F}_{t}$-adapted process $Y_t$ and
$\mathcal{F}_t$-stopping time $\tau$, on the event $\{T_1>t\}$,
\begin{equation}\label{basicformula1}
\mathbf{E}[\mathbf{1}_{\{T_1\geq
\tau\}}Y_{\tau}\left|\right.\mathcal{F}_t]
=\mathbf{E}\left[Y_{\tau}e^{-\int_t^{\tau}g(X_s)ds}\left|\right.\mathcal{F}_t\right].
\end{equation}
\begin{equation}\label{basicformula2}
\mathbf{E}\left[\mathbf{1}_{\{t<T_1<{\tau}\}}Y_{T_1}\left|\right.\mathcal{F}_t\right]
=\mathbf{E}\left[\int_t^{\tau}Y_{s}e^{-\int_t^sg(X_u)du}g(X_s)ds\left|\right.\mathcal{F}_t\right].
\end{equation}
\end{lemma}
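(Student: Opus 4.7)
The plan is to exploit the canonical construction $T_1 = \inf\{s \geq 0 : \Gamma_s \geq E_1\}$ with $E_1$ a unit-rate exponential random variable on $(\tilde\Omega, \tilde{\mathcal{F}}, \tilde{\mathbf{Q}})$ that, by construction, is independent of $\mathcal{F}_\infty$ under $\mathbf{Q}$. This is the key structural input: it lets me read off the $\mathcal{F}_\infty$-conditional law of $T_1$ in closed form, and then relate $\mathcal{G}_t$-conditional expectations to $\mathcal{F}_t$-conditional ones via the standard ``key lemma'' of intensity-based credit risk modeling.

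For the hazard process identity, I would compute
$$\mathbf{Q}(T_1 > t \mid \mathcal{F}_\infty) \;=\; \mathbf{Q}(E_1 > \Gamma_t \mid \mathcal{F}_\infty) \;=\; e^{-\Gamma_t},$$
using the $\mathcal{F}_\infty$-measurability of $\Gamma_t$ together with $E_1 \perp \mathcal{F}_\infty$. Since $\Gamma_t$ is already $\mathcal{F}_t$-measurable, the tower property gives $\mathbf{Q}(T_1 > t \mid \mathcal{F}_t) = \mathbf{E}[e^{-\Gamma_t}\mid\mathcal{F}_t] = e^{-\Gamma_t}$, establishing both equalities in the first assertion.

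For (\ref{basicformula1}), the idea is to first condition on the larger $\sigma$-algebra $\mathcal{F}_\infty$. Since $\tau$, $Y_\tau$ and $\Gamma_\tau$ are $\mathcal{F}_\infty$-measurable and $\{T_1 \geq \tau\} = \{E_1 \geq \Gamma_\tau\}$, independence yields
$$\mathbf{E}[\mathbf{1}_{\{T_1 \geq \tau\}} Y_\tau \mid \mathcal{F}_\infty] \;=\; Y_\tau\, e^{-\Gamma_\tau}.$$
The next step invokes the key lemma: on the survival event $\{T_1 > t\}$, a $\mathcal{G}_t$-conditional expectation of any $\mathcal{F}_\infty$-measurable integrand $Z$ collapses to
$$\mathbf{1}_{\{T_1>t\}}\,\mathbf{E}[Z \mid \mathcal{G}_t] \;=\; \mathbf{1}_{\{T_1>t\}}\,\frac{\mathbf{E}[Z\,\mathbf{1}_{\{T_1>t\}}\mid \mathcal{F}_t]}{\mathbf{Q}(T_1>t\mid\mathcal{F}_t)} \;=\; \mathbf{1}_{\{T_1>t\}}\, e^{\Gamma_t}\, \mathbf{E}[Z \mid \mathcal{F}_t],$$
where the second equality uses the hazard identity. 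Combining this with $Y_\tau e^{-\Gamma_\tau}$ and $e^{\Gamma_t - \Gamma_\tau} = e^{-\int_t^\tau g(X_s)\,ds}$ recovers (\ref{basicformula1}). For (\ref{basicformula2}) I would use the same scheme, but extract the $\mathcal{F}_\infty$-conditional density of $T_1$ explicitly: since $\Gamma_\cdot$ is absolutely continuous with derivative $g(X_s)$, the law of $T_1$ given $\mathcal{F}_\infty$ has density $g(X_s) e^{-\Gamma_s}\,ds$, so that
$$\mathbf{E}[\mathbf{1}_{\{t<T_1<\tau\}} Y_{T_1}\mid \mathcal{F}_\infty] \;=\; \int_t^\tau Y_s\, g(X_s)\, e^{-\Gamma_s}\, ds,$$
after which the same $\mathcal{G}_t$-to-$\mathcal{F}_t$ passage delivers (\ref{basicformula2}).

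The principal technical obstacle is the key lemma itself: one needs that any $\mathcal{G}_t$-measurable random variable restricted to $\{T_1 > t\}$ is of the form $\mathbf{1}_{\{T_1>t\}} F_t$ with $F_t$ an $\mathcal{F}_t$-measurable random variable, so that the Radon--Nikodym derivative of the restricted conditional law reduces to $1/\mathbf{Q}(T_1>t\mid\mathcal{F}_t) = e^{\Gamma_t}$. This is a routine monotone-class argument given the definition $\mathcal{H}_t = \sigma(\{T_1 \leq u\}: u\leq t)$, but it is the only nontrivial step; once established, (\ref{basicformula1}) and (\ref{basicformula2}) follow mechanically from the $\mathcal{F}_\infty$-conditioning computations above, which is why the authors relegate the proof to \cite{BieleckiRutkowski}.
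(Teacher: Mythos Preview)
The paper does not give its own proof but defers to \cite{BieleckiRutkowski}, and your sketch is precisely the standard canonical-construction argument found there: the hazard identity from independence of $E_1$ and $\mathcal{F}_\infty$, followed by the key lemma reducing $\mathcal{G}_t$-conditional expectations on $\{T_1>t\}$ to $\mathcal{F}_t$-conditional ones. Your implicit reading of the conditioning $\sigma$-algebra as $\mathcal{G}_t$ (rather than the $\mathcal{F}_t$ literally printed in the statement) is the correct one and matches how the lemma is actually applied in the paper, where the Cox clock is started at $t$ so that $T_1>t$ almost surely.
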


In the following, we employ the distribution of $T_1$ given by Lemma
\ref{basiclemma} to calculate
(\ref{DynamicProgrammingEquationMaturity}). For the first and the
third terms, by using (\ref{basicformula2}), we obtain
\begin{align*}
&\ \mathbf{E}_t^{x}\left\{\mathbf{1}_{\{t<T_1<\tau^{Ins}\wedge T\}}
e^{(r_S-r)(T_1-t)}\left[\theta(X_{T_{1}})\max\left\{1,U(T_{1},X_{T_{1}})\right\}+(1-\theta (X_{T_1}))R_{T_1}\right]\right\}\\
=&\ \mathbf{E}_t^{x}\left\{\int_t^{\tau^{Ins}\wedge
T}e^{\int_t^s(r_S-r-g(X_u))du}g(X_s)\right.\\
&\times\left.\left[\theta(X_s)\max\{1,U(s,X_s)\}+(1-\theta(X_s))
\alpha X_{s}/(1+l_s)\right]ds\right\}.
\end{align*}
For the second term, based on (\ref{basicformula1}), we obtain
\begin{align*}
&\ \mathbf{E}_t^x\left\{\mathbf{1}_{\{ T_1\geq\tau^{Ins}, t\leq
\tau^{Ins}<T\}}e^{(r_S-r)(\tau^{Ins}-t)}R_{\tau^{Ins}}\right\}\\
=&\ \mathbf{E}_t^x\left\{\mathbf{1}_{\{t\leq
\tau^{Ins}<T\}}e^{\int_t^{\tau^{Ins}}(r_S-r-g(X_u))du}\alpha \beta
l_{\tau^{Ins}}/(1+l_{\tau^{Ins}})\right\}.
\end{align*}
For the last term, by employing (\ref{basicformula1}) again, we
obtain
\begin{align*}
&\ \mathbf{E}_t^x\left\{\mathbf{1}_{\{ T_1\geq T,
\tau^{Ins}\geq T\}}e^{(r_S-r)(T-t)}\min\left\{1, X_T/(1+l_T)\right\}\right\}\\
=&\ \mathbf{E}_t^x\left\{\mathbf{1}_{\{ \tau^{Ins}\geq
T\}}e^{\int_t^{T}(r_S-r-g(X_u))du}\min\left\{1,
X_T/(1+l_T)\right\}\right\}.
\end{align*}
By combing the above three equalities, we finally derive
\begin{align*}
U(t,x)=&\ \mathbf{E}_t^{x}\left\{\int_t^{\tau^{Ins}\wedge
T}e^{\int_t^s(r_S-r-g(X_u))du}g(X_s)
\right.\nonumber\\
&\times\left[\theta(X_s)\max\{1,U(s,X_s)\}+(1-\theta(X_s)) \alpha X_{s}/(1+l_s)\right]ds.\nonumber\\
&+\mathbf{1}_{\{t\leq
\tau^{Ins}<T\}}e^{\int_t^{\tau^{Ins}}(r_S-r-g(X_u))du}\alpha \beta l_{\tau^{Ins}}/(1+l_{\tau^{Ins}})\nonumber\\[+0.2cm]
&+\left.\mathbf{1}_{\{ \tau^{Ins}\geq
T\}}e^{\int_t^{T}(r_S-r-g(X_u))du}\min\left\{1,
X_T/(1+l_T)\right\}\right\}.
\end{align*}

Then similar to Proposition \ref{Feynman-Kac1}, the Feynman-Kac
formula gives us the PDE representation for the value function
$U(t,x)$ under the Cox maturity structure as provided in Proposition
\ref{Feynman-Kac3}.\\

\subsection{Proof of Theorem \ref{theoremoptimalstoppingtime1}}\label{app C}

For $n=0,1,\ldots,N$, we consider a sequence of optimal stopping
time problems
$$V(T_{n},x)=\sup_{\tau\in\{T_{n+1},T_{n+2},\ldots\}\backslash T_{*}}\mathbf{E}^x_{T_n}
\left\{\mathbf{1}_{\{T_n\leq\tau^{Ins}<T\}}\cdot\mathcal{A}_{T_{n},\tau^{Ins}}+\mathbf{1}_{\{
\tau^{Ins}\geq T\}}\cdot\mathcal{A}_{{T_n},T}\right\},$$ where
$\tau$ is an $\mathcal{F}_t$-stopping time taking value in
$\{T_{n+1},T_{n+2},\ldots\}\backslash T_{*}$. Then the value of the
optimal stopping time problem (\ref{optimalstoppingtime1}) is given
by $V(0,x)$, and we want to show that $V(0,x)=U(0,x)$.

Obviously we have $V(T_{N},x)=U(T_{N},x)$, since there is no
optimization problem involved in $V(T_{N},x)$ which is
$$
V(T_{N},x)=
\mathbf{E}_{T_{N}}^x\left\{\mathbf{1}_{\{T_{N}\leq\tau^{Ins}<T\}}e^{(r_S-r)(\tau^{Ins}-T_{N})}R_{\tau^{Ins}}+
\mathbf{1}_{\{\tau^{Ins}\geq T\}}e^{(r_S-r)(T-T_{N})}R_{T}\right\}.
$$
The idea is to introduce a sequence of auxiliary optimal stopping
time problems whose optimal stopping times are also permitted to
stop at the initial time $T_n$.
$$\hat{V}(T_{n},x)=\sup_{\tau\in\{T_{n},T_{n+1},\ldots\}\backslash T_{*}}\mathbf{E}^x_{T_n}
\left\{\mathbf{1}_{\{T_n\leq\tau^{Ins}<T\}}\cdot\mathcal{A}_{T_{n},\tau^{Ins}}+\mathbf{1}_{\{
\tau^{Ins}\geq T\}}\cdot\mathcal{A}_{{T_n},T}\right\}.$$ We have the
following relationship between $\hat{V}$ and $V$ (see \cite{Liang}):
\begin{equation}\label{relationship}
\hat{V}(T_n,x)=\theta(x)
\max\{1,V(T_{n},x)\}+(1-\theta(x))R_{T_{n}},\ \ \ \text{for}\
n=0,1,\ldots, N.
\end{equation}
For $n=0,1,\ldots,N-1$, by taking conditional expectation on
$\mathcal{F}_{T_{n+1}}$ in $V(T_{n},x)$, we obtain
\begin{align*}
V(T_{n},x)=&\ \sup_{\tau}\mathbf{E}^x_{T_n}
\left\{\mathbf{E}\left[\underbrace{\mathbf{1}_{\{T_n\leq\tau^{Ins}<T\}}\cdot\mathcal{A}_{T_{n},\tau^{Ins}}+\mathbf{1}_{\{
\tau^{Ins}\geq T\}}\cdot\mathcal{A}_{{T_n},T}}_{\equiv I}|\mathcal{F}_{T_{n+1}}\right]\right\}\\[+0.2cm]
=&\ \sup_{\tau}\mathbf{E}^x_{T_n}
\left\{\mathbf{E}\left[\left(\mathbf{1}_{\{T_n\leq\tau^{Ins}<T_{n+1}\}}+\mathbf{1}_{\{\tau^{Ins}\geq T_{n+1}\}}\right)\times I|\mathcal{F}_{T_{n+1}}\right]\right\}\\[+0.2cm]
=&\ \sup_{\tau}\mathbf{E}^x_{T_n}
\left\{\mathbf{1}_{\{T_n\leq\tau^{Ins}<T_{n+1}\}}e^{(r_S-r)(\tau^{Ins}-T_n)}R_{\tau^{Ins}}\right.\\
&+\left.\mathbf{1}_{\{\tau^{Ins}\geq
T_{n+1}\}}e^{(r_S-r)(T_{n+1}-T_{n})}
\mathbf{E}\left[\mathbf{1}_{\{T_{n+1}\leq\tau^{Ins}<T\}}\cdot\mathcal{A}_{T_{n+1},\tau^{Ins}}+\mathbf{1}_{\{
\tau^{Ins}\geq T\}}\cdot\mathcal{A}_{{T_{n+1}},T}|\mathcal{F}_{T_{n+1}}\right]\right\}\\[+0.2cm]
=&\ \sup_{\tau}\mathbf{E}^x_{T_n}
\left\{\mathbf{1}_{\{T_n\leq\tau^{Ins}<T_{n+1}\}}e^{(r_S-r)(\tau^{Ins}-T_n)}R_{\tau^{Ins}}\right.\\
&+\left.\mathbf{1}_{\{\tau^{Ins}\geq
T_{n+1}\}}e^{(r_S-r)(T_{n+1}-T_{n})}
\mathbf{E}_{T_{n+1}}^{X_{T_{n+1}}}\left[\mathbf{1}_{\{T_{n+1}\leq\tau^{Ins}<T\}}\cdot\mathcal{A}_{T_{n+1},\tau^{Ins}}+\mathbf{1}_{\{
\tau^{Ins}\geq T\}}\cdot\mathcal{A}_{{T_{n+1}},T}\right]\right\},
\end{align*}
where we used the Markovian property for $X$ in the last equality.
Note that the first term in the bracket does not involve the
stopping time $\tau$, so the supremum over $\tau$ only takes action
on the second term and $V(T_{n},x)$ is equal to
\begin{align*}
\ \mathbf{E}^x_{T_n}
&\left\{\mathbf{1}_{\{T_n\leq\tau^{Ins}<T_{n+1}\}}e^{(r_S-r)(\tau^{Ins}-T_n)}R_{\tau^{Ins}}+\mathbf{1}_{\{\tau^{Ins}\geq T_{n+1}\}}e^{(r_S-r)(T_{n+1}-T_{n})}\right.\\
&\times\left. \sup_{\tau\in\{T_{n+1},T_{n+2},\ldots\}\backslash
T_{*}}\mathbf{E}_{T_{n+1}}^{X_{T_{n+1}}}\left[\mathbf{1}_{\{T_{n+1}\leq\tau^{Ins}<T\}}\cdot\mathcal{A}_{T_{n+1},\tau^{Ins}}+\mathbf{1}_{\{
\tau^{Ins}\geq T\}}\cdot\mathcal{A}_{{T_{n+1}},T}\right]\right\},
\end{align*}
which, according to the definition of $\hat{V}$, is
$$\mathbf{E}^x_{T_n}
\left\{\mathbf{1}_{\{T_n\leq\tau^{Ins}<T_{n+1}\}}e^{(r_S-r)(\tau^{Ins}-T_n)}R_{\tau^{Ins}}+\mathbf{1}_{\{\tau^{Ins}\geq
T_{n+1}\}}e^{(r_S-r)(T_{n+1}-T_{n})}\hat{V}(T_{n+1},X_{T_{n+1}})\right\}.$$
By the relationship (\ref{relationship}), we obtain the recursive
formulation for $V(T_{n},x)$:
\begin{align*}
&\ V(T_{n},x)\\
=&\ \mathbf{E}_{T_n}^{x}\left\{\mathbf{1}_{\{T_n\leq
\tau^{Ins}<T_{n+1}\}}
e^{(r_S-r)(\tau^{Ins}-T_n)}R_{\tau^{Ins}}\right.\\
&+ \left.\mathbf{1}_{\{ \tau^{Ins}\geq
T_{n+1}\}}e^{(r_S-r)(T_{n+1}-t)}
\left[\theta(X_{T_{n+1}})\max\{1,V(T_{n+1},X_{T_{n+1}})\}+
(1-\theta(X_{T_{n+1}}))R_{T_{n+1}}\right]\right\}.
\end{align*}
We recognize that the above equation is just the dynamic programming
equation for $U(t,x)$ in (\ref{DynamicProgrammingEquation}). Since
we have already proved $V(T_{N},x)=U(T_{N},x)$, by proceeding
backwards we obtain
$V(0,x)=U(0,x)$.\\

\subsection{Proof of Theorem \ref{theoremoptimalstoppingtime2}} \label{app D}

The proof is essentially the same as the proof for Theorem
\ref{theoremoptimalstoppingtime1}. For any $t\geq 0$, by letting $X$
start from $X_t=x$ and $\{T_n\}_{n\geq 0}$ start from $T_0=t$, we
consider a family of optimal stopping problems
$$V(t,x)=\sup_{\tau\in\{T_{n}\}_{n\geq 1}\backslash T_{*}}\mathbf{E}^x_t
\left\{\mathbf{1}_{\{t\leq
\tau^{Ins}<T\}}\cdot\mathcal{A}_{t,\tau^{Ins}}+\mathbf{1}_{\{
\tau^{Ins}\geq T\}}\cdot\mathcal{A}_{t,T}\right\},$$ where $\tau$ is
a $\mathcal{G}_t$-stopping time taking value in $\{T_{n}\}_{n\geq
1}\backslash T_{*}$. Therefore, $\tau$ is not allowed to stop at the
starting time $t$. The value of the optimal stopping time problem
(\ref{optimalstoppingtime2}) is given by $V(0,x)$, and we want to
prove that $V(0,x)=U(0,x)$.\\
Similarly to the case of the discrete tenor structure, we introduce
a family of auxiliary optimal stopping time problems where the
optimal stopping times are also allowed to stop at the starting time
$t$
$$\hat{V}(t,x)=\sup_{\tau\in\{T_{n}\}_{n\geq 0}\backslash T_{*}}\mathbf{E}^x_t
\left\{\mathbf{1}_{\{t\leq
\tau^{Ins}<T\}}\cdot\mathcal{A}_{t,\tau^{Ins}}+\mathbf{1}_{\{
\tau^{Ins}\geq T\}}\cdot\mathcal{A}_{t,T}\right\}.$$ We have the
following relationship between $\hat{V}$ and $V$ (see \cite{Liang}):
\begin{equation}\label{relationship2}
\hat{V}(t,x)=\theta(x)\max\{1,V(t,x)\}+(1-\theta(x))R_{t}\ \ \
\text{for}\ t\in[0,T).
\end{equation}
Taking expectations conditional on $X_{T_{1}}$ in $V(t,x)$ and using
the strong Markov property for $X$, we obtain for any $t\in[0,T)$
\begin{align*}
V(t,x)=&\
\sup_{\tau}\mathbf{E}_t^x\left\{\mathbf{E}\left[(\mathbf{1}_{\{t<T_1<\tau^{Ins}\}}+\mathbf{1}_{\{T_1\geq\tau^{Ins}\}})
\mathbf{1}_{\{t\leq\tau^{Ins}<T\}}\cdot\mathcal{A}_{t,\tau^{Ins}}\right.\right.\\
&+\left.\left.(\mathbf{1}_{\{t<T_1<T\}}+\mathbf{1}_{\{T_1\geq T\}})
\mathbf{1}_{\{\tau^{Ins}\geq
T\}}\cdot\mathcal{A}_{t,T}|X_{T_1}\right]\right\}\\
=&\sup_{\tau}\mathbf{E}_t^x\left\{\mathbf{E}\left[\mathbf{1}_{\{t\leq\tau^{Ins}<T\}}\mathbf{1}_{\{t<T_1<\tau^{Ins}\}}\mathcal{A}_{t,\tau^{Ins}}|X_{T_1}\right]
+\mathbf{1}_{\{t\leq\tau^{Ins}<T\}}\mathbf{1}_{\{T_1\geq\tau^{Ins}\}}e^{(r_S-r)(\tau^{Ins}-t)}R_{\tau^{Ins}}\right.\\
&+\left.\mathbf{E}\left[\mathbf{1}_{\{\tau^{Ins}\geq
T\}}\mathbf{1}_{\{t<T_1<T\}}\mathcal{A}_{t,T}|X_{T_1}\right]+\mathbf{1}_{\{\tau^{Ins}\geq
T\}}\mathbf{1}_{\{T_1\geq T\}}e^{(r_S-r)(T-t)}R_{T}\right\}\\
=&\sup_{\tau}\mathbf{E}_t^x\left\{\mathbf{1}_{\{t<T_1<\tau^{Ins}\wedge
T\}}e^{(r_S-r)(T_1-t)}\mathbf{E}\left[\mathbf{1}_{\{T_1\leq\tau^{Ins}<T\}}\mathcal{A}_{T_1,\tau^{Ins}}
+\mathbf{1}_{\{\tau^{Ins}\geq
T\}}\mathcal{A}_{T_1,T}|X_{T_1}\right]\right.\\
&+\left.\mathbf{1}_{\{t\leq\tau^{Ins}<T\}}\mathbf{1}_{\{T_1\geq\tau^{Ins}\}}e^{(r_S-r)(\tau^{Ins}-t)}R_{\tau^{Ins}}+
\mathbf{1}_{\{\tau^{Ins}\geq T\}}\mathbf{1}_{\{T_1\geq
T\}}e^{(r_S-r)(T-t)}R_{T}\right\}\\
=&\sup_{\tau}\mathbf{E}_t^x\left\{\mathbf{1}_{\{t<T_1<\tau^{Ins}\wedge
T\}}e^{(r_S-r)(T_1-t)}\mathbf{E}_{T_1}^{X_{T_1}}\left[\mathbf{1}_{\{T_1\leq\tau^{Ins}<T\}}\mathcal{A}_{T_1,\tau^{Ins}}
+\mathbf{1}_{\{\tau^{Ins}\geq
T\}}\mathcal{A}_{T_1,T}\right]\right.\\
&+\left.\mathbf{1}_{\{t\leq\tau^{Ins}<T\}}\mathbf{1}_{\{T_1\geq\tau^{Ins}\}}e^{(r_S-r)(\tau^{Ins}-t)}R_{\tau^{Ins}}+
\mathbf{1}_{\{\tau^{Ins}\geq T\}}\mathbf{1}_{\{T_1\geq
T\}}e^{(r_S-r)(T-t)}R_{T}\right\},
\end{align*}
which by the definition of $\hat{V}$ is equal to
\begin{align*}
\mathbf{E}_t^x&\left\{\mathbf{1}_{\{t<T_1<\tau^{Ins}\wedge
T\}}e^{(r_S-r)(T_1-t)}\hat{V}(T_1,X_{T_1})\right.\\
&+\left.\mathbf{1}_{\{t\leq\tau^{Ins}<T\}}\mathbf{1}_{\{T_1\geq\tau^{Ins}\}}e^{(r_S-r)(\tau^{Ins}-t)}R_{\tau^{Ins}}+
\mathbf{1}_{\{\tau^{Ins}\geq T\}}\mathbf{1}_{\{T_1\geq
T\}}e^{(r_S-r)(T-t)}R_{T}\right\}.
\end{align*}
The result then follows from the relationship
(\ref{relationship2}).\\

\section{Appendix for the Numerical Approximation of the Solution to PDE (\ref{PenalizedEquMaturity})}\label{app B}

We first transform PDE (\ref{PenalizedEquMaturity}) by defining $y =
\log (x/\beta l_t),$ $\tau = T - t$ and $u(\tau,y) = U(t,x)$. Then
PDE (\ref{PenalizedEquMaturity}) reduces to
\begin{equation}\label{pde1}
    \fs{\pa u}{\pa \tau} = \fs{1}{2}\sigma^2\fs{\pa^2 u}{\pa
y^2}+(r_V-r_L-\frac{1}{2}\sigma^2)\fs{\pa u}{\pa y}+ (r_s-r-\zeta)u
+ \eta\max\{1,u\}+\kappa,
\end{equation}
where
\begin{align*}
\zeta(\tau,y) &= g(x);\ \eta (\tau,y) = g(x)\theta(x);\
\kappa(\tau,y) =g(x)(1-\theta(x)) \alpha x/(1+l_t),
\end{align*}
with boundary and initial conditions
\begin{align*}
    u(\tau,0) &= \alpha \beta l_{T-\tau}/(1+l_{T-\tau}) = P(\tau) = P;\\
    u(0,y) &= \min\{1,e^y\beta l_T/(1+l_T)\} = \Phi(y) = \Phi.
\end{align*}

In the following, we derive the implicit finite difference equation
for PDE (\ref{pde1}). Let $\Delta\tau$ denote the step size between
two updates of the value function $u$ in the time dimension.
Similarly, $\Delta y$ denotes the step size between grid points in
the space dimension of the value function $u$. The relevant range of
two variables is taken to be
$$(\tau,y)\in[0,T]\times[0,\bar{y}],$$
where $\bar{y}$ is a large constant such that realization of $y$
outside the region $[0,\bar{y}]$ occurs with negligible probability.
At each grid point, we define
$$u_{j}^n=u(n\Delta\tau,j\Delta y);$$
and the implicit finite difference equation for $(\ref{pde1})$ is
\begin{align}\label{pde11}
    \fs{u^{n+1}_{j}-u^{n}_{j}}{\Delta\tau} =&\ \fs{1}{2}\sigma^2\fs{u^{n+1}_{j+1}-2u^{n+1}_j+u^{n+1}_{j-1}}{\Delta y^2}
    + (r_V-r_L-\frac{1}{2}\sigma^2)\fs{u^{n+1}_{j+1}-u^{n+1}_{j-1}}{2\Delta
    y}\nonumber\\
    &+(r_s-r-\zeta_j^{n+1})u^{n+1}_{j} +
    \eta_{j}^{n+1}\max\{1,u^{n+1}_j\}+{\kappa_j^{n}},
\end{align}
where $$\zeta^{n}_j=G(n\Delta\tau,j\Delta y);\
\eta^{n}_j=\eta(n\Delta \tau,j\Delta y);\
\kappa^{n}_j=F(n\Delta\tau,j\Delta y)$$ for $0\leq n\leq T/\Delta
\tau$ and $0\leq j\leq \bar{y}/\Delta  y$. The corresponding
boundary and initial conditions are
\begin{eqnarray*}
  u_{0} = P;\ u_{\bar{y}/\Delta y} = 0;\ u^0 = \Phi,
\end{eqnarray*}
where $P$ and $\Phi$, with abuse of notation, denote the vectors
containing the discrete values of the boundary and initial
conditions, respectively.

The implicit finite difference equation (\ref{pde11}) can be
rewritten as the following nonlinear algebraic equation:
\begin{equation}\label{newton}
    Au^{n+1}-(\eta^{n+1},\max\{1,u^{n+1}\}) = \overline{\kappa}^n,
\end{equation}
where $\overline{\kappa}^n = \fs{1}{\Delta \tau}u^n + \kappa^n -
[cP,0,0,\ldots,0]^{*}$, and $A$ is a tridiagonal matrix:
\begin{equation*}
    A=\left(
        \begin{array}{ccccc}
          a_1 & b & 0 & \ldots & 0 \\
          c & a_2 & b & \ldots & 0 \\
          0 & c & a_3 & \ldots & \ldots \\
          \ldots &\ldots& \ldots &\ldots& \ldots\\
          0 & 0 & \ldots & a_{\bar{y}/\Delta y-2} & b \\
          0 & 0 & \ldots & c & a_{\bar{y}/\Delta y-1} \\
        \end{array}
      \right)
\end{equation*}
with
\begin{align*}
a_j &= \fs{1}{\Delta\tau}+\fs{\sigma^2}{\Delta
y^2}-(r_S-r-\zeta^{n+1}_j);\\
b &= -\fs{1}{2}\fs{\sigma^2}{\Delta y^2}-\fs{1}{2\Delta
y}(r_V-r_L-\frac12\sigma^2),\\
c& = -\fs{1}{2}\fs{\sigma^2}{\Delta y^2}+\fs{1}{2\Delta
y}(r_V-r_L-\frac12\sigma^2).
\end{align*}

Finally, for $n=0,1,\ldots,T/\Delta\tau$, we use the standard Newton
method to solve the nonlinear algebraic equation (\ref{newton}) as
follows.
\begin{itemize}
  \item Set $v^{1} = u^n$;
  \item For $m=1,2,\ldots$, solve $v^{m+1}$ recursively by the corresponding linear
  equation for (\ref{newton})
$$Av^{m}-(\eta^{n+1},\max\{1,v^m\})-\bar{\kappa}^n+B^{n+1}_m(v^{m+1}-v^m)=0$$
until $\sup|v^{m+1}-v^m|<\epsilon$, where
  \begin{equation*}
    B^{n+1}_m = A - \eta^{n+1}\left(
              \begin{array}{ccccc}
                \mathbf{1}_{\{v_1^m>1\}} & 0 & \ldots & 0 & 0 \\
                0 & \mathbf{1}_{\{v_2^m>1\}} & 0 & \ldots & 0 \\
                0 & 0 & \ddots & 0 & 0 \\
                0 & 0 & 0 & 0 & \mathbf{1}_{\{v_{\bar{y}/\Delta y-1}^m>1\}} \\
              \end{array}
            \right)
  \end{equation*}
  \item Suppose the above loop runs $M$ times. Then set $u^{n+1}=v^M$.
\end{itemize}

\bibliographystyle{harvard}

\newpage
\setcounter{figure}{1}
\begin{center}
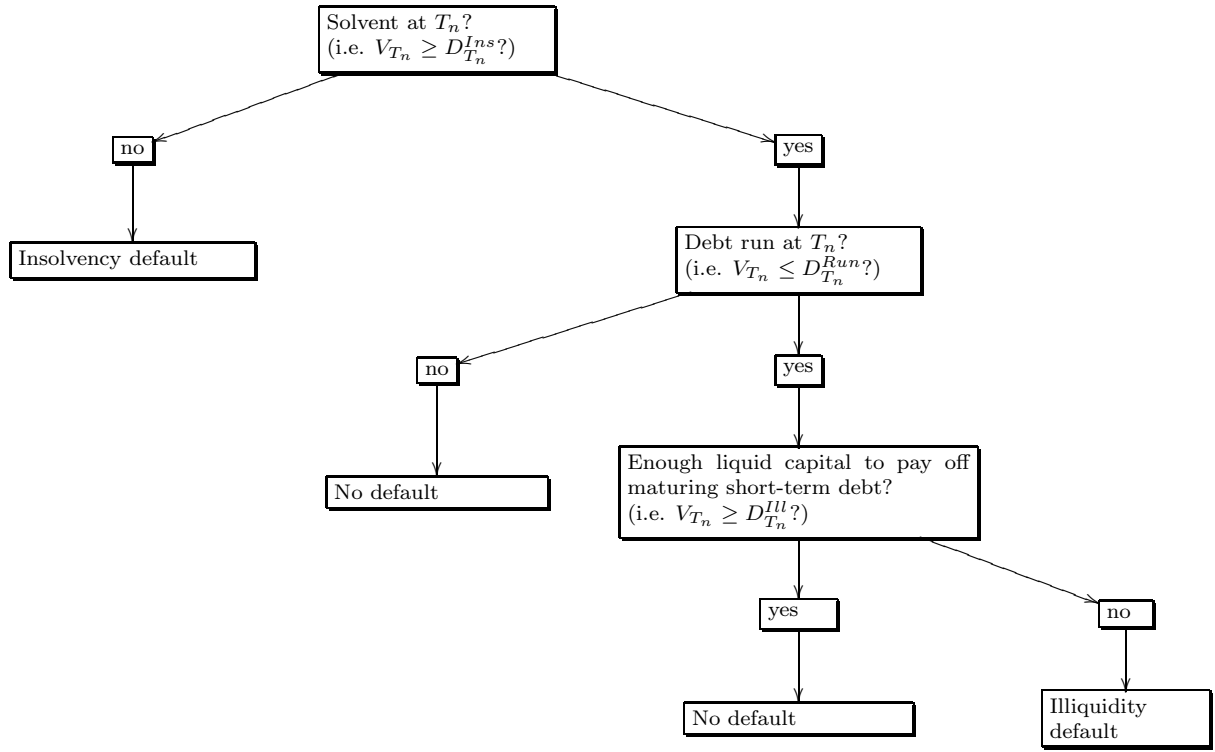
\begin{figure}[H]
\caption{\small\textbf{Scenarios at each rollover date
$T_n$}}\label{fig:flowchart} {\footnotesize
\[
\xymatrix{
                                             & *+[F-,]{\parbox{29mm}{Solvent at $T_n$?\\ (i.e. $V_{T_n}\geq D_{T_n}^{Ins}$?)}}\ar[dl]\ar[dr] \\
*+[F-,]{\hbox{no}}\ar[d] &                                                     & *+[F-,]{\hbox{yes}}\ar[d]\\
*+[F-,]{\parbox{30mm}{Insolvency default}} && *+[F-,]{\parbox{30mm}{Debt run at $T_n$?\\ (i.e. $V_{T_{n}}\leq D_{T_n}^{Run}$?)}}\ar[dl]\ar[d]\\
&  *+[F-,]{\hbox{no}}\ar[d]                       & *+[F-,]{\hbox{yes}}\ar[d]\\
&  *+[F-,]{\parbox{27mm}{No default}} & *+[F-,]{\parbox{45mm}{Enough liquid capital to pay off maturing short-term debt?\\ (i.e. $V_{T_n}\geq D_{T_n}^{Ill}$?)}}\ar[d]\ar[rd]\\
&
 &*+[F-,]{\parbox{8mm}{yes}}\ar[d] &*+[F-,]{\parbox{5mm}{no}}\ar[d] \\
&                                                               &*+[F-,]{\parbox{28mm}{No default}}&*+[F-,]{\parbox{20mm}{Illiquidity default}} \\
}
\]
}
\end{figure}
\end{center}

\begin{figure}[H]
\begin{center}
    \includegraphics[width=0.8\textwidth]{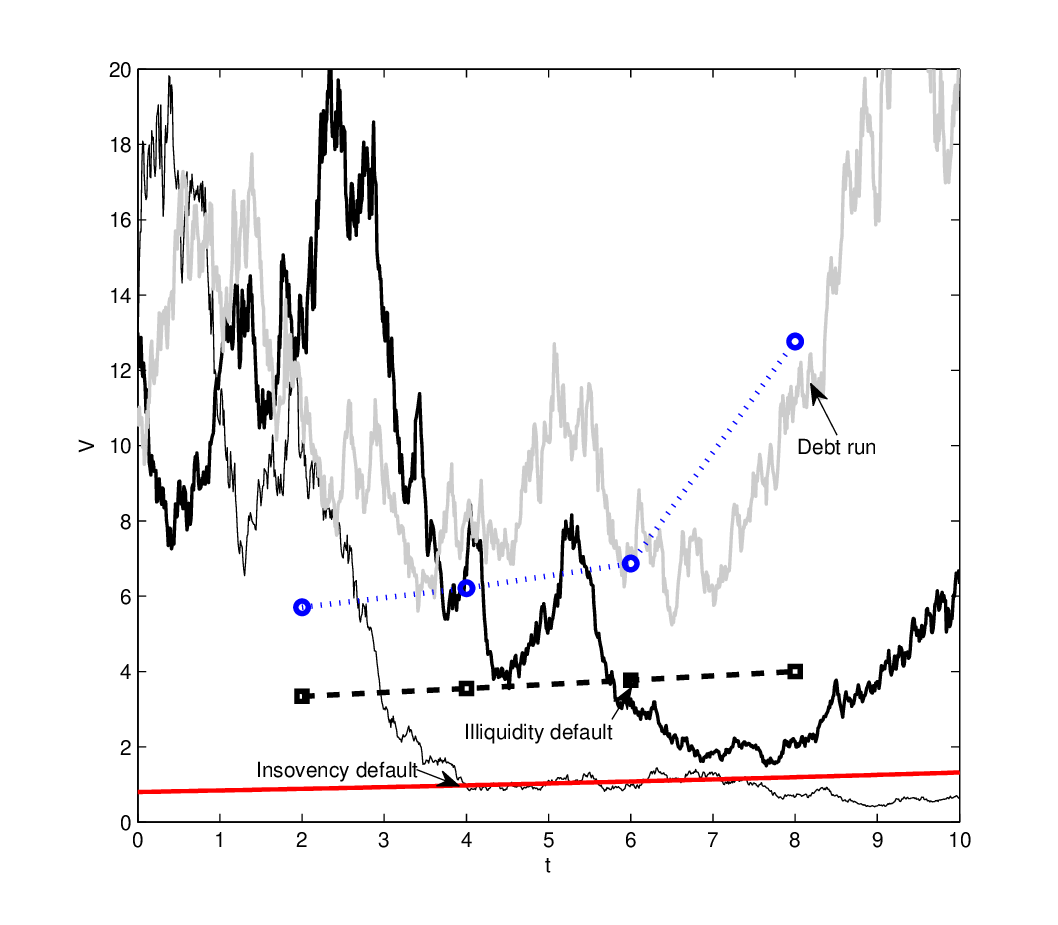}
    \caption{\label{fig: simulation discrete}\small Scenario simulation with discrete tenor structure\vspace{2mm}\newline
The figure shows three simulated asset value paths in the model with
a discrete tenor structure, where volatility $\sigma=0.4$, expected
return rate $r_V=-0.02$, market interest rate $r=0.01$, short-term
rate $r_S=0.03$, and long-term rate $r_L=0.05$. The initial values
of short- and long-term debt are set to $S_0=2$ and $L_0=2$,
respectively. The safety covenant parameter $\beta=0.4$, the
bankruptcy cost parameter $\alpha=0.6$, and the fire-sale rate is
set to $\psi=0.6$. The time horizon is $T=10$ years. In this
discrete tenor structure setting the number of rollover dates is set
to $N=4$. The dotted line describes the debt run barrier, the dashed
line the illiquidity barrier, and the solid line the insolvency
barrier.\vspace{2mm}}
\end{center}
\end{figure}
\begin{figure}[H]
\begin{center}
    \includegraphics[width=0.8\textwidth]{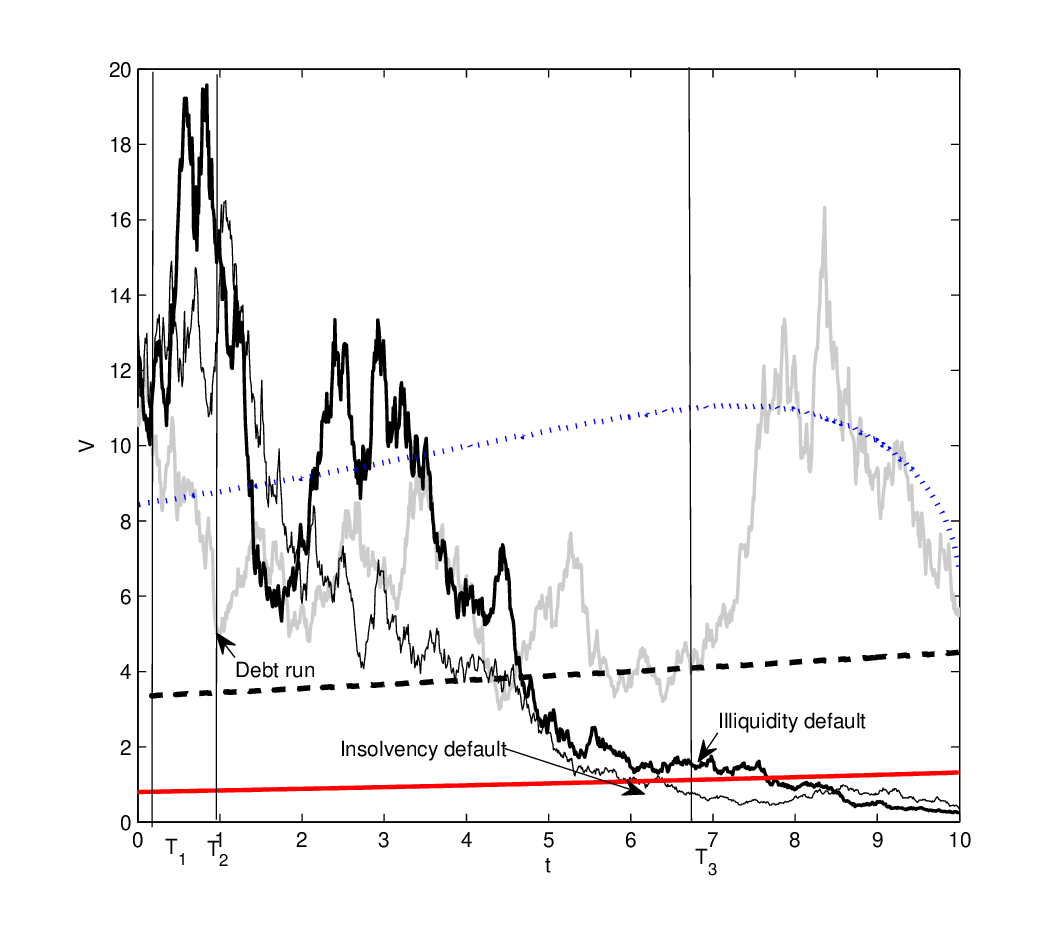}
    \caption{\label{fig: simulation continuous}\small Scenario simulation with a staggered tenor structure\vspace{2mm}\newline
The figure shows three simulated asset value paths in the model with
a staggered tenor structure where volatility $\sigma=0.4$, expected
return rate $r_V=-0.02$, market interest rate $r=0.01$, short-term
rate $r_S=0.03$, and long-term rate $r_L=0.05$. The initial values
of short- and long-term debt are set to $S_0=2$ and $L_0=2$,
respectively. The safety covenant parameter $\beta=0.4$, the
bankruptcy cost parameter $\alpha=0.6$, and the fire-sale rate is
set to $\psi=0.6$. The time horizon is $T=10$ years. In this
staggered tenor structure setting the intensity of the Cox process
is chosen to be $g(X_t)\equiv 0.4$. The dotted line describes the
debt run barrier, the dashed line the illiquidity barrier, and the
solid line the insolvency barrier. \vspace{2mm}}
\end{center}
\end{figure}

\begin{figure}
\begin{center}
    \includegraphics[width=1.0\textwidth]{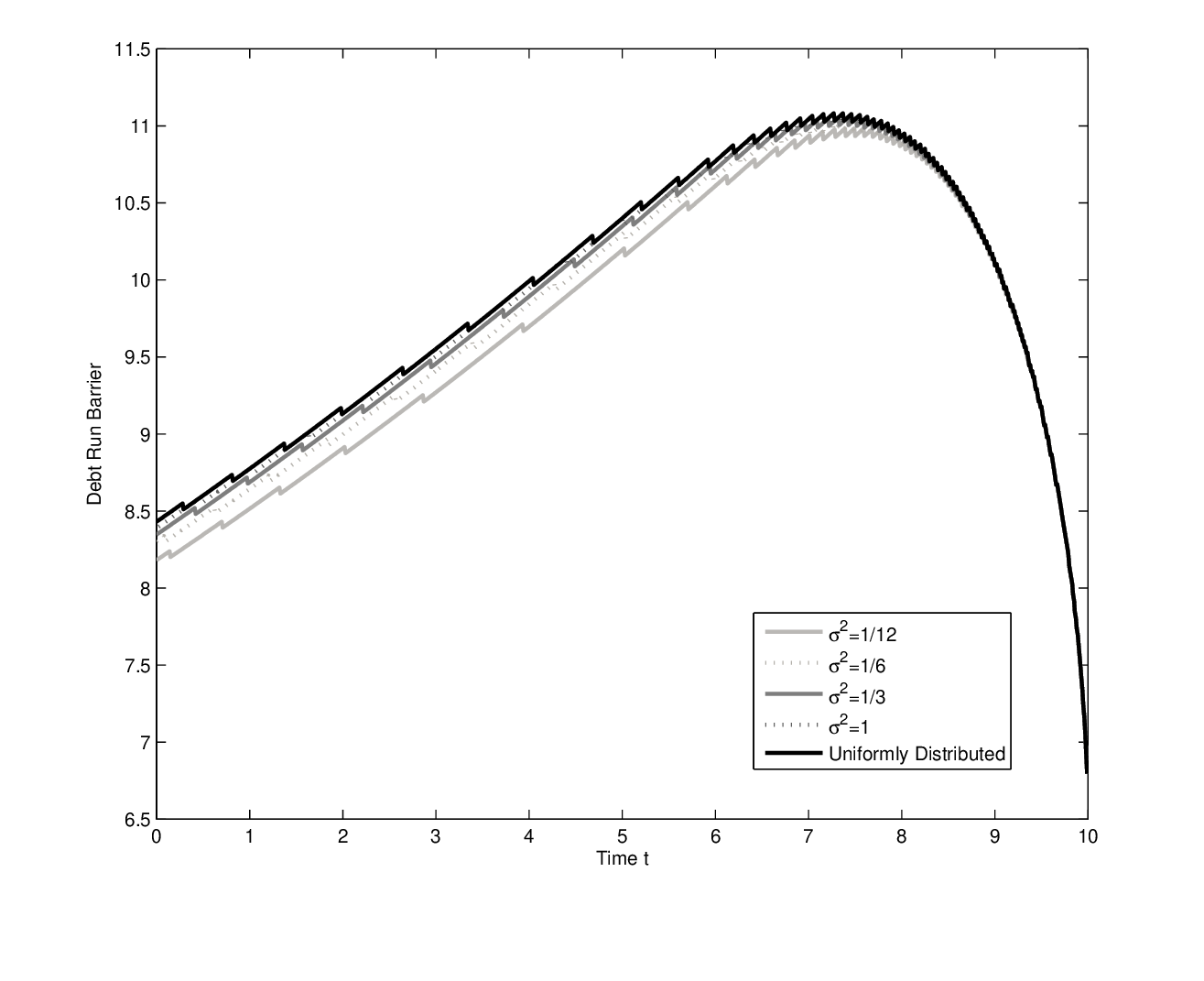}
\caption{\label{fig:distribution}\small Comparison of debt run
barriers with different assumptions on $\xi$\vspace{2mm}\newline The
figure shows debt run barriers with different assumptions on the
random variable $\xi$, the creditor's belief on the proportion of
short-term creditors not rolling over their funding at each rollover
date. The top line corresponds to the uniformly distributed $\xi$,
and from the second top to the bottom lines, they correspond to the
truncated normal distribution with mean $0.5$ and diminishing
variances. Other parameters are the same as in Figure \ref{fig:
simulation continuous}.}
\end{center}
\end{figure}


\begin{figure}[htbp]
 \centering
  \subfigure[For $\sigma = 0.2$ and $g(x)=0.2$]{
   \label{fig: PD11}
    \hspace{-3mm}
    \includegraphics[width=0.5\linewidth,height=0.35\textwidth]{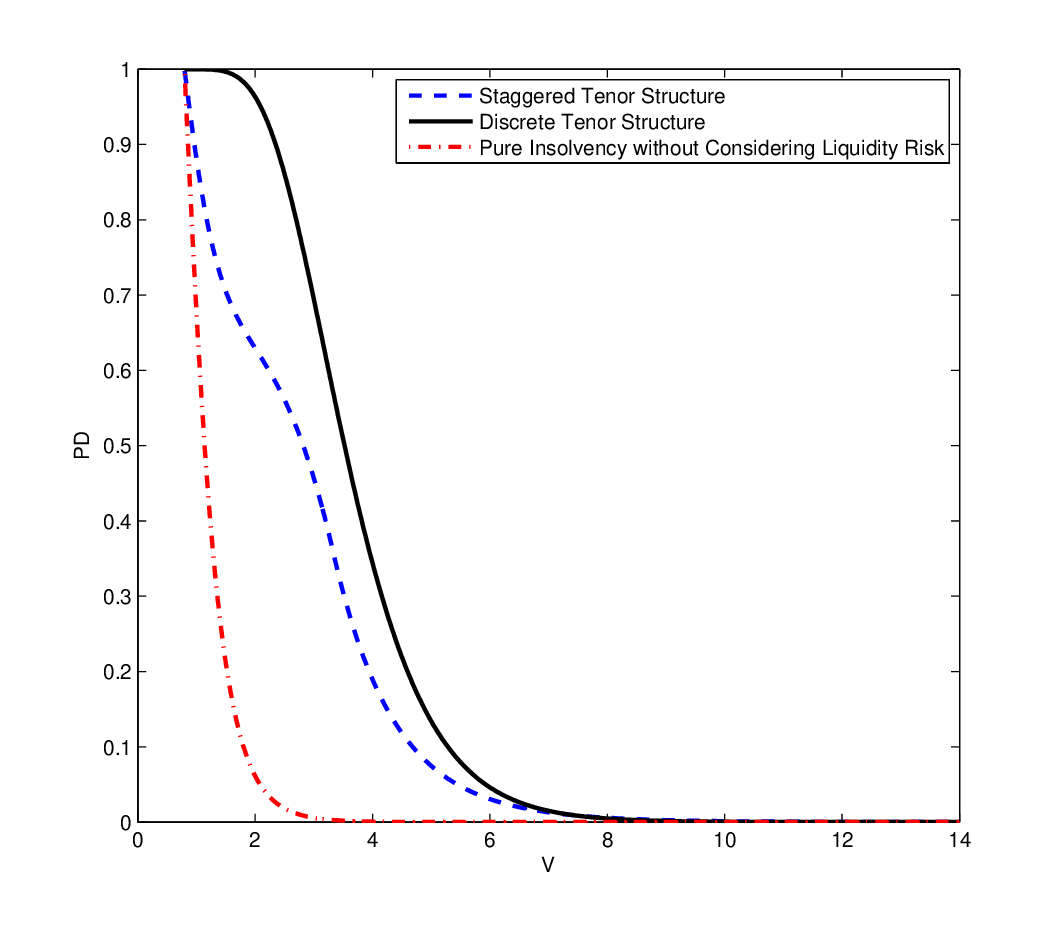}
     \hspace{-3mm}
  }
  \subfigure[For $\sigma = 0.2$ and $g(x)=0.4$]{
   \label{fig: PD5}
    \hspace{-3mm}
    \includegraphics[width=0.5\linewidth,height=0.35\textwidth]{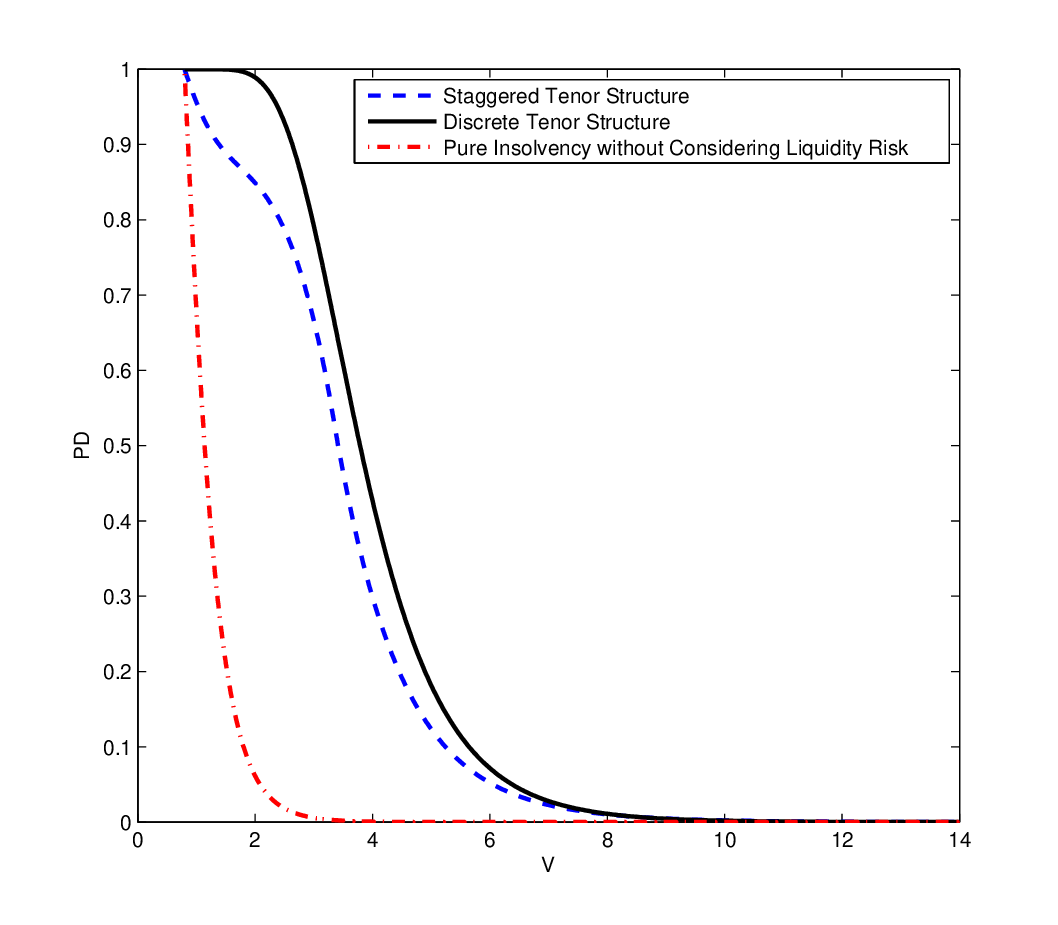}
     \hspace{-3mm}
  }
  \subfigure[For $\sigma = 0.4$ and $g(x)=0.2$]{
    \label{fig: PD7}
    \hspace{-3mm}
    \includegraphics[width=0.5\linewidth,height=0.35\textwidth]{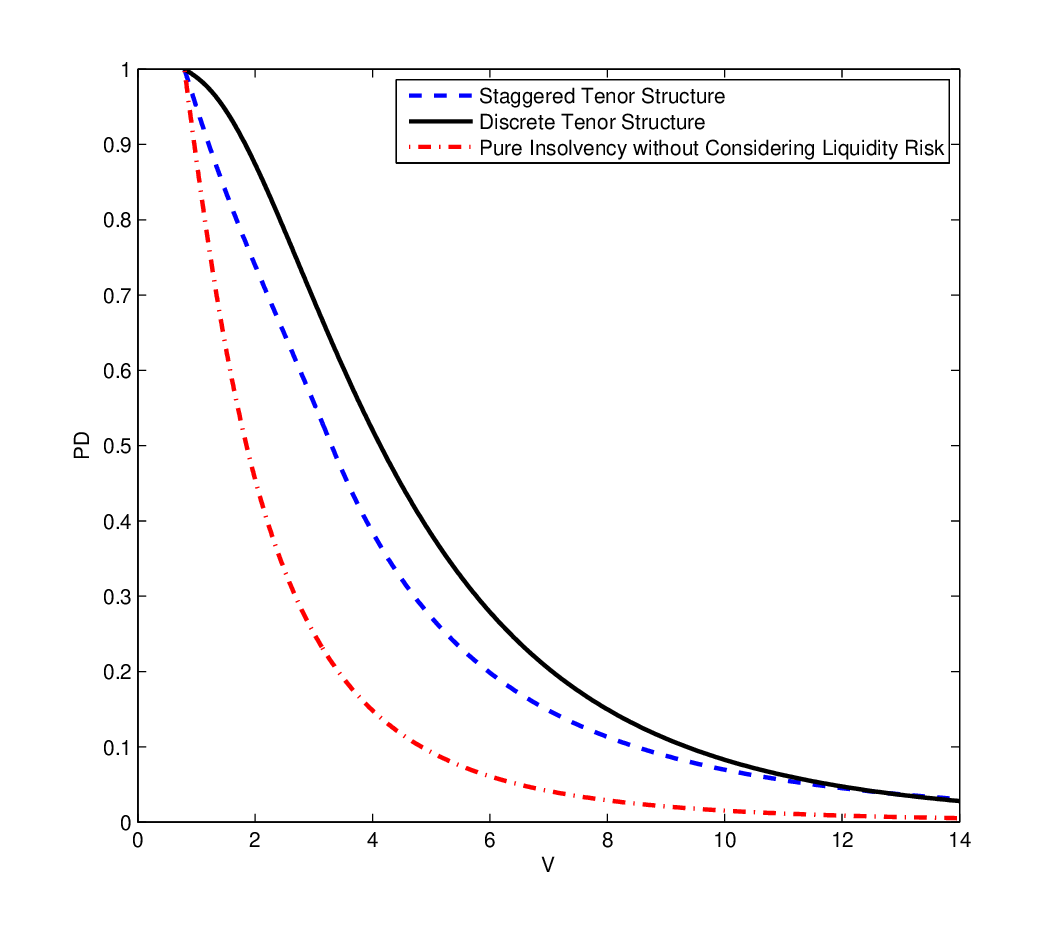}
    \hspace{-3mm}
    }
    \subfigure[For $\sigma = 0.4$ and $g(x)=0.4$]{
   \label{fig: PD9}
    \hspace{-3mm}
    \includegraphics[width=0.5\linewidth,height=0.35\textwidth]{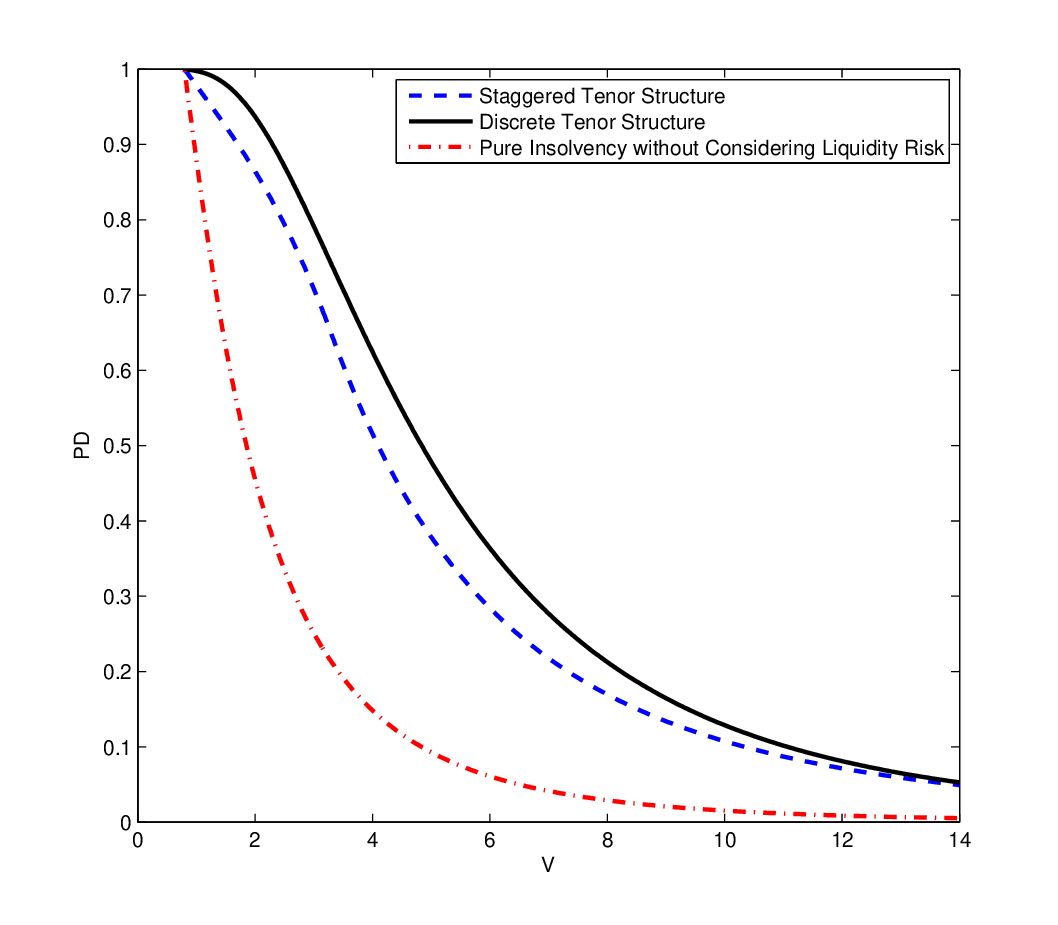}
     \hspace{-3mm}
  }
    \caption{\label{fig:PDs discrete and staggered}\small Default probabilities under the discrete and the staggered tenor structure
models\vspace{2mm}\newline The figure shows the default probabilies
under the discrete tenor structure and the staggered tenor
structure. The dotted line is the default probability without
including rollover risk. Other parameters are the same as in Figure
\ref{fig: simulation discrete} and Figure \ref{fig: simulation
continuous} apart from $T=5$ and $r_V = 0.07$ and with volatility
$\sigma$ and intensity $g(x)$ as specified below the
graphs.\vspace{2mm}}
\end{figure}

\begin{figure}[htbp]
 \centering
  \subfigure[For $\sigma = 0.2$ and $g(x)=0.4$]{
   \label{fig: PD6}
    \hspace{-3mm}
    \includegraphics[height=0.5\textwidth]{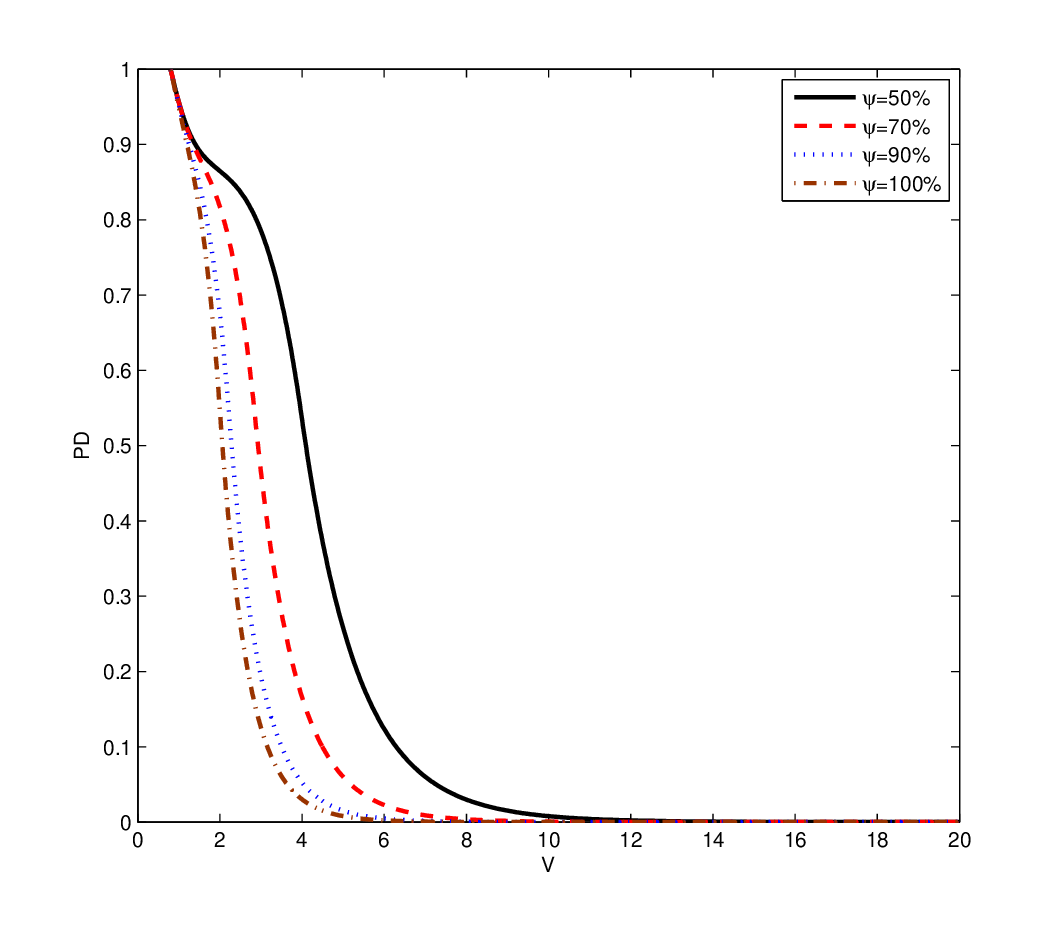}
     \hspace{-3mm}
  }
  \subfigure[For $\sigma = 0.4$ and $g(x)=0.2$]{
    \label{fig: PD8}
    \hspace{-3mm}
    \includegraphics[height=0.5\textwidth]{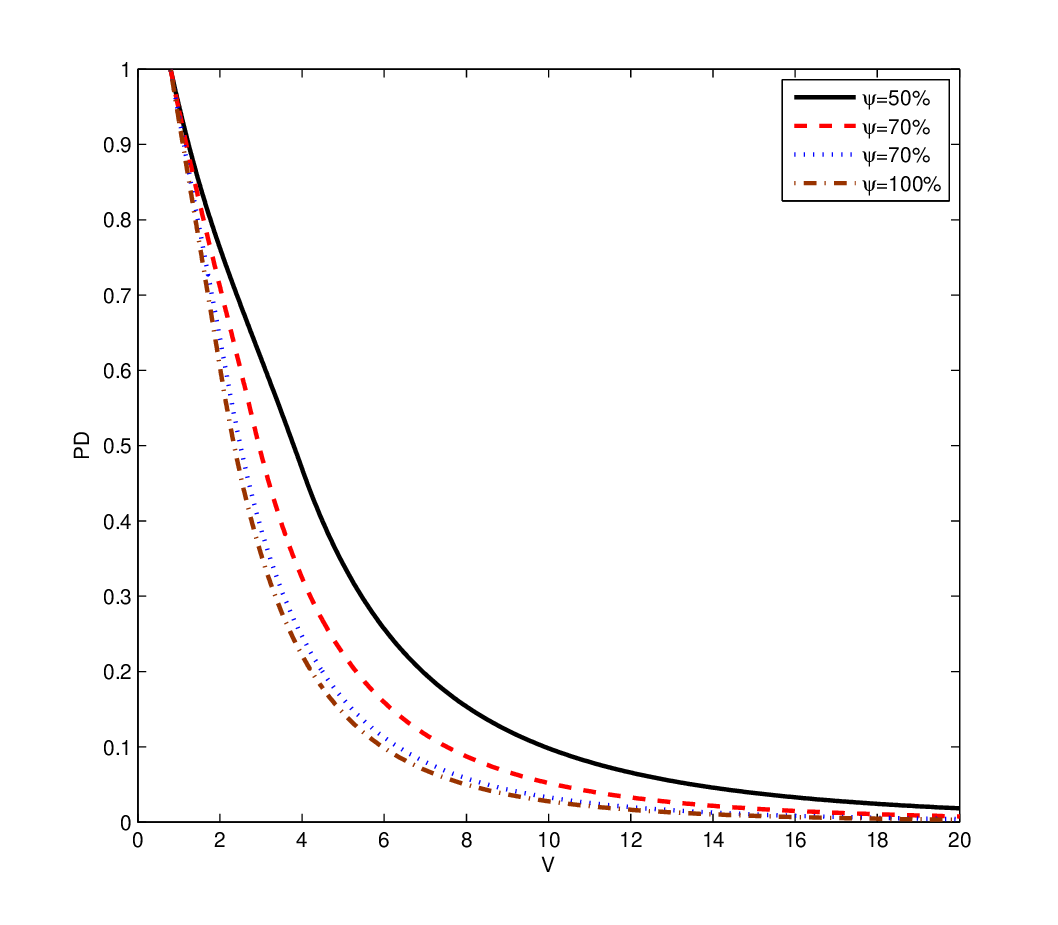}
    \hspace{-3mm}
    }
    \caption{\label{fig:PDs for different psi}\small Default probabilities for different fire-sale rates\vspace{2mm}
    \newline The figure shows the default probabilies for different fire-sale
    rates.
Other parameters are the same as in Figure \ref{fig: simulation
discrete} and Figure \ref{fig: simulation continuous} apart from
$T=5$ and $r_V = 0.07$ and with volatility $\sigma$ and intensity
$g(x)$ as specified below the graphs.\vspace{2mm}}
\end{figure}

\begin{figure}
\begin{center}
    \includegraphics[width=1.0\textwidth]{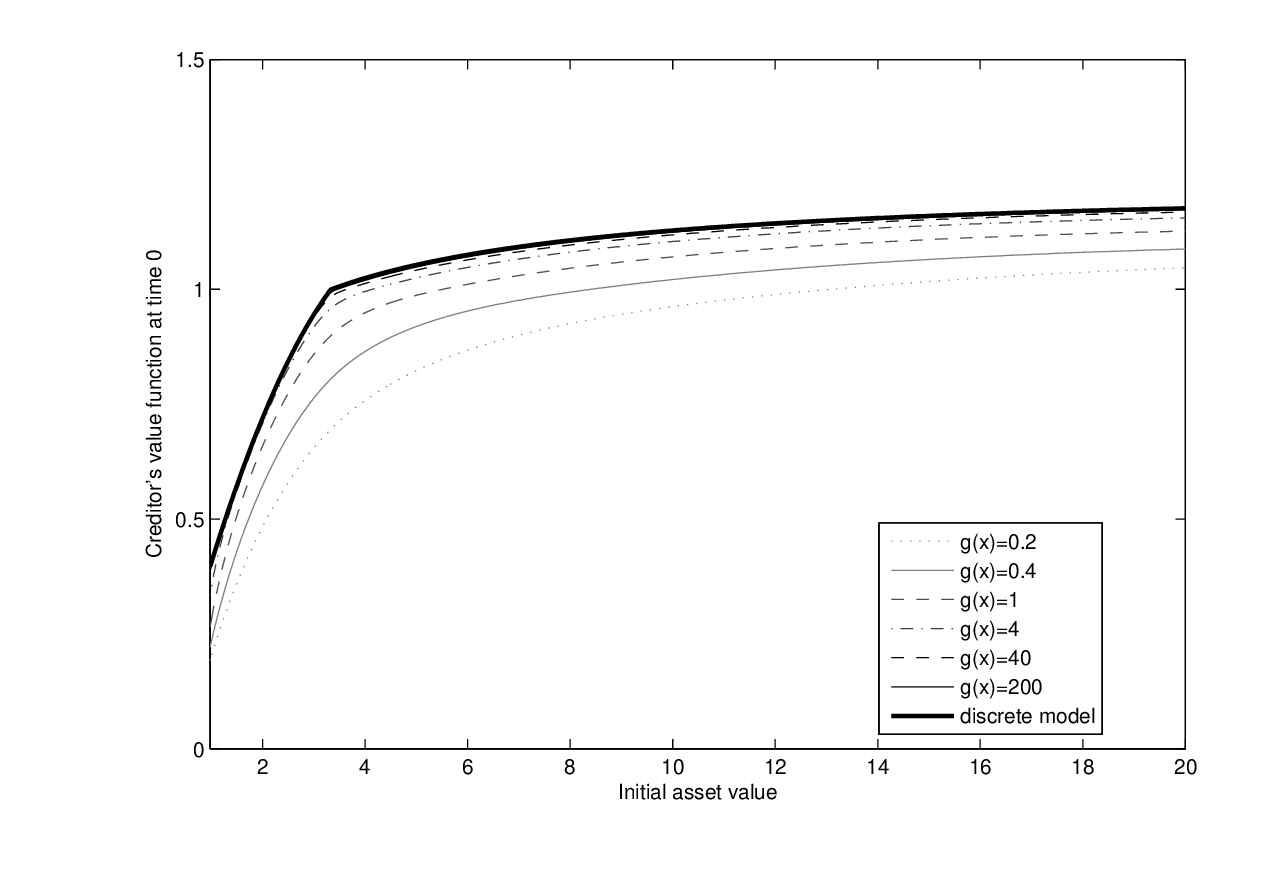}
\caption{\label{fig comparison variable g}\small Influence of the
intensity on creditor's value function\vspace{2mm}\newline The
figure shows the representative creditor's value function at time
$t=0$ for increasing initial asset value $V_0$ in the discrete tenor
structure model with $N=1000$ rollover dates and for the staggered
tenor structure model for different intensities $g$ of the Cox
process. Other parameters are the same as in Figure~\ref{fig:
simulation continuous}.}
\end{center}
\end{figure}

\begin{figure}
\begin{center}
    \includegraphics[width=1.0\textwidth]{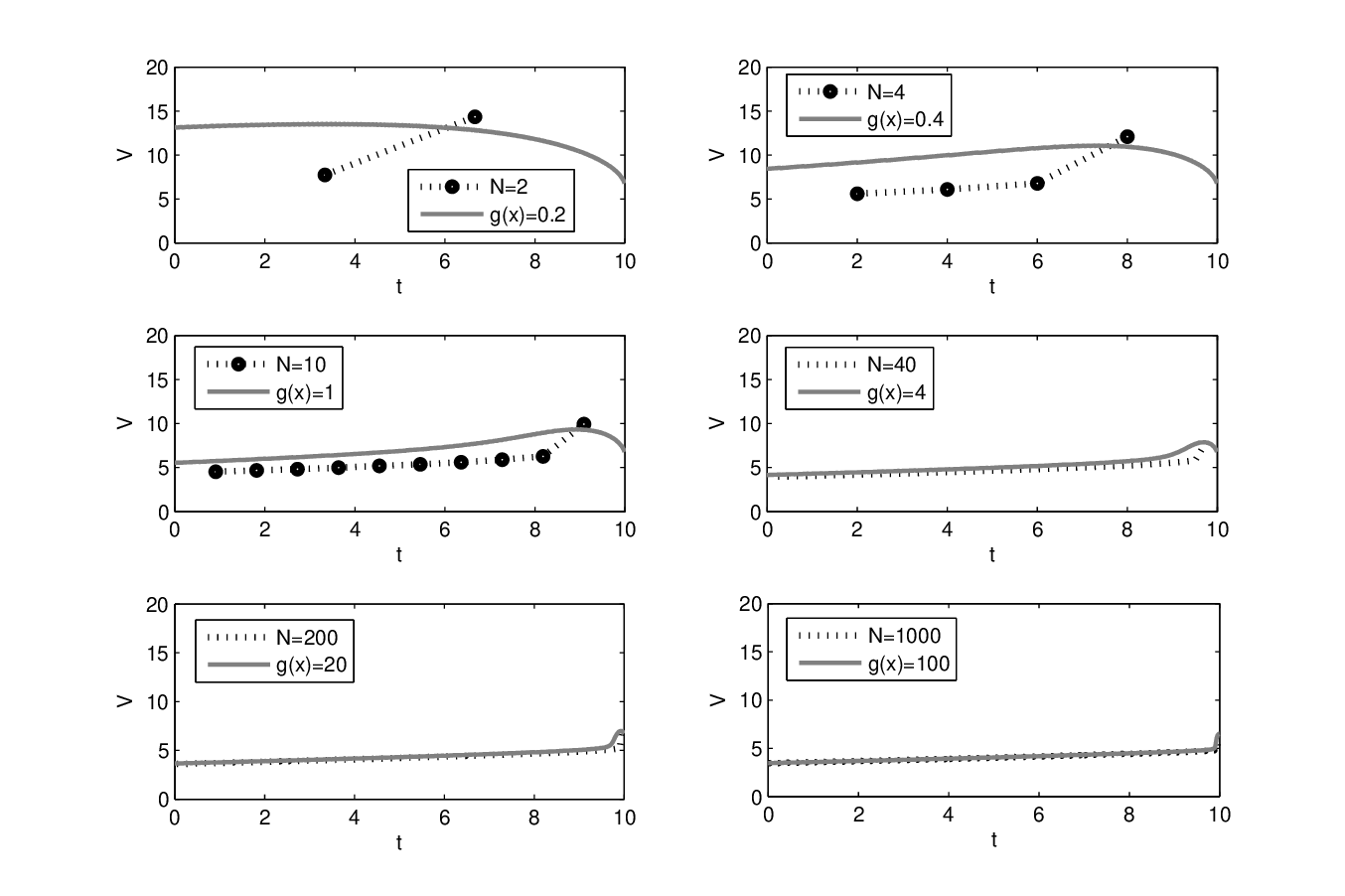}
\caption{\label{fig comparison variable g and N}\small Comparison
between the discrete and the staggered tenor structure
models\vspace{2mm}\newline The figure shows the debt run barrier
depending on time $t$ for different rollover frequencies $N$ in the
discrete tenor structure model and for different intensities $g(x)$
of the Cox process in the staggered tenor structure model. Other
parameters are the same as in Figure \ref{fig: simulation
continuous}.}
\end{center}
\end{figure}

\end{document}